\documentclass[11pt]{article}

\usepackage{amsmath,amssymb,amsthm,mathrsfs}
\usepackage[a4paper,margin=30mm]{geometry}
\usepackage[hidelinks]{hyperref}

\numberwithin{equation}{section}
\newtheorem{theorem}{Theorem}[section]
\newtheorem{lemma}[theorem]{Lemma}
\newtheorem{proposition}[theorem]{Proposition}
\newtheorem{corollary}[theorem]{Corollary}
\newtheorem{conjecture}[theorem]{Conjecture}
\theoremstyle{definition}
\newtheorem{definition}[theorem]{Definition}
\newcommand{\Vir}{\mathrm{Vir}}
\newcommand{\ket}[1]{\lvert#1\rangle}
\newcommand{\vac}{\lvert0\rangle}

\title{A Combinatorial Formula and a Hilbert Scheme Realization of a Rank-One Virasoro Irregular Conformal Block}
\author{Hajime Nagoya\\[0.4em]
  \small Faculty of Global Standard Education\\
  \small Institute of Liberal Arts and Science, Kanazawa University\\
  \small\href{mailto:nagoya@se.kanazawa-u.ac.jp}{\texttt{nagoya@se.kanazawa-u.ac.jp}}}
\date{}
\hypersetup{
  pdftitle={A Combinatorial Formula and a Hilbert Scheme Realization of a Rank-One Virasoro Irregular Conformal Block},
  pdfauthor={Hajime Nagoya}
}

\begin{document}

\maketitle

\begin{abstract}
For general central charge, we give a combinatorial formula and a geometric realization of the formal expansion of the rank-one Virasoro irregular conformal block of type $(0,0,1)$ at its irregular singular point. We realize the normalized block as a coefficientwise matrix element on the localized equivariant Hilbert--Fock space associated with Hilbert schemes of points on $\mathbb C^2$. At $c=1$, we prove the author's conjecture and express the parameter-independent coefficients in terms of skew-tableau numbers and Littlewood--Richardson coefficients, establishing their nonnegativity and integrality. The proof combines sectorial asymptotics of screening integrals with Laguerre--Jack evaluations. We express the screening kernel using the flow of a degree-raising differential operator and compute the Jack coefficients of its exponential from the one-box Pieri rule and the shifted Jack coefficient formula of Okounkov and Olshanski.
\end{abstract}

\section{Introduction}
\label{sec:introduction}

The AGT correspondence \cite{AldayGaiottoTachikawa2010} relates regular Virasoro conformal blocks to Nekrasov instanton partition functions, which are generating functions of equivariant integrals over instanton moduli spaces. Motivated by this geometric description, we seek geometric realizations of the expansions of irregular conformal blocks at irregular singular points. For the block of type $(0,0,1)$, we realize the normalized expansion as a coefficientwise matrix element on the localized equivariant Hilbert--Fock space associated with Hilbert schemes of points on $\mathbb C^2$. The operators and states are constructed from incidence correspondences, Hopf operations, and tautological Euler classes. The fixed-point expansion of this matrix element gives an explicit combinatorial formula.

This three-point block has two regular singular points and one irregular singular point of rank one. In terms of the irregular vertex operators $\Phi^\Delta_{\Lambda',\Lambda}(z)$ introduced in \cite{Nagoya2015}, it has the form
\[
 \mathcal B_{001}(z)
 =\langle\Delta_\infty|
 \Phi^\Delta_{\Lambda',\Lambda}(z)|\Lambda\rangle
 =z^\kappa e^{\beta/z}\left(1+\sum_{k\ge1}b_kz^k\right),
\]
expanded at the irregular singular point $z=0$. Here $|\Lambda\rangle$ is an irregular vector, $\Phi^\Delta_{\Lambda',\Lambda}(z)$ is a formal vertex operator between rank-one irregular Verma modules, and $\langle\Delta_\infty|$ is a highest-weight functional. By the existence and uniqueness theorem in \cite{Nagoya2015}, the coefficients $b_k$ exist and are uniquely determined by the Virasoro relations. This block also appears in the Zak-series representations of quantum Painlev\'e~V tau functions proposed by Bonelli, Shchechkin, and Tanzini \cite[Sections~4.1 and~6.2.1]{BonelliShchechkinTanzini2026}. Irregular conformal blocks have also been used to study partition functions of Argyres--Douglas theories \cite{GaiottoTeschner2012,NishinakaUetoko2019}.

For this block at central charge $c=1$, the author conjectured a combinatorial formula for the series, with dependence on the conformal parameters carried by explicit products over Young diagrams \cite{Nagoya2016} (see Conjecture~\ref{conj:c-one-combinatorial-formula} in Section~\ref{subsec:combinatorial-conjecture}). The remaining, parameter-independent coefficients were conjectured to be nonnegative integers, indexed by two partitions and two subdiagrams of equal size. We prove this conjecture, extend the formula to general central charge, and give a geometric interpretation on Hilbert schemes of points using Nakajima's construction \cite{Nakajima1999,NakajimaMoreLectures}.

The starting point of the proof is a screening integral, a multiple integral arising from a free-field realization of the Virasoro algebra. Asymptotic expansions of screening integrals over contours with a moving endpoint were studied by the author and Nakagawa in \cite{NagoyaNakagawa2026}. Here we treat two sets of screening variables, with arbitrary multiplicities $m,n\in\mathbb Z_{\ge0}$, initially working on the corresponding specialization of the conformal parameters. Under suitable convergence conditions, we establish the asymptotic expansion of the normalized screening integral and identify it with $\Phi^\Delta_{\Lambda',\Lambda}(z)|\Lambda\rangle$ (Theorem~\ref{thm:screening-vector-identification}). Taking the normalized pairing then gives a scalar integral whose formal asymptotic expansion is $\mathcal B_{001}(z)$ (Corollary~\ref{cor:screening-block-identification}).

Jack expansions of screening kernels, combined with Selberg-type evaluations, yield combinatorial expressions for the regular conformal blocks and their confluent limits \cite{ItoyamaOota2010,ItoyamaOotaYonezawa2010}. The representation-theoretic derivation of the AGT formula also involves Jack polynomials and Selberg integrals \cite{AlbaFateevLitvinovTarnopolskiy2011}. A Schur expansion of the present screening kernel at $c=1$ was considered in \cite{NagoyaNumata2017}.

We express the screening kernel as a differential-operator flow. The $z$-dependent factor in the scalar screening integral is
\[
 G(z;x,y)=\prod_{i=1}^{m}(1-zx_i)^{\gamma_+\gamma_\infty}\prod_{j=1}^{n}(1-zy_j)^{\gamma_+\gamma_z}\prod_{i=1}^{m}\prod_{j=1}^{n}\bigl(1-z(x_i+y_j)\bigr)^{\gamma_+^2}.
\]
Here $\gamma_+$ is the screening charge, parametrizing the central charge and determining the Jack parameter $\alpha=2/\gamma_+^2$, while $\gamma_z$ and $\gamma_\infty$ are the external charges.
The interaction factors in $G(z;x,y)$ contain $1-z(x_i+y_j)$, and the elementary identity
\[
 1-z(x_i+y_j)=(1-zx_i)(1-zy_j)(1-X_iY_j),
 \qquad X_i=\frac{zx_i}{1-zx_i},\quad
 Y_j=\frac{zy_j}{1-zy_j},
\]
reduces the interaction to a multiplicative kernel in the transformed variables. To implement this substitution, work in the algebra of symmetric functions generated by the power sums $p_r=\sum_i x_i^r$, with $\deg p_r=r$, and put
\begin{equation*}
 \mathsf D_a=\sum_{r\ge1}r p_{r+1}\frac{\partial}{\partial p_r}+ap_1,
 \qquad
 \mathsf T_a(z)=e^{z\mathsf D_a}z^{\mathsf E},
\end{equation*}
where $\mathsf E$ acts by $d$ on homogeneous degree $d$. Since $\mathsf D_a$ raises degree by one, its exponential is defined degree by degree on the completion of this algebra. Its derivation part generates the change $x_i\mapsto x_i/(1-zx_i)$, while the multiplication term supplies the accompanying weight. For finitely many variables, including the degree scaling $z^{\mathsf E}$ gives
\begin{equation}
 \mathsf T_a(z)f(x)
 =\prod_i(1-zx_i)^{-a}
 f\!\left(\frac{zx_1}{1-zx_1},\frac{zx_2}{1-zx_2},\ldots\right).
 \label{eq:intro-Fock-action}
\end{equation}

The second role of the exponential is to give explicit coefficients in the Jack basis. For generic $\alpha$, the monic Jack polynomials $P_\lambda^{(\alpha)}$, indexed by partitions $\lambda$, form a basis of the algebra of symmetric functions. Put $\theta=\alpha^{-1}$, and let $[P_\lambda^{(\alpha)}]$ denote extraction of the coefficient of $P_\lambda^{(\alpha)}$. Theorem~\ref{thm:exponential-Jack-coefficients} gives
\begin{equation}
 [P_\lambda^{(\alpha)}]e^{\mathsf D_a}P_\nu^{(\alpha)}
 =A_{\lambda,\nu}^{(\alpha)}(a)
 =\frac{P_\nu^*(\lambda;\theta)}{H_\theta(\lambda)}
 (a)_{\lambda/\nu}^{(\theta)},\qquad \nu\subseteq\lambda.
 \label{eq:intro-A}
\end{equation}
Here $P_\nu^*$ is the shifted Jack polynomial of Okounkov and Olshanski
\cite{OkounkovOlshanski1997}, normalized by
$P_\nu^*(\nu;\theta)=H_\theta(\nu)$. The factor $H_\theta(\lambda)$ is a Jack hook product, and $(a)_{\lambda/\nu}^{(\theta)}$ is the product of $a+j-1-\theta(i-1)$ over the boxes $(i,j)$ in $\lambda\setminus\nu$; see \eqref{eq:hook-products} and \eqref{eq:skew-Pochhammer}. We derive \eqref{eq:intro-A} from the one-box Pieri rule and the shifted Jack coefficient formula of Okounkov and Olshanski \cite{OkounkovOlshanski1997}. Together with \eqref{eq:intro-Fock-action}, formula \eqref{eq:intro-A} recovers the fractional-linear Jack identity of Lascoux and Warnaar \cite{LascouxWarnaar2011}.

The mixed kernel has the Jack expansion
\[
 K_w(x,y)
 =\prod_{i,j}(1-x_i y_j)^{-w}
 =\sum_{|\nu|=|\eta|}K_{\nu,\eta}^{(\alpha)}(w)
 P_\nu^{(\alpha)}(x)P_\eta^{(\alpha)}(y).
\]
Formula~\eqref{eq:K-definition} expresses the mixed-kernel coefficients as finite weighted sums of products of power-sum-to-Jack transition coefficients. The equal-size condition holds because each bihomogeneous component has the same degree in $x$ and $y$. At the screening specialization $w=-\gamma_+^2$, combining this expansion with \eqref{eq:intro-Fock-action} and \eqref{eq:intro-A} gives the coefficient $z^{|\lambda|+|\mu|}\mathcal C_{\lambda,\mu}$ of $P_\lambda^{(\alpha)}(x)P_\mu^{(\alpha)}(y)$ in $G(z;x,y)$, where
\begin{equation*}
 \mathcal C_{\lambda,\mu}
 =\sum_{\substack{\nu\subseteq\lambda,\ \eta\subseteq\mu\\|\nu|=|\eta|}}
 A_{\lambda,\nu}^{(\alpha)}(a_{\mathrm L})
 K_{\nu,\eta}^{(\alpha)}(-\gamma_+^2)
 A_{\mu,\eta}^{(\alpha)}(a_{\mathrm R}).
\end{equation*}
This explains the two subdiagrams in the conjectured expansion: the mixed kernel supplies $\nu$ and $\eta$, and the two exponentials enlarge them to $\lambda$ and $\mu$.

The two Laguerre--Jack evaluations give our main formula, stated in Theorem~\ref{thm:screening-Jack-evaluation}. For generic conformal parameters,
\begin{equation}
 \begin{aligned}
 \mathcal B_{001}(z)
 ={}&z^\kappa e^{\beta/z}
 \sum_{\lambda,\mu}z^{|\lambda|+|\mu|}\mathcal C_{\lambda,\mu}\\
 &\times (-\gamma_+c_1)^{-|\lambda|}
 \frac{(x_-)_\lambda^{(\theta)}(x_+)_\lambda^{(\theta)}}{H'_\theta(\lambda)}
 \ (\gamma_+c_1)^{-|\mu|}
 \frac{(y_-)_\mu^{(\theta)}(y_+)_\mu^{(\theta)}}{H'_\theta(\mu)},
 \end{aligned}
 \label{eq:intro-main-formula}
\end{equation}
Here $|\lambda|$ is the number of boxes of $\lambda$, $(a)_\lambda^{(\theta)}$ denotes the content product for $\nu=\varnothing$, and $H'_\theta$ is the second Jack hook product in \eqref{eq:hook-products}. The parameters $x_\pm,y_\pm,a_{\mathrm L},a_{\mathrm R}$ are explicit functions of the conformal data, listed in \eqref{eq:generic-prefactor-parameters}--\eqref{eq:generic-core-parameters}. The coefficient $b_k$ is thus a finite sum over pairs with $|\lambda|+|\mu|=k$.

The identity \eqref{eq:intro-main-formula}, first established for integer screening multiplicities, follows for general conformal parameters by polynomiality of the coefficients. At $c=1$, one has $\alpha=1$ and the Jack polynomials become Schur functions; this specialization yields the combinatorial formula in Conjecture~\ref{conj:c-one-combinatorial-formula}. For
\[
 \nu\subseteq\lambda,\qquad \eta\subseteq\mu,\qquad |\nu|=|\eta|,
\]
the parameter-independent coefficients $c_{\lambda,\mu}^{\nu,\eta}$ are given by
\[
 c_{\lambda,\mu}^{\nu,\eta}
 =P_\nu^*(\lambda;1)\,P_\eta^*(\mu;1)
 \sum_{|a|+|b|=|\nu|}
 \mathrm{LR}_{a,b}^{\nu}
 \mathrm{LR}_{a',b'}^{\eta},
\]
where $a'$ and $b'$ denote the conjugate partitions of $a$ and $b$, respectively.
This expression in terms of shifted Schur values and Littlewood--Richardson coefficients establishes their nonnegativity and integrality.

The operator description has a geometric counterpart on Hilbert schemes. The scheme $\operatorname{Hilb}^N(\mathbb C^2)$ parametrizes length-$N$ subschemes of the plane, and its torus-fixed points are indexed by partitions of $N$. After equivariant localization, Nakajima's correspondence identifies suitably normalized fixed-point classes with the Jack basis.

On the associated Hilbert--Fock space, incidence correspondences realize $\mathsf D_a$, while the Hopf algebra structure yields the mixed-kernel operator $\mathsf K(w)$, acting by the substitution $p_r\mapsto\alpha w p_r$. The normalized Laguerre evaluation functionals are represented by completed states $\mathscr L_x,\mathscr L_y$ constructed from tautological Euler classes. Theorem~\ref{thm:rank-one-Hilbert-Fock} gives
\begin{equation*}
 \mathcal B_{001}(z)
 =z^\kappa e^{\beta/z}
 \left\langle\mathscr L_x\middle|
 \mathsf T_{a_{\mathrm L}}(z)\mathsf K(-\gamma_+^2)
 \mathsf T_{a_{\mathrm R}}(z)^\dagger
 \middle|\mathscr L_y\right\rangle_T.
\end{equation*}
Here $\dagger$ is the bilinear adjoint for the localized equivariant pairing, and the states $\mathscr L_x,\mathscr L_y$ are specified in \eqref{eq:generic-Laguerre-caps}. The matrix element is interpreted coefficientwise: at each power of $z$, only finitely many components of the two states contribute.

The paper is organized as follows. Section~\ref{sec:virasoro-irregular-conformal-block} recalls irregular vertex operators and the rank-one irregular block, and states the conjectured combinatorial formula at $c=1$. Section~\ref{sec:screening-integral} constructs the screening integral, proves the required sectorial estimates, and identifies the formal expansion of its normalized scalar pairing with the irregular block. Section~\ref{sec:jack-expansion} derives the explicit Jack coefficients of $e^{\mathsf D_a}$ and combines the kernel flow with Laguerre--Jack evaluations to obtain the combinatorial formula for general central charge and prove the $c=1$ conjecture. Section~\ref{sec:hilbert} gives geometric realizations of the operators and Laguerre evaluation functionals on Hilbert schemes, leading to a matrix-element expression for the normalized block.

\section{Virasoro irregular conformal block}
\label{sec:virasoro-irregular-conformal-block}

\subsection{Irregular vertex operators}
\label{subsec:irregular-vertex-operators}

Fix a central charge $c\in\mathbb C$. The Virasoro algebra is the complex
Lie algebra
\[
 \Vir=\bigoplus_{n\in\mathbb Z}\mathbb C L_n\oplus\mathbb C C
\]
with commutation relations
\begin{equation}
 [L_m,L_n]=(m-n)L_{m+n}
   +\frac{m^3-m}{12}\delta_{m+n,0}C,
 \qquad [C,L_n]=0.
 \label{eq:virasoro-relations}
\end{equation}
Throughout this paper, $C$ acts as multiplication by $c$.

For $r\in\mathbb Z_{\ge0}$, set
\[
 \Vir_{\ge r}=\mathbb C C\oplus\bigoplus_{n\ge r}\mathbb C L_n.
\]
Given $\Lambda=(\Lambda_r,\ldots,\Lambda_{2r})\in\mathbb C^{r+1}$,
let $\mathbb C|\Lambda\rangle$ be the one-dimensional
$\Vir_{\ge r}$-module on which
\begin{equation*}
 C|\Lambda\rangle=c|\Lambda\rangle,\qquad
 L_n|\Lambda\rangle=
 \begin{cases}
  \Lambda_n|\Lambda\rangle,&r\le n\le2r,\\
  0,&n>2r.
 \end{cases}
\end{equation*}
Here and below, $U(\mathfrak g)$ denotes the universal enveloping algebra
of a Lie algebra $\mathfrak g$.

\begin{definition}
The irregular Verma module of rank $r$ with parameters $\Lambda$ is
\begin{equation*}
 M_\Lambda^{[r]}
 =U(\Vir)\otimes_{U(\Vir_{\ge r})}\mathbb C|\Lambda\rangle.
\end{equation*}
\end{definition}

We call $|\Lambda\rangle$ the rank-$r$ irregular vector.
It generates $M_\Lambda^{[r]}$ as a Virasoro module.

By the Poincar\'e--Birkhoff--Witt theorem, a basis of $M_\Lambda^{[r]}$
is given by
\[
 L_{i_1}\cdots L_{i_k}|\Lambda\rangle,
 \qquad i_1\le\cdots\le i_k<r,\quad k\ge0.
\]
For $r=0$, $M_\Lambda^{[0]}$ is the ordinary Verma module of highest weight
$\Lambda_0$, denoted by $M_{\Lambda_0}$.
For $r\ge1$, we assume $\Lambda_{2r}\ne0$ whenever invoking the
existence and uniqueness theorem below.

\begin{definition}
Let $r\ge1$ and $\Delta\in\mathbb C$. A rank-zero irregular vertex
operator of conformal weight $\Delta$ from $M_\Lambda^{[r]}$ to
$M_{\Lambda'}^{[r]}$ is a formal linear operator
$\Phi_{\Lambda',\Lambda}^{\Delta}(z)$ satisfying
\begin{equation}
 L_n\Phi_{\Lambda',\Lambda}^{\Delta}(z)
 -\Phi_{\Lambda',\Lambda}^{\Delta}(z)L_n
 =z^n\left(z\frac{d}{dz}+(n+1)\Delta\right)
       \Phi_{\Lambda',\Lambda}^{\Delta}(z),
 \qquad n\in\mathbb Z,
 \label{eq:irregular-vertex-commutator}
\end{equation}
and the normalization
\begin{equation}
 \Phi_{\Lambda',\Lambda}^{\Delta}(z)|\Lambda\rangle
 =z^\kappa\exp\left(\sum_{j=1}^{r}\beta_jz^{-j}\right)
     \sum_{m=0}^{\infty}v_mz^m,
 \qquad v_0=|\Lambda'\rangle,
 \quad v_m\in M_{\Lambda'}^{[r]},
 \label{eq:irregular-vertex-expansion}
\end{equation}
for some $\kappa,\beta_1,\ldots,\beta_r\in\mathbb C$.
\end{definition}

The power $z^\kappa$ and the exponential in
\eqref{eq:irregular-vertex-expansion} are treated as formal factors.
On an arbitrary vector of the source module, the operator takes values in
\[
 z^\kappa\exp\left(\sum_{j=1}^{r}\beta_jz^{-j}\right)
 M_{\Lambda'}^{[r]}((z)),
\]
where $M((z))$ denotes the space of formal Laurent series with coefficients
in $M$ and only finitely many negative powers.
The term rank zero refers to the vertex operator; both modules have rank $r$.

For a partition $\lambda=(\lambda_1,\ldots,\lambda_k)$, set
\[
 L_\lambda^{[r]}|\Lambda'\rangle
 =L_{r-\lambda_1}\cdots L_{r-\lambda_k}|\Lambda'\rangle,
 \qquad |\lambda|=\sum_{j=1}^{k}\lambda_j.
\]
These vectors form a PBW basis of $M_{\Lambda'}^{[r]}$. Assign degree
$|\lambda|$ to $L_\lambda^{[r]}|\Lambda'\rangle$, with the empty
partition giving $|\Lambda'\rangle$ of degree zero. The degree of a
nonzero vector is the largest degree of a PBW basis vector occurring
with nonzero coefficient; we put $\deg(0)=-\infty$. For $m\ge0$, define
\begin{equation*}
 U_m=\{v\in M_{\Lambda'}^{[r]}\mid\deg(v)\le m\}
 =\operatorname{span}_{\mathbb C}
   \{L_\lambda^{[r]}|\Lambda'\rangle\mid|\lambda|\le m\}.
\end{equation*}
We also introduce the coefficient ring
\begin{equation*}
 \mathcal R
 =\mathbb C[c,\Delta,\beta_r,\Lambda_r,\ldots,\Lambda_{2r},
            \Lambda_{2r}^{-1}],
\end{equation*}
where the parameters are regarded as indeterminates before specialization.

\begin{theorem}[Nagoya, {\cite[Theorem~2.12]{Nagoya2015}}]
\label{thm:irregular-vertex-existence}
Let $r\ge1$ and $\Lambda_{2r}\ne0$. For fixed $c$, $\Lambda$,
$\Delta$, and $\beta_r$, there exists a unique normalized rank-zero
irregular vertex operator
$\Phi_{\Lambda',\Lambda}^{\Delta}(z)$.
Its target parameters satisfy
\begin{equation*}
 \Lambda'_n=\Lambda_n-\delta_{n,r}\,r\beta_r,
 \qquad r\le n\le2r.
\end{equation*}
These data also determine $\kappa$, $\beta_1,\ldots,\beta_{r-1}$,
and every coefficient $v_m$ uniquely.
Moreover, $\kappa,\beta_1,\ldots,\beta_{r-1}$ belong to $\mathcal R$,
and for every $m\ge0$ we have $v_m\in U_m$. In the PBW expansion
\begin{equation*}
 v_m=\sum_{|\lambda|\le m}a_{m,\lambda}
                 L_\lambda^{[r]}|\Lambda'\rangle,
\end{equation*}
each coefficient $a_{m,\lambda}$ belongs to $\mathcal R$.
\end{theorem}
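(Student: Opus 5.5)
We construct $\Phi^\Delta_{\Lambda',\Lambda}(z)|\Lambda\rangle$ directly from the commutation relations \eqref{eq:irregular-vertex-commutator} and then use the $\Vir$-module structure to define $\Phi$ on all of $M_\Lambda^{[r]}$. The first step is to apply \eqref{eq:irregular-vertex-commutator} with $n\ge r$ to $|\Lambda\rangle$. Write $\Psi(z)=\Phi(z)|\Lambda\rangle=z^\kappa e^{\sum\beta_jz^{-j}}\sum_m v_mz^m$. Then
\[
 L_n\Psi(z)=\Lambda_n\Psi(z)+z^n\bigl(z\tfrac{d}{dz}+(n+1)\Delta\bigr)\Psi(z)\quad (r\le n\le 2r),
\]
and $L_n\Psi(z)=z^n(z\frac{d}{dz}+(n+1)\Delta)\Psi(z)$ for $n>2r$. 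Acting on the prefactor, $z\frac d{dz}$ produces $\kappa-\sum_j j\beta_jz^{-j}$. Comparing the lowest powers of $z$ in the equation for $n=r$ gives $\Lambda'_r=\Lambda_r-r\beta_r$. For $r<n\le 2r$, the equations at leading order give $\Lambda'_n=\Lambda_n$ and also fix $\beta_{r-1},\dots,\beta_1,\kappa$ successively. The equations for $n>2r$ are then consistent.

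To see how these constraints arise, we expand each $v_m$ in the PBW basis and use that $L_n$ with $n\ge r$ acts on $L_\lambda^{[r]}|\Lambda'\rangle$ by commuting past the $L_{r-\lambda_i}$. The resulting triangularity is the key structural point. The operator $L_{2r}$ sends $L_\lambda^{[r]}|\Lambda'\rangle$ to $\Lambda'_{2r}L_\lambda^{[r]}|\Lambda'\rangle$ plus terms of lower degree, because $[L_{2r},L_{r-k}]=(r+k)L_{3r-k}$ and $3r-k>r-k$ lowers the PBW degree. More generally, $L_{2r-j}$ acts as $\Lambda'_{2r-j}$ plus a degree-raising-by-at-most-$j$ part whose top component is computable.

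The second step solves recursively for $v_m$. Collecting the coefficient of $z^{m+n-j}$ in the $L_n$ equations gives a linear system for the top-degree part of $v_m$ in terms of $v_{m'}$ with $m'<m$. The leading behaviour comes from the $n=2r$ and $n=2r-1$ equations. In the $n=2r$ equation, the term $z^{2r}\cdot(-r\beta_r z^{-r})$ pairs with $L_{2r}-\Lambda_{2r}$. In the $n=r$ equation, the unknown enters through $(L_r-\Lambda_r)v_m$. The component of degree $m$ of $v_m$, written as $\sum_{|\lambda|=m}a_{m,\lambda}L_\lambda^{[r]}|\Lambda'\rangle$, is determined because the map "apply $L_{2r-1},\dots,L_r$ and take top components" acts like differentiation in commuting variables $L_{r-1},L_{r-2},\dots$, with coefficients proportional to $\Lambda'_{2r}$. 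This map is injective on top-degree parts precisely when $\Lambda_{2r}\ne0$. Its inverse therefore involves only $\Lambda_{2r}^{-1}$ and polynomials in the remaining parameters, which gives $a_{m,\lambda}\in\mathcal R$ and $v_m\in U_m$ by induction on $m$. Uniqueness follows from the same injectivity: the difference of two solutions has a lowest nonzero $v_m$, and the recursion forces it to vanish.

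The third step extends $\Phi$ to all of $M^{[r]}_\Lambda$. We define
\[
 \Phi(z)L_{i_1}\cdots L_{i_k}|\Lambda\rangle
\]
by moving the $L_{i}$ to the left using \eqref{eq:irregular-vertex-commutator}. Well-definedness amounts to checking that the relations defining $M_\Lambda^{[r]}$ are respected, namely that $L_n|\Lambda\rangle$ equals $\Lambda_n|\Lambda\rangle$ or $0$ for $n\ge r$. This holds because $\Psi$ satisfies exactly those equations, and the commutator rule is compatible with the Virasoro bracket via the usual Jacobi-type check for the differential operators $z^n(z\partial_z+(n+1)\Delta)$.

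The main obstacle is the second step. One must show that the overdetermined system coming from all $n\ge r$ is consistent at each order, not merely that its leading part is invertible. I would handle this by solving only the $n=r$ and $n=r+1$ equations, which generate $\Vir_{\ge r}$ as a Lie algebra together with commutators. The equations for the remaining $n$ then follow, since both sides of \eqref{eq:irregular-vertex-commutator} define representations of $\Vir_{\ge r}$ and agree on the generators.
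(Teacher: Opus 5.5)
The paper does not prove this theorem; it quotes it from Nagoya's 2015 paper, so your argument has to stand on its own. Your first and third steps are sound in outline. The third is the universal property of the induced module, since $X_n=L_n-z^n(z\frac{d}{dz}+(n+1)\Delta)$ again satisfy the Virasoro relations. The second step, which is the real content of the theorem, has a genuine gap.

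Done correctly, the leading-symbol computation says that $L_{r+k}-\Lambda'_{r+k}$ lowers PBW degree by at least $k$, with leading part $2k\Lambda_{2r}\,\partial/\partial L_{r-k}$, for every $k\ge1$ (with $\Lambda'_n=0$ for $n>2r$). So you need the equations with $n>2r$ to see $L_{-1},L_{-2},\dots$. The operator $L_r-\Lambda'_r$ is not of this form: its leading term involves $\Lambda'_{2r-1}$. The only consequence of this symbol is that a vector annihilated by all $L_n-\Lambda'_n$, $n>r$, lies in $\mathbb C|\Lambda'\rangle$. Hence the order-$m$ equations fix $v_m$ at best modulo $\mathbb C|\Lambda'\rangle$.

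Your uniqueness argument stops exactly there. The lowest nonzero difference of two solutions is such a vector, $d_{m_0}=c|\Lambda'\rangle$, and no degree argument excludes it. What kills $c$ is the $z\frac{d}{dz}$ term: $X_n(z^{m_0}\Psi)=z^{m_0}X_n\Psi-m_0z^{n+m_0}\Psi$. After subtracting $cz^{m_0}\Psi$ (and analogous Whittaker corrections at orders below $m_0+r$), the coefficient $e$ at order $m_0+r$ satisfies $(L_r-\Lambda'_r)e=m_0c|\Lambda'\rangle$ and $(L_n-\Lambda'_n)e=0$ for $n>r$. Thus $e\in\mathbb C|\Lambda'\rangle$ and $m_0c=0$. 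Nothing like this is in your proposal, and you never compare the $\kappa,\beta_j$ of two solutions.

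Existence is the more serious omission. An injective leading symbol is invertible only on its image. The system at each order is genuinely overdetermined with a nonzero obstruction, and this obstruction is exactly what fixes $\beta_{r-1},\dots,\beta_1,\kappa$ and the constant terms. Contrary to your first step, these parameters do not occur at order $z^0$. The parameters $\beta_{r-m}$ and $\kappa$ first enter at orders $m$ and $r$, and only through the $L_r$ equation. Already for $r=1$ at order one, $v_1=aL_0|\Lambda'\rangle+b|\Lambda'\rangle$: the $L_2$ equation gives $2a\Lambda_2=-\beta$, and the $L_1$ equation gives $a\Lambda'_1=\kappa+2\Delta$. These are compatible only for the $\kappa$ of \eqref{eq:rank-one-parameters}.

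Your fallback cannot remove such obstructions. Restricting to generators only drops equations implied by the others, and the example above already uses only the generators $L_1,L_2$. Moreover, $L_r$ and $L_{r+1}$ do not generate $\Vir_{\ge r}$ for $r\ge2$, since $[\Vir_{\ge r},\Vir_{\ge r}]\subset\bigoplus_{n\ge2r+1}\mathbb C L_n$; one needs $L_r,\dots,L_{2r}$.

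One way to close the gap is to work order by order.
\begin{enumerate}
\item At order $m$, first solve the $L_n$ equations with $n>r$. Their leading parts are commuting derivations. The compatibility identities inherited from $[X_m,X_n]=(m-n)X_{m+n}$ and the lower orders make the data closed, so a Poincar\'e-lemma induction on degree gives a solution in $U_m$, unique modulo $\mathbb C|\Lambda'\rangle$.
\item The same identities show that the defect of the $L_r$ equation lies in $\mathbb C|\Lambda'\rangle$.
\item This defect is affine in the one scalar not yet fixed: $\beta_{r-m}$, $\kappa$, or the constant term of $v_{m-r}$, according as $m<r$, $m=r$, or $m>r$. Its coefficient is a nonzero integer ($r-m$, $1$, or $m-r$ up to sign).
\end{enumerate}
Solving these scalar equations yields existence, uniqueness, and coefficients in $\mathcal R$ simultaneously.
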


In particular, for $r=1$ we write $\beta=\beta_1$ and
$\Lambda=(\Lambda_1,\Lambda_2)$, with $\Lambda_2\ne0$. Then
\begin{equation}
 \Lambda'=(\Lambda_1-\beta,\Lambda_2),\qquad
 \kappa=-2\Delta-\frac{\beta(\Lambda_1-\beta)}{2\Lambda_2}
 \label{eq:rank-one-parameters}
\end{equation}
\cite[Appendix, Section~2a]{Nagoya2015}.

To define the corresponding matrix coefficient, put
$\Vir_{\le0}=\mathbb C C\oplus\bigoplus_{n\le0}\mathbb C L_n$.
For $\Delta_\infty\in\mathbb C$, let
\begin{equation}
 M_{\Delta_\infty}^{*}
 =\mathbb C\langle\Delta_\infty|
     \otimes_{U(\Vir_{\le0})}U(\Vir)
 \label{eq:dual-verma-module}
\end{equation}
be the right Verma module with highest-weight vector
$\langle\Delta_\infty|$ satisfying
\[
 \langle\Delta_\infty|C=c\langle\Delta_\infty|,\qquad
 \langle\Delta_\infty|L_0=\Delta_\infty\langle\Delta_\infty|,
 \qquad \langle\Delta_\infty|L_n=0\quad(n<0).
\]
The superscript $*$ here denotes this right module, rather than the full
algebraic dual.

\begin{definition}
The normalized invariant pairing between $M_{\Delta_\infty}^{*}$ and
$M_{\Lambda'}^{[1]}$ is the bilinear map
\begin{equation}
 \langle\cdot,\cdot\rangle:
 M_{\Delta_\infty}^{*}\times M_{\Lambda'}^{[1]}
 \longrightarrow\mathbb C
 \label{eq:rank-one-pairing}
\end{equation}
determined by
\begin{equation*}
 \langle\langle\Delta_\infty|,|\Lambda'\rangle\rangle=1,
 \qquad
 \langle uL_n,v\rangle=\langle u,L_nv\rangle
 \quad(n\in\mathbb Z).
\end{equation*}
We abbreviate $\langle u,v\rangle$ as $uv$ in bra--ket notation.
\end{definition}

For completeness, the pairing can be constructed directly. Define a
linear functional $\ell$ on the PBW basis of $M_{\Lambda'}^{[1]}$ by
\[
 \ell\bigl(L_{i_1}\cdots L_{i_k}|\Lambda'\rangle\bigr)
 =\begin{cases}
   \Delta_\infty^k,&i_1=\cdots=i_k=0,\\
   0,&\text{some }i_j<0,
  \end{cases}
 \qquad \ell(|\Lambda'\rangle)=1.
\]
It satisfies $\ell(L_0v)=\Delta_\infty\ell(v)$ and
$\ell(L_nv)=0$ for $n<0$.

Thus $\ell$ defines the pairing
\eqref{eq:rank-one-pairing} by
\[
 \bigl\langle\langle\Delta_\infty|X,v\bigr\rangle=\ell(Xv),
 \qquad X\in U(\Vir),\quad v\in M_{\Lambda'}^{[1]}.
\]
This proves existence; invariance and
normalization prove uniqueness. We extend the pairing to formal series
coefficientwise.

\begin{definition}
\label{def:rank-one-block}
The rank-one irregular conformal block considered in this paper is
\begin{equation}
 \mathcal B(z;c,\Delta_\infty,\Delta,\Lambda_1,\Lambda_2,\beta)
 =\langle\Delta_\infty|
    \Phi_{\Lambda',\Lambda}^{\Delta}(z)|\Lambda\rangle,
 \label{eq:rank-one-irregular-block}
\end{equation}
where $\Lambda'$ and $\kappa$ are given by
\eqref{eq:rank-one-parameters}. Equivalently,
\begin{equation}
 \mathcal B(z)=z^\kappa e^{\beta/z}
    \sum_{m=0}^{\infty}b_mz^m,
 \qquad b_m=\langle\Delta_\infty|v_m\rangle,
 \quad b_0=1.
 \label{eq:rank-one-block-expansion}
\end{equation}
\end{definition}

This is the three-point block of type $(0,0,1)$: the two regular
insertions have weights $\Delta_\infty$ and $\Delta$, and the irregular
insertion has rank one. The expansion
\eqref{eq:rank-one-block-expansion} is understood formally; its definition
does not require a convergence assertion.

\subsection{A conjectural combinatorial formula}
\label{subsec:combinatorial-conjecture}

We first record the initial coefficients of the formal series in
\eqref{eq:rank-one-block-expansion}. Thus
\begin{equation*}
 \mathcal B(z)=z^\kappa e^{\beta/z}
     \bigl(1+b_1z+b_2z^2+O(z^3)\bigr).
\end{equation*}
The commutation relations \eqref{eq:irregular-vertex-commutator}, followed
by pairing with $\langle\Delta_\infty|$, give
\begin{align}
 b_1={}&\frac{\beta^3}{4\Lambda_2^2}
       -\frac{3\Lambda_1}{8\Lambda_2^2}\beta^2
       +\left(\frac{\Lambda_1^2}{8\Lambda_2^2}
               -\frac{\Delta+\Delta_\infty}{2\Lambda_2}\right)\beta
       +\frac{\Delta\Lambda_1}{2\Lambda_2},
 \label{eq:block-coefficient-one}\\
 b_2={}&\frac{\beta^6}{32\Lambda_2^4}
       -\frac{3\Lambda_1}{32\Lambda_2^4}\beta^5 \notag\\
 &+\left(\frac{13\Lambda_1^2}{128\Lambda_2^4}
          +\frac{5-4\Delta-4\Delta_\infty}{32\Lambda_2^3}\right)\beta^4
       \notag\\
 &+\left(-\frac{3\Lambda_1^3}{64\Lambda_2^4}
          +\frac{(10\Delta+6\Delta_\infty-10)\Lambda_1}
                 {32\Lambda_2^3}\right)\beta^3 \notag\\
 &+\left(\frac{\Lambda_1^4}{128\Lambda_2^4}
          +\frac{(3-4\Delta-\Delta_\infty)\Lambda_1^2}
                 {16\Lambda_2^3}\right. \notag\\
 &\hspace{1.5em}\left.
          +\frac{\Delta_\infty^2+(2\Delta-3)\Delta_\infty}
                 {8\Lambda_2^2}
          +\frac{6\Delta^2-18\Delta+1-c}{48\Lambda_2^2}
          \right)\beta^2 \notag\\
 &+\left(\frac{(2\Delta-1)\Lambda_1^3}{32\Lambda_2^3}
          +\frac{\bigl(c-1+30\Delta-12\Delta^2
                    +6(1-2\Delta)\Delta_\infty\bigr)\Lambda_1}
                 {48\Lambda_2^2}\right)\beta \notag\\
 &+\frac{\Delta(\Delta-2)\Lambda_1^2}{8\Lambda_2^2}
       +\frac{\Delta\Delta_\infty}{2\Lambda_2}.
 \label{eq:block-coefficient-two}
\end{align}
In particular, $b_1$ is independent of $c$, whereas $b_2$ depends on $c$.

For $c=1$, introduce the parameters
\begin{equation}
 \Delta_\infty=\theta_0^2,\qquad \Delta=\theta_t^2,\qquad
 \Lambda_1=\theta_*,\qquad \Lambda_2=\frac14.
 \label{eq:c-one-parameters}
\end{equation}
Under this specialization, \eqref{eq:block-coefficient-one} and
\eqref{eq:block-coefficient-two} agree with the coefficients of $t^{-1}$
and $t^{-2}$, respectively, in \cite[Equation~(33)]{Nagoya2016}, with
$z=t^{-1}$.
The coefficient $b_1$ can be written as a sum of products of explicit
linear factors:
\begin{equation*}
 b_1=2(\beta-\theta_*)(\beta-\theta_t)(\beta+\theta_t)
       +2\beta(\theta_*-\beta-\theta_0)(\theta_*-\beta+\theta_0).
\end{equation*}
The coefficient $b_2$ admits a similar decomposition, whose explicit
expression we omit. These factorizations motivate the following
conjecture for all coefficients of the irregular conformal block.
To state it, let $\mathbb Y$ be the set of partitions, identified with their Young
diagrams
\[
 \lambda=\{(i,j)\in\mathbb Z_{>0}^2\mid 1\le j\le\lambda_i\}.
\]
Write $\lambda'$ for the conjugate partition and
$h_\lambda(i,j)=\lambda_i-j+\lambda'_j-i+1$ for the hook length.
For $\nu\subseteq\lambda$ and $\eta\subseteq\mu$, put
\begin{align*}
 U_{\lambda/\nu}
  & =\prod_{(i,j)\in\lambda\setminus\nu}
              \bigl(2(\beta-\theta_*)+i-j\bigr),\qquad
 V_{\mu/\eta}
    =\prod_{(i,j)\in\mu\setminus\eta}
              \bigl(-2\beta+i-j\bigr),
 \\
 S_{\lambda,\mu}
  & =(-1)^{|\mu|}
    \prod_{(i,j)\in\lambda}
       \frac{(\beta+i-j)^2-\theta_t^2}{h_\lambda(i,j)^2}
    \prod_{(i,j)\in\mu}
       \frac{(\theta_*-\beta+i-j)^2-\theta_0^2}{h_\mu(i,j)^2}.
\end{align*}
Empty products are understood to be one. The skew-diagram products above
define polynomials even when individual factors vanish.

\begin{conjecture}[Nagoya, {\cite[Conjecture~5.3]{Nagoya2016}}]
\label{conj:c-one-combinatorial-formula}
There exist nonnegative integers $c_{\lambda,\mu}^{\nu,\eta}$,
independent of $\theta_0,\theta_t,\theta_*,\beta$, such that
\begin{equation*}
 \begin{aligned}
 &\mathcal B(z;1,\theta_0^2,\theta_t^2,\theta_*,\tfrac14,\beta)\\
 &\quad=z^{-2\theta_t^2-2\beta(\theta_*-\beta)}e^{\beta/z}
  \sum_{\lambda,\mu\in\mathbb Y}z^{|\lambda|+|\mu|}
  S_{\lambda,\mu}
  \sum_{\substack{\nu\subseteq\lambda,\ \eta\subseteq\mu\\
                   |\nu|=|\eta|}}
       (-1)^{|\nu|}c_{\lambda,\mu}^{\nu,\eta}
       U_{\lambda/\nu}V_{\mu/\eta}.
 \end{aligned}
\end{equation*}
The equality is an identity of formal series after removing the displayed
power and exponential factors.
\end{conjecture}

The proposed coefficients include
\begin{equation*}
 c_{\lambda,\mu}^{\varnothing,\varnothing}=1,
 \qquad c_{\lambda,\mu}^{(1),(1)}=2|\lambda||\mu|
\end{equation*}
whenever the indicated subdiagrams exist
\cite[Section~5.2]{Nagoya2016}. The formula is recalled here in its
original conjectural form.

\section{Screening integral}
\label{sec:screening-integral}

\subsection{Free field realization}
\label{subsec:free-field-realization}

The Heisenberg algebra is the complex Lie algebra
\[
 \mathfrak h=\bigoplus_{n\in\mathbb Z}\mathbb C a_n\oplus\mathbb C K
\]
with relations
\begin{equation*}
 [a_m,a_n]=m\delta_{m+n,0}K,\qquad [K,a_n]=0.
\end{equation*}
We work at level one, so that $K$ acts as the identity. For $p\in\mathbb C$,
the Fock module of momentum $p$ is
\begin{equation*}
 \mathcal F_p
 =U(\mathfrak h)\otimes_{U(\mathfrak h_{\ge0})}\mathbb C|p\rangle
 =\mathbb C[a_{-1},a_{-2},\ldots]|p\rangle,
\end{equation*}
where $\mathfrak h_{\ge0}=\mathbb C K\oplus
\bigoplus_{n\ge0}\mathbb C a_n$ and
\[
 K|p\rangle=|p\rangle,\qquad a_0|p\rangle=p|p\rangle,
 \qquad a_n|p\rangle=0\quad(n>0).
\]
For a partition $\mu$, write $a_{-\mu}=a_{-\mu_1}\cdots a_{-\mu_k}$.
Assign degree $|\mu|$ to $a_{-\mu}|p\rangle$. This gives the grading and
its associated filtration
\begin{equation}
 \mathcal F_p=\bigoplus_{m\ge0}\mathcal F_p[m],\qquad
 \mathcal F_{p,\le m}=\bigoplus_{j=0}^{m}\mathcal F_p[j]
 =\operatorname{span}_{\mathbb C}
    \{a_{-\mu}|p\rangle\mid |\mu|\le m\}.
 \label{eq:fock-filtration}
\end{equation}
We use the completion $\widehat{\mathcal F}_p=
\prod_{m\ge0}\mathcal F_p[m]$ to accommodate coherent states.
For $\ell\in\mathbb Z_{\ge0}$, let
\begin{equation}
 \pi_{\le\ell}:\widehat{\mathcal F}_p
       \longrightarrow\mathcal F_{p,\le\ell},
 \qquad (v_d)_{d\ge0}\longmapsto\sum_{d=0}^{\ell}v_d,
 \qquad v_d\in\mathcal F_p[d],
 \label{eq:fock-level-projection}
\end{equation}
be the projection associated with the grading
\eqref{eq:fock-filtration}.

Let $r\ge1$ and $\boldsymbol c=(c_1,\ldots,c_r)$, with $c_r\ne0$.
Define the coherent state
\begin{equation}
 |p;\boldsymbol c\rangle
 =\exp\left(\sum_{j=1}^{r}\frac{c_j}{j}a_{-j}\right)|p\rangle
 \in\widehat{\mathcal F}_p.
 \label{eq:coherent-state}
\end{equation}
The Heisenberg relations imply
\begin{equation*}
 a_0|p;\boldsymbol c\rangle=p|p;\boldsymbol c\rangle,\qquad
 a_n|p;\boldsymbol c\rangle=
 \begin{cases}
  c_n|p;\boldsymbol c\rangle,&1\le n\le r,\\
  0,&n>r.
 \end{cases}
\end{equation*}
The Heisenberg module generated by this state and its filtration are
\begin{align*}
 \mathcal F_{p,\boldsymbol c}^{[r]}
  &=\mathbb C[a_{-1},a_{-2},\ldots]|p;\boldsymbol c\rangle,
 \\
 \mathcal F_{p,\boldsymbol c,\le m}^{[r]}
  &=\operatorname{span}_{\mathbb C}
    \{a_{-\mu}|p;\boldsymbol c\rangle\mid |\mu|\le m\}.
\end{align*}
Here the filtration measures the degree of the polynomial multiplying
the coherent state, rather than the degree of that state in
$\widehat{\mathcal F}_p$.

Choose $\rho\in\mathbb C$ such that $c=1-12\rho^2$. The free field
realization of the Virasoro algebra is 
\begin{equation}
 L_n=\frac12\sum_{k\in\mathbb Z}:a_{n-k}a_k:
         -\rho(n+1)a_n.
 \label{eq:free-field-virasoro}
\end{equation}
Normal ordering places the modes $a_{-j}$, $j>0$, to the left of the
positive modes. The operators $L_n$ satisfy
\eqref{eq:virasoro-relations} with central charge $c$, and
$L_0|p\rangle=\frac12p(p-2\rho)|p\rangle$.
Set $c_0=p$ and $c_j=0$ for $j<0$ or $j>r$. Then
\begin{equation}
 L_n|p;\boldsymbol c\rangle=
 \begin{cases}
  \Lambda_n|p;\boldsymbol c\rangle,&r\le n\le2r,\\
  0,&n>2r,
 \end{cases}
 \qquad
 \Lambda_n=\frac12\sum_{j=0}^{n}c_jc_{n-j}
                -\rho(n+1)c_n.
 \label{eq:coherent-virasoro-eigenvalues}
\end{equation}
In particular, $\Lambda_{2r}=c_r^2/2\ne0$. The induced Virasoro
homomorphism
\begin{equation}
 \iota_{p,\boldsymbol c}:M_\Lambda^{[r]}
     \longrightarrow\mathcal F_{p,\boldsymbol c}^{[r]},\qquad
 |\Lambda\rangle\longmapsto|p;\boldsymbol c\rangle
 \label{eq:irregular-fock-identification}
\end{equation}
is an isomorphism preserving the filtrations. To see this, let $j\ge1$
and let $\nu$ be a partition. The free field formula
\eqref{eq:free-field-virasoro} gives
\[
 L_{r-j}a_{-\nu}|p;\boldsymbol c\rangle
 =c_r a_{-j}a_{-\nu}|p;\boldsymbol c\rangle+w,
 \qquad w\in\mathcal F_{p,\boldsymbol c,\le |\nu|+j-1}^{[r]}.
\]
Thus the term of highest degree is obtained by multiplying
$a_{-\nu}$ by $c_r a_{-j}$; all remaining terms have strictly smaller
degree. Applying this identity successively yields
\begin{equation*}
 \iota_{p,\boldsymbol c}
    \bigl(L_\mu^{[r]}|\Lambda\rangle\bigr)
 =c_r^{\ell(\mu)}a_{-\mu}|p;\boldsymbol c\rangle
       +\text{terms of degree less than }|\mu|,
\end{equation*}
where $\ell(\mu)$ is the length of $\mu$. Since $c_r\ne0$, this
triangular change of basis is invertible at every finite degree. Thus
$U_m$ corresponds to $\mathcal F_{p,\boldsymbol c,\le m}^{[r]}$ under
the free field realization.

We next introduce a formal zero mode $q$, satisfying
$[a_0,q]=1$ and $[a_n,q]=0$ for $n\ne0$. Its exponential denotes the
momentum-shift map
\begin{equation*}
 e^{\gamma q}:\mathcal F_p\longrightarrow\mathcal F_{p+\gamma},
 \qquad e^{\gamma q}a_{-\mu}|p\rangle
       =a_{-\mu}|p+\gamma\rangle.
\end{equation*}
It extends coefficientwise to the completions and sends
$|p;\boldsymbol c\rangle$ to $|p+\gamma;\boldsymbol c\rangle$.
The free boson field is
\begin{equation*}
 \phi(z)=q+a_0\log z-\sum_{n\ne0}\frac{a_n}{n}z^{-n}.
\end{equation*}
Its normally ordered exponential is defined by
\begin{equation*}
 V_\gamma(z)=:e^{\gamma\phi(z)}:
 =\exp\left(\gamma\sum_{j\ge1}\frac{a_{-j}}{j}z^j\right)
   e^{\gamma q}z^{\gamma a_0}
   \exp\left(-\gamma\sum_{j\ge1}\frac{a_j}{j}z^{-j}\right).
\end{equation*}
The order of the zero-mode factors in this formula is part of our
convention. A direct calculation gives
\begin{equation}
 [L_n,V_\gamma(z)]
 =z^n\left(z\frac{d}{dz}+(n+1)\Delta(\gamma)\right)V_\gamma(z),
 \qquad \Delta(\gamma)=\frac12\gamma(\gamma-2\rho).
 \label{eq:free-field-vertex-commutator}
\end{equation}

The action of $V_\gamma(z)$ on the coherent state is
\begin{equation*}
 V_\gamma(z)|p;\boldsymbol c\rangle
 =z^{\gamma p}
   \exp\left(-\gamma\sum_{j=1}^{r}\frac{c_j}{j}z^{-j}\right)
   \exp\left(\gamma\sum_{j\ge1}\frac{a_{-j}}{j}z^j\right)
        |p+\gamma;\boldsymbol c\rangle.
\end{equation*}
The vector-valued series
\[
 \exp\left(\gamma\sum_{j\ge1}\frac{a_{-j}}{j}z^j\right)
 |p+\gamma;\boldsymbol c\rangle
\]
has constant term $|p+\gamma;\boldsymbol c\rangle$, and its coefficient
of $z^m$ belongs to $\mathcal F_{p+\gamma,\boldsymbol c,\le m}^{[r]}$.
The target Virasoro parameters obey
\begin{equation}
 \Lambda'_n=\Lambda_n+\delta_{n,r}\gamma c_r
           =\Lambda_n-\delta_{n,r}r\beta_r,
 \qquad \beta_j=-\frac{\gamma c_j}{j},\quad
 \kappa=\gamma p.
 \label{eq:free-field-irregular-parameters}
\end{equation}
Hence, under the identifications \eqref{eq:irregular-fock-identification},
$V_\gamma(z)$ is the normalized rank-zero irregular vertex operator
$\Phi_{\Lambda',\Lambda}^{\Delta(\gamma)}(z)$ of
Section~\ref{subsec:irregular-vertex-operators}. In this construction,
the conformal weight and the irregular parameters are determined by
the free field parameters through
\eqref{eq:coherent-virasoro-eigenvalues},
\eqref{eq:free-field-vertex-commutator}, and
\eqref{eq:free-field-irregular-parameters}.

For rank one, the parameter identification reads
\begin{equation}
 \Lambda_1=(p-2\rho)c_1,\qquad
 \Lambda_2=\frac12c_1^2,\qquad
 \beta=-\gamma c_1,\qquad \kappa=\gamma p,
 \label{eq:rank-one-free-field-parameters}
\end{equation}
which is consistent with \eqref{eq:rank-one-parameters}.
Since $\Delta=\frac12\gamma(\gamma-2\rho)$ and
$\beta=-\gamma c_1$, fixing $\rho$ and $c_1$ prevents $\Delta$ and
$\beta$ from being chosen independently. This restriction motivates
the introduction of screening integrals.

\subsection{The screening current and the transformed integral}
\label{subsec:screening-integral}

Choose $\gamma_+\in\mathbb C$ such that
$\frac12\gamma_+(\gamma_+-2\rho)=1$. The screening current is the
vertex operator of conformal weight one
\begin{equation*}
 S_+(t)=V_{\gamma_+}(t)=:e^{\gamma_+\phi(t)}:.
\end{equation*}
By \eqref{eq:free-field-vertex-commutator}, it satisfies
\begin{equation}
 [L_k,S_+(t)]
 =\frac{d}{dt}\bigl(t^{k+1}S_+(t)\bigr),\qquad k\in\mathbb Z.
 \label{eq:screening-current-commutator}
\end{equation}

For the screening integral, denote the vertex charge by $\gamma_z$, so that
the insertion is $V_{\gamma_z}(z)$. We now restrict to rank one and write $|p;c_1\rangle$ for the
coherent state \eqref{eq:coherent-state}. Let $m,n\in\mathbb Z_{\ge0}$,
and define the total momentum $p_{\mathrm{out}}$ and the parameter
$\gamma_\infty$ by
\begin{equation}
 p_{\mathrm{out}}=p+\gamma_z+(m+n)\gamma_+,
 \qquad \gamma_\infty=2\rho-p_{\mathrm{out}}.
 \label{eq:screening-neutrality}
\end{equation}
Related confluent screening-integral constructions in conformal field theory
were studied by the author and Sun \cite{NagoyaSun2010}.
More recently, Adachi \cite{Adachi2026} analyzed double integrals of confluent
Selberg type, obtaining sectorial asymptotic expansions and explicit Stokes
matrices for the confluent Dotsenko--Fateev system.

For $z=re^{\mathrm i\varphi}$, where $r>0$, choose
$\psi\in\mathbb R$ such that $0<|\varphi-\psi|<\pi$.
We consider the vector-valued screening integral
\begin{equation*}
 \begin{split}
 \mathcal S_{m,n}(z)={}&
 \int_{\Gamma_\xi^{(m)}(z)}\int_{\Gamma_\eta^{(n)}(\psi)}
 S_+(\eta_n)\cdots S_+(\eta_1)V_{\gamma_z}(z)\\
 &\hspace{18mm}\times S_+(\xi_m)\cdots S_+(\xi_1)
 |p;c_1\rangle\,d^m\xi\,d^n\eta.
 \end{split}
\end{equation*}
Normal ordering gives the explicit expression
\begin{equation}
 \begin{aligned}
 \mathcal S_{m,n}(z)
 ={}&z^{p\gamma_z}e^{-\gamma_z c_1/z}
 \int_{\Gamma_\xi^{(m)}(z)}\int_{\Gamma_\eta^{(n)}(\psi)}
 \prod_{i<i'}(\xi_{i'}-\xi_i)^{\gamma_+^2}
 \prod_{j<j'}(\eta_{j'}-\eta_j)^{\gamma_+^2}\\
 &\quad\times\prod_{i=1}^m\prod_{j=1}^n
       (\eta_j-\xi_i)^{\gamma_+^2}
 \prod_{i=1}^m\xi_i^{\gamma_+p}
       (z-\xi_i)^{\gamma_+\gamma_z}e^{-\gamma_+c_1/\xi_i}\\
 &\quad\times\prod_{j=1}^n\eta_j^{\gamma_+p}
       (\eta_j-z)^{\gamma_+\gamma_z}e^{-\gamma_+c_1/\eta_j}\\
 &\quad\times\exp\left[
 \sum_{s\ge1}\frac{a_{-s}}s
 \left(\gamma_z z^s+\gamma_+\sum_{i=1}^m\xi_i^s
                   +\gamma_+\sum_{j=1}^n\eta_j^s\right)\right]
 |p_{\mathrm{out}};c_1\rangle\,d^m\xi\,d^n\eta.
 \end{aligned}
 \label{eq:screening-integral-normal-ordered}
\end{equation}
Here the ordered integration contours are
\begin{equation}
 \begin{aligned}
 \Gamma_\xi^{(m)}(z)
 &=\left\{(t_1z,\ldots,t_mz)\in\mathbb C^m
       \mid 0<t_1<\cdots<t_m<1\right\},\\
 \Gamma_\eta^{(n)}(\psi)
 &=\left\{(e^{\mathrm i\psi}s_1,\ldots,e^{\mathrm i\psi}s_n)
       \in\mathbb C^n\mid 0<s_1<\cdots<s_n<\infty\right\}.
 \end{aligned}
 \label{eq:screening-original-contours}
\end{equation}

Make the changes of variables
\begin{equation}
 \xi_i=\frac{z}{1-zx_i},\qquad \eta_j=\frac1{y_j}.
 \label{eq:screening-variable-change}
\end{equation}
After reversing the order within each set of variables, the
transformed contours are the fixed chambers
\begin{equation}
 \begin{aligned}
 \Gamma_x^{(m)}(\varphi)
 &=\left\{(-e^{-\mathrm i\varphi}u_1,\ldots,
           -e^{-\mathrm i\varphi}u_m)\in\mathbb C^m
       \mid 0<u_1<\cdots<u_m<\infty\right\},\\
 \Gamma_y^{(n)}(\psi)
 &=\left\{(e^{-\mathrm i\psi}v_1,\ldots,
           e^{-\mathrm i\psi}v_n)\in\mathbb C^n
       \mid 0<v_1<\cdots<v_n<\infty\right\}.
 \end{aligned}
 \label{eq:screening-transformed-contours}
\end{equation}
Absorbing the constant factors into a nonzero constant
$\varepsilon_{m,n}$, we obtain
\begin{equation}
 \mathcal S_{m,n}(z)
 =\varepsilon_{m,n}z^\kappa e^{\beta/z}
 \int_{\Gamma_x^{(m)}(\varphi)}\int_{\Gamma_y^{(n)}(\psi)}
 W(x,y)G(z;x,y)\mathcal E(z;x,y)\,d^mx\,d^ny,
 \label{eq:screening-integral-transformed}
\end{equation}
where
\begin{equation*}
 \begin{aligned}
 \beta&=-c_1(\gamma_z+m\gamma_+),\\
 \kappa&=(\gamma_z+m\gamma_+)p
       +2m(\gamma_+\gamma_z+1)+m(m-1)\gamma_+^2,
 \end{aligned}
\end{equation*}
and the three factors in the integrand are
\begin{equation}
 W(x,y)=
 \prod_{i<i'}(x_{i'}-x_i)^{\gamma_+^2}
 \prod_{j<j'}(y_{j'}-y_j)^{\gamma_+^2}
 \prod_{i=1}^m x_i^{\gamma_+\gamma_z}e^{\gamma_+c_1x_i}
       \prod_{j=1}^n y_j^{\gamma_+\gamma_\infty}
                         e^{-\gamma_+c_1y_j},
 \label{eq:screening-weight}
\end{equation}
\begin{equation}
 G(z;x,y)=
 \prod_{i=1}^m(1-zx_i)^{\gamma_+\gamma_\infty}
 \prod_{j=1}^n(1-zy_j)^{\gamma_+\gamma_z}
 \prod_{i=1}^m\prod_{j=1}^n
          \bigl(1-z(x_i+y_j)\bigr)^{\gamma_+^2},
 \label{eq:screening-scalar-factor}
\end{equation}
\begin{equation}
 \mathcal E(z;x,y)=
 \exp\left[\sum_{s\ge1}\frac{a_{-s}}s
 \left\{\gamma_+\sum_{j=1}^n y_j^{-s}
 +z^s\left(\gamma_z+\gamma_+\sum_{i=1}^m(1-zx_i)^{-s}\right)
 \right\}\right]
 |p_{\mathrm{out}};c_1\rangle.
 \label{eq:screening-creation-factor}
\end{equation}
In particular, $W$ is independent of $z$, $G$ is scalar-valued,
and $\mathcal E$ contains the creation operators and the target
coherent state. To check the exponents, note that
\begin{equation*}
 -\gamma_+(p+\gamma_z)-2-(m+n-1)\gamma_+^2
 =\gamma_+\gamma_\infty,
\end{equation*}
which follows from \eqref{eq:screening-neutrality} and
$\gamma_+^2-2\rho\gamma_+=2$. This is both the exponent of
$1-zx_i$ and the monomial exponent of $y_j$ after the Jacobians
are included. The factor $e^{\beta/z}$ follows from
$e^{-\gamma_+c_1/\xi_i}
=e^{-\gamma_+c_1/z}e^{\gamma_+c_1x_i}$.

For the convergence and asymptotic estimates below, fix
$N,\ell\in\mathbb Z_{\ge0}$ and project the screening integrand to
$\mathcal F_{p_{\mathrm{out}},\le\ell}$.
Here $N$ is the desired order of expansion in $z$.
We use the following sufficient conditions
\begin{equation}
 \begin{aligned}
 &\operatorname{Re}(\gamma_+^2)>0,
 \qquad \operatorname{Re}(\gamma_+\gamma_z)>0,\\
 &\operatorname{Re}(\gamma_+c_1e^{-\mathrm i\varphi})>0,
 \qquad \operatorname{Re}(\gamma_+c_1e^{-\mathrm i\psi})>0,\\
 &\operatorname{Re}(\gamma_+\gamma_\infty)>\ell+N+1.
 \end{aligned}
 \label{eq:screening-convergence-domain}
\end{equation}
The two conditions involving $c_1$ give exponential decay at
infinity on the transformed contours. The remaining conditions
ensure integrability at the endpoints and collisions, even in the presence
of the inverse powers of $y_j$ arising from the creation operators at Fock
levels up to the cutoff.
Keeping the contours in \eqref{eq:screening-transformed-contours}
fixed, we use \eqref{eq:screening-integral-transformed} to continue
$\mathcal S_{m,n}(z)$ to a sufficiently small sector about
$\arg z=\varphi$.
This corresponds to deforming the original $\xi$-contours as $z$
varies in the sector.

\subsection{Multivariable contour estimates and boundary vanishing}
\label{subsec:contour-estimates}

The ordinary Fock cutoff bounds the total degree in the inverse
variables $y_j^{-1}$ arising from the creation factor, allowing us
to establish convergence and remainder estimates.
For each Fock momentum $p'$ and $\ell\in\mathbb Z_{\ge0}$, we equip
the finite-dimensional space $\mathcal F_{p',\le\ell}$ with a fixed
norm. Remainder estimates at cutoff $\ell$ are understood in this
norm after applying $\pi_{\le\ell}$ from
\eqref{eq:fock-level-projection}.

Set
\begin{equation*}
 R=1+\sum_{i=1}^m u_i+\sum_{j=1}^n v_j,
 \qquad S=R-1.
\end{equation*}
Unless otherwise indicated, all integrals below are over
$\Gamma_x^{(m)}(\varphi)\times\Gamma_y^{(n)}(\psi)$, defined in
\eqref{eq:screening-transformed-contours}.
We show that the estimates below are uniform on compact parameter
sets satisfying the strict inequalities in \eqref{eq:screening-convergence-domain}
and on closed subsectors
of a sufficiently small sector about $\arg z=\varphi$. Empty sets of screening variables
contribute one and require no endpoint or decay condition.

\begin{lemma}[Absolute convergence and finite-order remainder]
\label{lem:contour-remainder}
Assume \eqref{eq:screening-convergence-domain} and fix
$M,h\in\mathbb Z_{\ge0}$ with $h\le\ell+N+2$.
The transformed integral \eqref{eq:screening-integral-transformed}
is absolutely convergent at Fock levels at most $h$. There are
$\mathcal F_{p_{\mathrm{out}},\le h}$-valued Laurent polynomials
$T_{q,h}(x,y)$ such that
\begin{equation}
 \begin{aligned}
 F_h(z;x,y)&=\pi_{\le h}\bigl(G(z;x,y)\mathcal E(z;x,y)\bigr)\\
 &=\sum_{q=0}^M z^qT_{q,h}(x,y)+\mathcal R_{M,h}(z;x,y),
 \end{aligned}
 \label{eq:contour-taylor-remainder}
\end{equation}
and
\begin{equation}
 \int |W|\,\|\mathcal R_{M,h}\|\,|d^mx\,d^ny|
 =O(|z|^{M+1}).
 \label{eq:contour-integrated-remainder}
\end{equation}
More precisely, put
\[
 B_h(y)=\prod_{j=1}^n(1+|y_j|^{-1})^h.
\]
For a sufficiently small fixed $\delta>0$, one has
\[
 \|\mathcal R_{M,h}(z;x,y)\|
 \le C|z|^{M+1}R^{M+1}B_h(y)
 \quad\text{on }R\le\delta/|z|.
\]
The weighted integral of the remainder over $R>\delta/|z|$ is
$O(e^{-\gamma/|z|})$ for some $\gamma>0$.
\end{lemma}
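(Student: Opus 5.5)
We split the contour chamber into the near region $R\le\delta/|z|$ and the far region $R>\delta/|z|$, and treat $F_h$ as an analytic function of $z$ in a small disc, with coefficients controlled by powers of $R$.

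\textbf{Structure of the integrand.} On $\Gamma_x^{(m)}(\varphi)\times\Gamma_y^{(n)}(\psi)$ we have $|x_i|=u_i$ and $|y_j|=v_j$. For $|z|R\le\delta$ with $\delta<1/2$, every factor $1-zx_i$, $1-zy_j$ and $1-z(x_i+y_j)$ lies in the disc $|w-1|\le\delta$. The principal branches in $G$ are therefore holomorphic in $z$ there and bounded by a constant.

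Now apply $\pi_{\le h}$ to $\mathcal E$. Only finitely many monomials $a_{-\mu}$ with $|\mu|\le h$ survive, and since $|p_{\mathrm{out}};c_1\rangle$ is a coherent state, the projection also involves only finitely many lower terms. Their coefficients are polynomials in:
\begin{itemize}
\item $y_j^{-s}$, with total degree in the $y_j^{-1}$ at most $h$;
\item $z^s$ and $(1-zx_i)^{-s}$, with $s\le h$.
\end{itemize}
Hence $F_h(z;x,y)$ is holomorphic in $z$ on the disc $|z|\le\delta/R$, and on the doubled disc $|z|\le 2\delta/R$ (after halving $\delta$) it satisfies
\[
\|F_h\|\le C\,B_h(y),
\]
uniformly on compact parameter sets.

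\textbf{Near region.} Write the Taylor expansion $F_h=\sum_q z^qT_{q,h}+\mathcal R_{M,h}$. The coefficients $T_{q,h}$ are obtained by expanding $(1-zx_i)^{a}$, $(1-zy_j)^{b}$ and $(1-z(x_i+y_j))^{c}$ in $z$; each $T_{q,h}$ is a polynomial in $x,y$ of degree $\le q$ times the finitely many inverse powers of $y$, hence a Laurent polynomial. Cauchy's estimate on the circle of radius $2\delta/R$ gives, for $|z|\le\delta/R$,
\[
\|\mathcal R_{M,h}\|\le C\,(|z|R/\delta)^{M+1}\,\sup_{|z'|=2\delta/R}\|F_h\|\le C'|z|^{M+1}R^{M+1}B_h(y).
\]
Integrability of $|W|R^{M+1}B_h$ then gives the near-region contribution to \eqref{eq:contour-integrated-remainder}. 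This integrability is a routine check:
\begin{itemize}
\item The factors $e^{\gamma_+c_1x_i}$ and $e^{-\gamma_+c_1y_j}$ decay like $e^{-c(u_i+v_j)}$ by the $c_1$-conditions in \eqref{eq:screening-convergence-domain}, which absorbs the polynomial growth $R^{M+1}$.
\item At $u_i\to0$, the condition $\operatorname{Re}(\gamma_+\gamma_z)>0$ gives integrability.
\item At collisions in the ordered chambers, $\operatorname{Re}\gamma_+^2>0$ suffices.
\item At $v_j\to0$, we need $\operatorname{Re}(\gamma_+\gamma_\infty)-h>-1$, which holds since $h\le\ell+N+2$ and $\operatorname{Re}(\gamma_+\gamma_\infty)>\ell+N+1$.
\end{itemize}
Absolute convergence of the full integral at Fock levels $\le h$ follows from the same bounds, once the far region below is controlled.

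\textbf{Far region (main obstacle).} On $R>\delta/|z|$ there is no holomorphy, and the factors $1-zx_i$ and $1-z(x_i+y_j)$ may vanish or become small. I would first show they cannot vanish. For $z$ in a small sector about $\arg z=\varphi$:
\begin{itemize}
\item $zx_i=-|z|u_ie^{\mathrm i(\arg z-\varphi)}$ stays near the negative real axis, so $|1-zx_i|\ge c(1+|z|u_i)$;
\item $zy_j=|z|v_je^{\mathrm i(\arg z-\psi)}$ has argument bounded away from $0$ mod $2\pi$ because $0<|\varphi-\psi|<\pi$;
\item $z(x_i+y_j)$ is a sum of two vectors whose directions lie in a closed sector avoiding the positive real axis, so it also stays away from $1$.
\end{itemize}
These give polynomial bounds $|1-\cdot|^{\pm\operatorname{Re}(\cdot)}\le C(1+|z|R)^{K}\le C R^{K}$ (up to bounded $\arg$-factors from the principal branch). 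The same holds for $(1-zx_i)^{-s}$, so $\|F_h\|\le CR^{K}B_h(y)$ on the far region. The same bound controls each $z^qT_{q,h}$ for $q\le M$, since these are Laurent polynomials.

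Then $\|\mathcal R_{M,h}\|\le CR^{K'}B_h$ on the far region. Multiplying by $|W|\le(\cdots)e^{-c(R-1)}$ and using $e^{-cR/2}\le e^{-c\delta/(2|z|)}$ bounds the remaining integral by $O(e^{-\gamma/|z|})$, with the leftover $e^{-cR/2}$ ensuring convergence. Combining the near and far regions yields \eqref{eq:contour-integrated-remainder}.

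The delicate point is the uniform lower bound on $|1-z(x_i+y_j)|$. I would handle it by choosing the subsector small enough relative to $\min(|\varphi-\psi|,\pi-|\varphi-\psi|)$.
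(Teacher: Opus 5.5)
Your proposal is correct and follows essentially the same route as the paper. Both arguments use holomorphy of $F_h$ in $z$ on a disc of radius comparable to $1/R$ with $\|F_h\|\le CB_h(y)$, Cauchy estimates for the Taylor tail on $R\le\delta/|z|$, and, on the far region, polynomial growth from the angular-separation bounds $|1-zx_i|\ge c(1+|z|u_i)$ (and their analogues), combined with the exponential decay of $|W|$ and the endpoint condition $\operatorname{Re}(\gamma_+\gamma_\infty)-h>-1$. The only cosmetic difference is that the paper first bounds the coefficients, $\|T_{q,h}\|\le C4^qR^qB_h(y)$, on the circle $|w|=1/(4R)$ and then sums the tail, whereas you apply the Cauchy remainder estimate directly on a circle of radius $2\delta/R$.
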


\begin{proof}
If $m=n=0$, then $F_h$ is a polynomial in $z$, and the assertions
follow directly. Otherwise, let $\sigma>0$ be a common lower bound
for the exponential decay rates
\[
 \operatorname{Re}(\gamma_+c_1e^{-\mathrm i\varphi}),
 \qquad
 \operatorname{Re}(\gamma_+c_1e^{-\mathrm i\psi}),
\]
omitting the first rate if $m=0$ and the second if $n=0$.
We first establish the following weighted moment bound for every $D\ge0$:
\begin{equation}
 \int |W(x,y)|B_h(y)R^D\,|d^mx\,d^ny|<\infty.
 \label{eq:contour-moment-majorant}
\end{equation}
On the integration contours,
\[
 B_h(y)=\prod_{j=1}^{n}v_j^{-h}(1+v_j)^h.
\]
Thus, in \eqref{eq:contour-moment-majorant}, the factor $v_j^{-h}$
changes the exponent of $v_j$ in $|W|$ from
$\operatorname{Re}(\gamma_+\gamma_\infty)$ to
$\operatorname{Re}(\gamma_+\gamma_\infty)-h$.
The inequalities
\[
 \operatorname{Re}(\gamma_+\gamma_z)>-1,
 \qquad
 \operatorname{Re}(\gamma_+\gamma_\infty)-h>-1,
 \qquad
 \operatorname{Re}(\gamma_+^2)>0
\]
follow from \eqref{eq:screening-convergence-domain} and
$h\le\ell+N+2$. They ensure local integrability at the origins
and collisions. At infinity, the exponential decay of $|W|$
controls the polynomial growth from $R^D$, the factors $(1+v_j)^h$,
and the power factors in $|W|$. This proves
\eqref{eq:contour-moment-majorant}. The same argument applies with
$|W|$ replaced by $e^{\sigma S/2}|W|$, since exponential decay remains.

For fixed $x,y$, we have $|x_i|,|y_j|,|x_i+y_j|<R$.
Hence, on $|w|<1/(2R)$, all factors $1-wx_i$, $1-wy_j$, and
$1-w(x_i+y_j)$ lie in the disc $|\zeta-1|<1/2$.
The complex powers in \eqref{eq:screening-scalar-factor} are
holomorphic there, using logarithms that vanish at $w=0$.
At cutoff $h$, only finitely many terms of the creation exponential
in \eqref{eq:screening-creation-factor} contribute, so $F_h(w;x,y)$
is holomorphic for $|w|<1/(2R)$.
On $|w|=1/(4R)$, the above factors have absolute value at least $3/4$,
and $|w/(1-wx_i)|\le 1/(3R)$.
Thus $G$ and the $w$-dependent creation coefficients are uniformly
bounded. The inverse $y$-monomials at cutoff $h$ have total degree
at most $h$, and each is bounded in absolute value by $B_h(y)$. Consequently,
\[
 \sup_{|w|=1/(4R)}\|F_h(w;x,y)\|\le CB_h(y)
\]
for some constant $C>0$ independent of $x,y$.
Set $T_{q,h}(x,y)=[w^q]F_h(w;x,y)$. These coefficients are
Laurent polynomials in $x,y$, since only finitely many terms contribute
at fixed $q,h$. Cauchy's estimate gives
\begin{equation}
 \|T_{q,h}(x,y)\|\le C_qR^qB_h(y)
 \label{eq:contour-coefficient-bound}
\end{equation}
with $C_q=C4^q$. Summing the Taylor tail gives, for $4|z|R<1$,
\[
 \begin{aligned}
 \|\mathcal R_{M,h}(z;x,y)\|
 &\le CB_h(y)\sum_{q=M+1}^{\infty}(4|z|R)^q\\
 &=CB_h(y)\frac{(4|z|R)^{M+1}}{1-4|z|R}.
 \end{aligned}
\]
For $0<\delta<1/8$, the denominator is bounded below by $1/2$
on $R\le\delta/|z|$. Thus
\[
 \|\mathcal R_{M,h}(z;x,y)\|
 \le C|z|^{M+1}R^{M+1}B_h(y)
 \qquad\text{on }R\le\delta/|z|.
\]
By \eqref{eq:contour-moment-majorant}, the integrals
$\int W(x,y)T_{q,h}(x,y)\,d^mx\,d^ny$
converge absolutely, and
\[
 \int_{R\le\delta/|z|}
 |W(x,y)|\,\|\mathcal R_{M,h}(z;x,y)\|\,|d^mx\,d^ny|
 =O(|z|^{M+1}).
\]

For the complementary region, put $r=|z|$ and assume $r\le1$.
The triangle inequality and angular separation give
\[
 \begin{aligned}
 c(1+ru_i)&\le |1-zx_i|\le1+ru_i\le R,\\
 c(1+rv_j)&\le |1-zy_j|\le1+rv_j\le R,\\
 c\bigl(1+r(u_i+v_j)\bigr)
 &\le |1-z(x_i+y_j)|\le1+r(u_i+v_j)\le R
 \end{aligned}
\]
throughout a sufficiently small sector about $\arg z=\varphi$,
for some constant $c>0$ independent of $x,y,z$.
These factors are therefore nonzero, and angular separation also
keeps their arguments uniformly bounded.
All exponents in $G$ have positive real parts by
\eqref{eq:screening-convergence-domain}, so $G$ has at most
polynomial growth in $R$. Moreover, the lower bound for $|1-zx_i|$ gives
$|z/(1-zx_i)|\le C|z|$, so the $z$-dependent creation coefficients
at cutoff $h$ are uniformly bounded. Together with
\eqref{eq:contour-coefficient-bound}, this yields
\begin{equation}
 \|F_h(z;x,y)\|
 +\left\|\sum_{q=0}^{M}z^qT_{q,h}(x,y)\right\|
 \le CB_h(y)R^D
 \label{eq:contour-polynomial-majorant}
\end{equation}
for some fixed $D\ge0$. In particular,
\eqref{eq:contour-moment-majorant} also proves the absolute
convergence of the projected integral.

If $R>\delta/|z|$ and $|z|$ is sufficiently small, then
$S>\delta/(2|z|)$. Hence
\[
 \begin{aligned}
 &\int_{R>\delta/|z|}
 |W|\,\|\mathcal R_{M,h}(z;x,y)\|\,|d^mx\,d^ny|\\
 &\qquad\le C e^{-\sigma\delta/(4|z|)}
 \int e^{\sigma S/2}|W|B_h(y)R^D\,|d^mx\,d^ny|\\
 &\qquad=O\!\left(e^{-\sigma\delta/(4|z|)}\right).
 \end{aligned}
\]
Combining this exponential tail estimate with the bound on
$R\le\delta/|z|$ proves \eqref{eq:contour-integrated-remainder}.
For fixed $M,h$, all bounds are uniform on compact parameter sets satisfying
\eqref{eq:screening-convergence-domain} and on closed subsectors
of the chosen sector.
Indeed, the exponential decay rates have a common positive lower bound,
the endpoint exponents stay uniformly above $-1$, and angular separation
is uniform on these sets.
\end{proof}

\begin{lemma}[Differentiation and input cutoff]
\label{lem:contour-differentiation}
Assume \eqref{eq:screening-convergence-domain} and fix
$M,h\in\mathbb Z_{\ge0}$ with $h\le\ell+N+2$.
Then the following assertions hold.
\begin{enumerate}
\item[(i)] The integral
$\int W(x,y)F_h(z;x,y)\,d^mx\,d^ny$
can be differentiated under the integral sign by $z\partial_z$.

\item[(ii)] The remainder $\mathcal R_{M,h}$ in
\eqref{eq:contour-taylor-remainder} satisfies
\begin{equation*}
 \int |W|\,\|z\partial_z\mathcal R_{M,h}\|\,|d^mx\,d^ny|
 =O(|z|^{M+1}).
\end{equation*}

\item[(iii)] For $1\le k\le N+2$, the remainder
$\mathcal R_{M,\ell+k}$ at cutoff $\ell+k$ satisfies
\[
 \int |W|\,\bigl\|\pi_{\le\ell}L_k\mathcal R_{M,\ell+k}\bigr\|
 \,|d^mx\,d^ny|=O(|z|^{M+1}).
\]
\end{enumerate}
\end{lemma}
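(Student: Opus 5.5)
The plan is to reuse the Cauchy-estimate and majorant machinery from the proof of Lemma~\ref{lem:contour-remainder}, applied to the $z$-derivative of $F_h$ and of the remainder, and to the finitely many level-changing operators coming from $L_k$.

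\emph{Part (i).} On the region $R\le\delta/|z|$, holomorphy of $w\mapsto F_h(w;x,y)$ on $|w|<1/(2R)$ and Cauchy's estimate on a circle of radius comparable to $|z|$ give
\[
 \|z\partial_zF_h(z;x,y)\|\le CB_h(y).
\]
On the complementary region I differentiate the factors of $G$ and the creation coefficients directly. We have $z\partial_z\log(1-zx_i)=-zx_i/(1-zx_i)$, and this is bounded by a constant, because the lower bound $|1-zx_i|\ge c(1+ru_i)$ gives $|zx_i|/|1-zx_i|\le C$. The same holds for $1-zy_j$ and $1-z(x_i+y_j)$. The creation coefficients $z^s(1-zx_i)^{-s}$ are rational in $z/(1-zx_i)$. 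Hence $z\partial_z$ of each of them is again controlled by $C(1+|z x_i|/|1-zx_i|)$, which is bounded. Altogether
\[
 \|z\partial_zF_h\|\le CB_h(y)R^{D'}
\]
uniformly on closed subsectors. This is an integrable majorant by \eqref{eq:contour-moment-majorant}, and dominated convergence (locally uniform in $z$) justifies differentiation under the integral sign.

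\emph{Part (ii).} Write $z\partial_z\mathcal R_{M,h}=z\partial_zF_h-\sum_{q\le M}qz^qT_{q,h}$. On $R\le\delta/|z|$, the tail $\sum_{q>M}qz^qT_{q,h}$ is bounded termwise by \eqref{eq:contour-coefficient-bound}, giving
\[
 \|z\partial_z\mathcal R_{M,h}\|\le CB_h(y)\sum_{q>M}q(4|z|R)^q\le C|z|^{M+1}R^{M+1}B_h(y),
\]
since $4|z|R\le1/2$. On $R>\delta/|z|$, the polynomial majorant from part (i), together with $|qT_{q,h}|\le C_qR^qB_h$, gives a bound $CB_hR^{D}$. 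The factor $e^{-\sigma S/2}$ then produces an $O(e^{-\sigma\delta/(4|z|)})$ contribution exactly as in the previous proof.

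\emph{Part (iii).} The key observation is that $\pi_{\le\ell}L_k$ applied to a vector of $\mathcal F_{p_{\mathrm{out}},\le\ell+k}$ is a fixed finite-rank linear map. In the free field formula \eqref{eq:free-field-virasoro}, $L_k$ lowers the ordinary Fock degree by at most $k$ through the annihilation part. It can raise the degree only through creation modes, and those components are killed by $\pi_{\le\ell}$ unless the input degree is at most $\ell$. One also has to account for the coherent-state zero mode: since $a_1$ acts on $|p_{\mathrm{out}};c_1\rangle$ by $c_1$, $L_k$ mixes finitely many components. So $\pi_{\le\ell}L_k\pi_{\le\ell+k}$ is a bounded operator between finite-dimensional normed spaces. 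Moreover $\pi_{\le\ell}L_k$ applied to the full series depends only on $\pi_{\le\ell+k}$ of it; this dependence check is the step I expect to need the most care. Once it holds, $\|\pi_{\le\ell}L_k\mathcal R_{M,\ell+k}\|\le C\|\mathcal R_{M,\ell+k}\|$. Since $k\le N+2$ gives $\ell+k\le\ell+N+2$, Lemma~\ref{lem:contour-remainder} applies at cutoff $h=\ell+k$ and yields the claim.
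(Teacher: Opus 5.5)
Your proposal is correct. Part (iii) matches the paper, but in parts (i) and (ii) you take a different route.

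\textbf{Parts (i) and (ii).} The paper never bounds $z\partial_zF_h$ pointwise. For (i) it uses only two facts: $F_h$ is holomorphic in $z$ on the fixed contours, and \eqref{eq:contour-moment-majorant} together with \eqref{eq:contour-polynomial-majorant} gives a locally uniform integrable majorant for $WF_h$ itself. Hence the parameter integral is holomorphic and can be differentiated under the integral sign. For (ii) it applies Cauchy's formula in $z$ on the circle $|w-z|=\eta|z|$, taken inside a slightly larger closed subsector, and uses Tonelli to bring the $(x,y)$-integral inside:
\[
 \int|W|\,\|z\partial_z\mathcal R_{M,h}(z)\|\,|d^mx\,d^ny|
 \le\frac1\eta\sup_{|w-z|=\eta|z|}\int|W|\,\|\mathcal R_{M,h}(w)\|\,|d^mx\,d^ny|.
\]
Then \eqref{eq:contour-integrated-remainder}, applied uniformly on the larger subsector, finishes the argument.

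You instead rebuild the pointwise machinery of Lemma~\ref{lem:contour-remainder} for the derivative:
\begin{enumerate}
\item[(a)] on $R\le\delta/|z|$, you differentiate the Taylor tail termwise and use the coefficient bounds;
\item[(b)] on the complement, you bound the logarithmic derivatives such as $zx_i/(1-zx_i)$ via the angular-separation lower bounds, and likewise $z\partial_z$ of the creation coefficients;
\item[(c)] you then repeat the exponential-tail argument.
\end{enumerate}
Both routes work. The paper's is shorter and uses the previous lemma as a black box, at the cost of needing its uniformity on a larger subsector. Yours is self-contained and yields pointwise majorants for the derivative. Incidentally, your direct bound $\|z\partial_zF_h\|\le CB_h(y)R^{D'}$ holds on the whole contour, because the lower bounds on $|1-zx_i|$, $|1-zy_j|$, $|1-z(x_i+y_j)|$ hold there. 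So the region split is unnecessary in (i).

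\textbf{Part (iii).} Your commentary here is inaccurate, though harmless. For the ordinary Fock grading of $\widehat{\mathcal F}_{p_{\mathrm{out}}}$, every term $:a_{k-j}a_j:$ and $a_k$ of $L_k$ lowers the level by exactly $k$. So $L_k$ never raises the level. The coherent-state eigenvalue relation also causes no mixing with respect to this grading. Hence $\pi_{\le\ell}L_k=\pi_{\le\ell}L_k\pi_{\le\ell+k}$ is immediate, not a delicate step. For (iii) as stated you do not even need it: $\mathcal R_{M,\ell+k}$ already lies in $\mathcal F_{p_{\mathrm{out}},\le\ell+k}$, and finite-dimensionality alone gives $\|\pi_{\le\ell}L_k\mathcal R_{M,\ell+k}\|\le C\|\mathcal R_{M,\ell+k}\|$.
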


\begin{proof}
On the fixed transformed contours, $F_h(z;x,y)$ is holomorphic
in $z$ on the sector of Lemma~\ref{lem:contour-remainder}.
The bounds \eqref{eq:contour-moment-majorant} and
\eqref{eq:contour-polynomial-majorant} provide an integrable majorant
for $WF_h$, locally uniform in $z$. Hence the projected integral
is holomorphic, and
\[
 z\partial_z\int W(x,y)F_h(z;x,y)\,d^mx\,d^ny
 =\int W(x,y)\,z\partial_zF_h(z;x,y)\,d^mx\,d^ny.
\]

Fix a closed subsector and choose $0<\eta<1$ so that, for all
sufficiently small $z$ in this subsector, the disks
$|w-z|\le\eta|z|$ lie in a larger closed subsector.
Cauchy's formula and Tonelli's theorem give
\[
 \begin{aligned}
 &\int |W|\,\|z\partial_z\mathcal R_{M,h}(z;x,y)\|
       \,|d^mx\,d^ny|\\
 &\qquad\le \frac1\eta
       \sup_{|w-z|=\eta|z|}
       \int |W|\,\|\mathcal R_{M,h}(w;x,y)\|
       \,|d^mx\,d^ny|
       =O(|z|^{M+1}),
 \end{aligned}
\]
where the last estimate follows from
\eqref{eq:contour-integrated-remainder}, uniformly on the larger
subsector, and $|w|\le(1+\eta)|z|$.

Finally, the restriction of $\pi_{\le\ell}L_k$ to
$\mathcal F_{p_{\mathrm{out}},\le\ell+k}$ is a bounded linear map
between finite-dimensional spaces. Since $\ell+k\le\ell+N+2$,
\eqref{eq:contour-integrated-remainder} applies at cutoff $\ell+k$.
\end{proof}

For later use, we record the cutoff relation
\begin{equation}
 \pi_{\le\ell}L_k
 =\pi_{\le\ell}L_k\pi_{\le\ell+k},
 \qquad k\ge1.
 \label{eq:virasoro-finite-input-cutoff}
\end{equation}
Indeed, \eqref{eq:free-field-virasoro} shows that $L_k$ lowers
the Fock level by $k$, so input levels greater than $\ell+k$
do not contribute after projection to levels at most $\ell$.

\begin{lemma}[Boundary vanishing and finite Ward identities]
\label{lem:contour-boundary-vanishing}
Fix $\ell\ge0$ and a finite mode bound $K\ge1$. Assume
\eqref{eq:screening-convergence-domain} with $N+2\ge K$.
Then the following assertions hold.
\begin{enumerate}
\item[(i)] At output cutoff $\ell$, the sum of the total-derivative terms
arising from \eqref{eq:screening-current-commutator}
has zero integral for $1\le k\le K$.

\item[(ii)] The screening integral satisfies
\begin{equation}
 \begin{aligned}
 &\pi_{\le\ell}
 (L_k-\delta_{k,1}\Lambda_1-\delta_{k,2}\Lambda_2)
       \mathcal S_{m,n}(z)\\
 &\qquad=\pi_{\le\ell}z^k
       \bigl(z\partial_z+(k+1)\Delta(\gamma_z)\bigr)
       \mathcal S_{m,n}(z),\qquad 1\le k\le K,
 \end{aligned}
 \label{eq:screening-finite-ward-identity}
\end{equation}
where $\Lambda_1=(p-2\rho)c_1$ and $\Lambda_2=c_1^2/2$ are the
source eigenvalues in \eqref{eq:rank-one-free-field-parameters}.

\item[(iii)] Put $p_0=p+\gamma_z+m\gamma_+$. For every
$J\in\mathbb Z_{\ge0}$, the fixed semi-infinite screening operator
\begin{equation*}
 Q_{+,\psi}^{(n)}u
 =\int_{\Gamma_\eta^{(n)}(\psi)}
 S_+(\eta_n)\cdots S_+(\eta_1)u\,d^n\eta
\end{equation*}
intertwines the Virasoro operators $L_k$, $-J\le k\le K$,
at output cutoff $\ell$:
\begin{equation}
 \pi_{\le\ell}L_kQ_{+,\psi}^{(n)}u
 =\pi_{\le\ell}Q_{+,\psi}^{(n)}L_ku,
 \qquad -J\le k\le K,
 \label{eq:fixed-screening-intertwining}
\end{equation}
for every finite coherent descendant
$u\in\mathcal F_{p_0,c_1}^{[1]}$.
\end{enumerate}
\end{lemma}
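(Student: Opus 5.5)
Throughout, I work with the projected integrands at a fixed output cutoff $\ell$ and use Lemmas~\ref{lem:contour-remainder} and~\ref{lem:contour-differentiation} to justify every interchange of $L_k$, $z\partial_z$ and integration.

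\textbf{Commutation with the screening operators.} For $1\le k\le K$, I would first use \eqref{eq:virasoro-finite-input-cutoff} to replace $\pi_{\le\ell}L_k$ by $\pi_{\le\ell}L_k\pi_{\le\ell+k}$. This way only the integrand projected at cutoff $\ell+k\le\ell+N+2$ enters, and Lemma~\ref{lem:contour-remainder} gives its absolute convergence. Next I would move $L_k$ through the product of operators in $\mathcal S_{m,n}(z)$, commuting it past each factor in turn:
\begin{itemize}
\item past each $S_+(\eta_j)$ and each $S_+(\xi_i)$ by \eqref{eq:screening-current-commutator};
\item past $V_{\gamma_z}(z)$ by \eqref{eq:free-field-vertex-commutator};
\item and finally onto $|p;c_1\rangle$, where it acts by $\delta_{k,1}\Lambda_1+\delta_{k,2}\Lambda_2$ by \eqref{eq:coherent-virasoro-eigenvalues}.
\end{itemize}
The vertex commutator produces $z^k(z\partial_z+(k+1)\Delta(\gamma_z))$ acting on the integrand. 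Lemma~\ref{lem:contour-differentiation}(i) then lets me pull $z\partial_z$ outside the $\xi$-integral, with one caveat: the $\xi$-contour $\Gamma_\xi^{(m)}(z)$ depends on $z$. I would handle this by writing the integral in the fixed transformed variables of \eqref{eq:screening-integral-transformed}, differentiating there, and transforming back. This adds only boundary contributions at the moving endpoint $\xi=z$. I expect those contributions to cancel against the corresponding total-derivative terms, or to vanish because $\operatorname{Re}(\gamma_+\gamma_z)>0$ gives the factor $(z-\xi)^{\gamma_+\gamma_z}$ a zero there. The remaining terms are total derivatives $\partial_{\xi_i}(\xi_i^{k+1}\cdots)$ and $\partial_{\eta_j}(\eta_j^{k+1}\cdots)$, so part (ii) follows from part (i).

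\textbf{Boundary vanishing, part (i).} For a total derivative in an ordered chamber, the fundamental theorem of calculus in the relevant variable leaves boundary terms of three kinds: at the chamber faces $\xi_i=\xi_{i+1}$ and $\eta_j=\eta_{j+1}$, at the endpoints $\xi=0$, $\xi=z$, $\eta=0$, and at $\eta=\infty$. I would treat them as follows.
\begin{itemize}
\item \emph{Collision faces.} These vanish because of the factor $(\xi_{i+1}-\xi_i)^{\gamma_+^2}$ with $\operatorname{Re}\gamma_+^2>0$. The same holds for the mixed collisions allowed by the contour geometry.
\item \emph{Endpoint at $0$.} Here $e^{-\gamma_+c_1/\xi}$, respectively $e^{-\gamma_+c_1/\eta}$, decays exponentially along the chosen directions, since $\operatorname{Re}(\gamma_+c_1e^{-i\varphi})>0$ and $\operatorname{Re}(\gamma_+c_1e^{-i\psi})>0$. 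This decay dominates every power factor, including the extra $\xi^{k+1}$.
\item \emph{Endpoint $\xi=z$.} The factor $(z-\xi)^{\gamma_+\gamma_z}$ kills the term.
\item \emph{Endpoint $\eta=\infty$.} In the variable $y=1/\eta$ this becomes $y\to0$, where the weight is $y^{\gamma_+\gamma_\infty}$, multiplied by $\eta^{k+1}=y^{-k-1}$ and by inverse powers of $y$ of total degree at most $\ell+k$ coming from the creation factor. The exponent is therefore at least $\operatorname{Re}(\gamma_+\gamma_\infty)-\ell-2k-1>0$, which I would need to check against \eqref{eq:screening-convergence-domain} together with $k\le K\le N+2$.
\end{itemize}
To make this rigorous, I would integrate the derivative over truncated chambers, cut off at $\epsilon$ near endpoints and collisions and at $1/\epsilon$ near infinity, and let $\epsilon\to0$ using dominated convergence with the majorant \eqref{eq:contour-moment-majorant}.

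\textbf{Intertwining, part (iii).} The same argument applies to $Q^{(n)}_{+,\psi}$ acting on a finite coherent descendant $u$. For $k\ge1$ I would use \eqref{eq:virasoro-finite-input-cutoff}. For $-J\le k\le0$, the operator $L_k$ does not lower the Fock level, so the output cutoff $\ell$ involves input levels at most $\ell$ and only finitely many creation terms. The boundary term at $\eta\to\infty$ now carries $\eta^{k+1}$ with $k\le0$, which is milder than in the case $k\ge1$. I expect the \textbf{main obstacle} to be this exponent bookkeeping at $\eta=\infty$, that is, at $y=0$. I would verify it against the inequality $\operatorname{Re}(\gamma_+\gamma_\infty)>\ell+N+1$, using the momentum $p_0$ in place of $p$, and if necessary tighten the cutoff accounting.
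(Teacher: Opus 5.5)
Your overall route matches the paper's: commute $L_k$ through the product, collect the total derivatives from \eqref{eq:screening-current-commutator}, kill the boundary terms by Stokes on truncated chambers, and handle the moving $\xi$-contour through the fixed variables of \eqref{eq:screening-variable-change}. However, the one quantitative step for which the hypothesis $N+2\ge K$ exists is miscounted, and as written it does not close.

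\textbf{The estimate at $y_j=0$ fails as stated.} You bound the exponent there by $\operatorname{Re}(\gamma_+\gamma_\infty)-\ell-2k-1$. The hypotheses $\operatorname{Re}(\gamma_+\gamma_\infty)>\ell+N+1$ and $k\le N+2$ only show that this quantity exceeds $N-2k$, which can be negative. Your count would therefore need the stronger assumption $\operatorname{Re}(\gamma_+\gamma_\infty)>\ell+2K+1$. There are two separate errors.
\begin{enumerate}
\item The inverse $y$-degree is at most $\ell$, not $\ell+k$. The input cutoff $\ell+k$ from \eqref{eq:virasoro-finite-input-cutoff} is needed only when $L_k$ acts on the whole integrand. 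After commuting, $\pi_{\le\ell}$ acts directly on $\partial_{\eta_j}(\eta_j^{k+1}\mathcal D)$, and it commutes with both $\partial_{\eta_j}$ and multiplication by $\eta_j^{k+1}$. Inverse powers $\eta_j^{s}=y_j^{-s}$ occur only paired with creation modes $a_{-s}$, so their total degree is at most $\ell$.
\item The exponent $\gamma_+\gamma_\infty$ in $W$ already includes the Jacobian $y_j^{-2}$ of $\eta_j=1/y_j$. The boundary value is that of $\eta_j^{k+1}\mathcal D$ without this Jacobian; equivalently, $\mathcal J\,\partial_{\eta_j}(\eta_j^{k+1}\mathcal D)=-\partial_{y_j}(y_j^{1-k}\widetilde{\mathcal D})$.
\end{enumerate}
The correct exponent is therefore $\operatorname{Re}(\gamma_+\gamma_\infty)-\ell+1-k\ge\operatorname{Re}(\gamma_+\gamma_\infty)-\ell+1-K>N+2-K\ge0$, and this is exactly where $N+2\ge K$ is used. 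A direct check confirms this. As $\eta_j\to\infty$, the identity $-\gamma_+(p+\gamma_z)-2-(m+n-1)\gamma_+^2=\gamma_+\gamma_\infty$ gives $\eta_j^{k+1}\mathcal D\sim\eta_j^{\,k-1+d-\gamma_+\gamma_\infty}$, where $d\le\ell$.

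\textbf{Smaller omissions.}
\begin{itemize}
\item Passing from truncated chambers to the full domain requires absolute integrability of the \emph{differentiated} densities. The majorant \eqref{eq:contour-moment-majorant} controls only the undifferentiated ones. The paper's argument is that differentiating a single algebraic factor lowers its exponent by one, and the real part stays above $-1$.
\item In (iii), ``the same argument applies'' needs the identity $S_+(\eta)a_{-s}=(a_{-s}-\gamma_+\eta^{-s})S_+(\eta)$. It shows that the factors of the descendant $u$ contribute only nonnegative powers $y_j^{s}$, so the inverse $y$-degree bound survives.
\item In (ii), the moving-contour terms are exactly $z^{-2}\sum_i\partial_{\xi_i}(\xi_i^2\mathcal D)$, i.e.\ the $k=1$ total derivatives. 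They vanish by (i); no cancellation is involved.
\end{itemize}
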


\begin{proof}
\textup{(i)} We work coefficientwise at output cutoff $\ell$, with $z$ in a
compact subset of the punctured sector. Let $\mathcal D(\xi,\eta;z)$
denote a coefficient of the normal-ordered density in
\eqref{eq:screening-integral-normal-ordered} and write
\[
 \widetilde{\mathcal D}(x,y;z)
 =\mathcal D(\xi(x,z),\eta(y);z)\mathcal J(x,y;z),
\]
where $\mathcal J$ is the full Jacobian of
\eqref{eq:screening-variable-change}. We integrate first over
compact truncations of the ordered chambers.
Lemmas~\ref{lem:contour-remainder}--\ref{lem:contour-differentiation}
supply the integral bounds, and \eqref{eq:virasoro-finite-input-cutoff}
gives the input cutoff $\ell+K$.

For a screening variable on the fixed ray, the substitution
$\eta_j=1/y_j$ gives
\[
 \mathcal J\,\partial_{\eta_j}
 \bigl(\eta_j^{k+1}\mathcal D\bigr)
 =-\partial_{y_j}\bigl(y_j^{1-k}\widetilde{\mathcal D}\bigr).
\]
We therefore estimate $y_j^{1-k}\widetilde{\mathcal D}$ at the boundary.
The creation factor
\eqref{eq:screening-creation-factor} contributes inverse powers of
total degree at most $\ell$. By \eqref{eq:screening-convergence-domain}
and $N+2\ge K$, the real part of the exponent of $v_j$ in this
expression has the positive lower bound
\[
 \operatorname{Re}(\gamma_+\gamma_\infty)-\ell+1-k
 \ge\operatorname{Re}(\gamma_+\gamma_\infty)-\ell+1-K>0.
\]
For the $\xi_i$-terms we estimate $\xi_i^{k+1}\mathcal D$.
At $\xi_i=z$ and at collisions within either family away from the
endpoints, the relevant factors have exponents $\gamma_+\gamma_z$
and $\gamma_+^2$, respectively, whose real parts are also positive.
Together with the bound above at $v_j=0$, this implies vanishing
of the boundary terms. Differentiating an algebraic factor lowers
its exponent by one, leaving real part greater than $-1$.
Since only one factor is differentiated in each term and the
remaining factors are locally bounded, the differentiated terms
are locally integrable, including at simultaneous collisions.

At $\xi_i=0$ and $\eta_j=0$, equivalently $u_i=\infty$ and
$v_j=\infty$, the exponential factors $e^{-\gamma_+c_1/\xi_i}$
and $e^{-\gamma_+c_1/\eta_j}$ dominate every fixed inverse power.
The boundary terms therefore vanish and the differentiated
densities are integrable there.
The separation of the two contour families in
\eqref{eq:screening-original-contours} excludes mixed collisions
away from their common origin, where the same exponential decay
controls the inverse powers introduced by differentiation.

Hence the boundary integrals vanish as the truncated domains increase
to the full integration domain, and the differentiated densities are
absolutely integrable.
Stokes' theorem on the compact truncations therefore implies that
the sum of the total-derivative terms has zero integral.

\textup{(ii)} We next account for the motion of the integration contours
when differentiating with respect to $z$, and show that the resulting
additional terms have zero integral. By \eqref{eq:screening-variable-change},
\[
 \left.\partial_z\xi_i\right|_x
 =(1-zx_i)^{-2}=\frac{\xi_i^2}{z^2}.
\]
The chain rule, including the Jacobian, gives
\begin{equation*}
 \partial_z\widetilde{\mathcal D}
 =\mathcal J\left[
 (\partial_z\mathcal D)_{\xi,\eta}
 +\sum_{i=1}^{m}
 \partial_{\xi_i}\left(\frac{\xi_i^2}{z^2}\mathcal D\right)
 \right].
\end{equation*}
The added divergences are the $k=1$ total-derivative terms
multiplied by the locally bounded factor $z^{-2}$,
and therefore integrate to zero by (i).
Lemma~\ref{lem:contour-differentiation} consequently identifies the
derivative of the moving integral with the integral of
$(\partial_z\mathcal D)_{\xi,\eta}$.
After integrating the commutator \eqref{eq:screening-current-commutator},
all total-derivative terms vanish by (i). The primary-field relation
\eqref{eq:free-field-vertex-commutator} and the source eigenvalue relations
\eqref{eq:coherent-virasoro-eigenvalues} give
\eqref{eq:screening-finite-ward-identity}.

\textup{(iii)} We apply (i) to finite coherent descendants and extend
its boundary argument to $-J\le k\le0$. Write
\[
 u=P(a_{-1},a_{-2},\ldots)|p_0;c_1\rangle,
\]
where $P$ is a polynomial. For $s>0$, we have
\[
 L_kS_+(\eta)a_{-s}-S_+(\eta)L_ka_{-s}
 =\partial_\eta\!\left[
 \eta^{k+1}(a_{-s}-\gamma_+\eta^{-s})S_+(\eta)
 \right].
\]
Applying this identity repeatedly shows that the factors contributed
by $P$ remain inside the total derivatives.
Under the substitution $\eta_j=1/y_j$, the factors contributed by $P$
become polynomials in $y_1,\ldots,y_n$ and the creation operators.
Since these polynomials contain no negative powers of the $y_j$,
they do not increase the total inverse $y$-degree, which remains
at most $\ell$ at output cutoff $\ell$.
Exponential decay controls the polynomial growth of these factors
at infinity. Thus the boundary estimates
in (i) remain valid for $1\le k\le K$.
For $-J\le k\le0$, the factor $y_j^{1-k}$ improves the zero-endpoint
behavior and introduces only polynomial growth at infinity, so the
same estimates apply. The boundary argument in (i) therefore gives
zero integral for these total-derivative terms throughout
$-J\le k\le K$, proving
\eqref{eq:fixed-screening-intertwining}.
\end{proof}

\subsection{Asymptotic expansion of the rank-one irregular conformal block}
\label{subsec:screening-asymptotic-expansion}

Let $\langle p_{\mathrm{out}}|$ be the Fock vacuum functional,
with $\langle p_{\mathrm{out}}|p_{\mathrm{out}}\rangle=1$ and
$\langle p_{\mathrm{out}}|a_{-s}=0$ for $s\ge1$.
Define the normalization constant $C_{m,n}$ and the vector
$|\Omega_n\rangle$ by the following limits. We will show that
$|\Omega_n\rangle$ is a normalized screened rank-one irregular
vector with parameters $\Lambda'$.
\begin{equation*}
 \begin{aligned}
 C_{m,n}&=\lim_{z\to0}z^{-\kappa}e^{-\beta/z}
                  \langle p_{\mathrm{out}}|\mathcal S_{m,n}(z)\rangle,\\
 \pi_{\le\ell}|\Omega_n\rangle
 &=C_{m,n}^{-1}\lim_{z\to0}z^{-\kappa}e^{-\beta/z}
                         \pi_{\le\ell}\mathcal S_{m,n}(z).
 \end{aligned}
\end{equation*}
The limits are taken in the chosen sector, in the convergence
domain for each fixed $\ell$.
Write $W(x,y)=W_x(x)W_y(y)$ for the two factors in
\eqref{eq:screening-weight}, with
\[
 Z_x=\int_{\Gamma_x^{(m)}(\varphi)}W_x(x)\,d^mx,
 \qquad
 Z_y=\int_{\Gamma_y^{(n)}(\psi)}W_y(y)\,d^ny.
\]
Applying Lemma~\ref{lem:contour-remainder} with $M=0$, we may pass
the limit $z\to0$ under the integral sign at each finite Fock
cutoff. Since $G(0;x,y)=1$, $\mathcal E(0;x,y)=\mathcal E_0(y)$,
and $W(x,y)=W_x(x)W_y(y)$, we obtain
\begin{equation*}
 C_{m,n}=\varepsilon_{m,n}Z_xZ_y,
 \qquad
 |\Omega_n\rangle=\frac1{Z_y}
       \int_{\Gamma_y^{(n)}(\psi)}W_y(y)\mathcal E_0(y)\,d^ny.
\end{equation*}
After parametrizing the contours by positive real variables, the
Laguerre--Selberg evaluation
\eqref{eq:Laguerre-Selberg-normalization} in
Subsection~\ref{subsec:screening-kernel-specialization} expresses
$Z_x$ and $Z_y$ as nonzero power factors times products and quotients
of Gamma functions. Under \eqref{eq:screening-convergence-domain},
all Gamma-function arguments have positive real part, so these
factors are finite and nonzero. Hence $Z_xZ_y\ne0$.

The vector $|\Omega_n\rangle$ is proportional to
$Q_{+,\psi}^{(n)}|p_{\mathrm{out}}-n\gamma_+;c_1\rangle$.
By the fixed-screening assertion of
Lemma~\ref{lem:contour-boundary-vanishing}, $|\Omega_n\rangle$
satisfies the rank-one irregular conditions with parameters
\[
 \Lambda'
 =\left((p_{\mathrm{out}}-n\gamma_+-2\rho)c_1,\frac{c_1^2}{2}\right)
 =(\Lambda_1-\beta,\Lambda_2).
\]
We henceforth interpret $|\Omega_n\rangle$ by coefficientwise meromorphic
continuation in the parameters.
It therefore defines the following realization of
$M_{\Lambda'}^{[1]}$ by a Virasoro homomorphism:
\begin{equation*}
 \jmath_n:M_{\Lambda'}^{[1]}\longrightarrow
                   \widehat{\mathcal F}_{p_{\mathrm{out}}},
 \qquad
 \jmath_n(U|\Lambda'\rangle)=U|\Omega_n\rangle,
 \quad U\in U(\Vir).
\end{equation*}
Thus $\jmath_n$ realizes the generating irregular vector
$|\Lambda'\rangle$ as the normalized screened rank-one irregular
vector $|\Omega_n\rangle$.
Since $\Lambda'_2=c_1^2/2\ne0$, the module $M_{\Lambda'}^{[1]}$
is irreducible \cite[Section~2.2]{NagoyaNakagawa2026}.
The normalization $\langle p_{\mathrm{out}}|\Omega_n\rangle=1$
shows that $\jmath_n$ is nonzero and hence injective. It therefore
identifies $M_{\Lambda'}^{[1]}$ with the cyclic submodule
$U(\Vir)|\Omega_n\rangle\subset\widehat{\mathcal F}_{p_{\mathrm{out}}}$.

Moreover, before the fixed screenings are applied, each creation
mode $a_{-s}$ multiplying the coherent state is accompanied by at
least $z^s$. Put
\[
 R_m(z)=\int_{\Gamma_\xi^{(m)}(z)}
 V_{\gamma_z}(z)S_+(\xi_m)\cdots S_+(\xi_1)
 |p;c_1\rangle\,d^m\xi.
\]
After removing the scalar prefactor from $R_m(z)$, the coefficient of $z^j$
in the resulting formal asymptotic expansion belongs to
$\mathcal F_{p_*,c_1,\le j}^{[1]}$, where
$p_*=p_{\mathrm{out}}-n\gamma_+$.
By \eqref{eq:irregular-fock-identification} and the fixed-screening
assertion of Lemma~\ref{lem:contour-boundary-vanishing}, the
asymptotic expansion supplied by Lemma~\ref{lem:contour-remainder}
therefore has the form
\[
 C_{m,n}^{-1}\mathcal S_{m,n}(z)
 \sim z^\kappa e^{\beta/z}
      \sum_{j\ge0}z^j\jmath_n(\widetilde v_j),
 \qquad \widetilde v_j\in U_j,\quad
 \widetilde v_0=|\Lambda'\rangle.
\]
The asymptotic relation is interpreted through order $N$ at Fock cutoff $\ell$.
The coefficients of the resulting formal expansion extend meromorphically to
generic parameters beyond the convergence domain \eqref{eq:screening-convergence-domain}.

\begin{theorem}[Vector asymptotic expansion]
\label{thm:screening-vector-identification}
Let $m,n,N,\ell\in\mathbb Z_{\ge0}$ and assume
\eqref{eq:screening-convergence-domain}.
Under the realization $\jmath_n$, the normalized screening integral
has the following formal asymptotic expansion:
\begin{equation}
 C_{m,n}^{-1}\mathcal S_{m,n}(z)
 \sim\jmath_n\bigl(\Phi_{\Lambda',\Lambda}^{\Delta(\gamma_z)}(z)
                         |\Lambda\rangle\bigr).
 \label{eq:screening-vector-formal-identification}
\end{equation}
\end{theorem}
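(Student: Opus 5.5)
The plan is to prove the identity through the uniqueness part of Theorem~\ref{thm:irregular-vertex-existence}. We show that the formal asymptotic series of $C_{m,n}^{-1}\mathcal S_{m,n}(z)$, pulled back through $\jmath_n$, has the normalized form \eqref{eq:irregular-vertex-expansion}. We then show that, applied to $|\Lambda\rangle$, it satisfies the commutator relations \eqref{eq:irregular-vertex-commutator}. Since these data determine $\kappa$ and every $v_m$ uniquely, the coefficients $\widetilde v_j$ must coincide with $v_j$.

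The normal form is already in place from the discussion preceding the theorem. The asymptotic expansion has the shape $z^\kappa e^{\beta/z}\sum_j z^j\jmath_n(\widetilde v_j)$ with $\widetilde v_j\in U_j$ and $\widetilde v_0=|\Lambda'\rangle$. The exponents match \eqref{eq:rank-one-parameters}: $\beta=-c_1(\gamma_z+m\gamma_+)$ is the charge of the combined insertion, and $\Lambda'=(\Lambda_1-\beta,\Lambda_2)$ was checked above. One should verify that $\kappa$ agrees with $-2\Delta-\beta(\Lambda_1-\beta)/(2\Lambda_2)$ for $\Delta=\Delta(\gamma_z)$. A short computation using $\gamma_+^2-2\rho\gamma_+=2$ should confirm this; in any case it is forced once the Ward identities hold, since $\kappa$ is determined by them.

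\textbf{Ward identities for $k\ge1$.} Fix $N,\ell$ and apply Lemma~\ref{lem:contour-boundary-vanishing}(ii), with the cutoff $\ell$ and $N+2\ge K$. For $1\le k\le K$, at output cutoff $\ell$, the integral $\mathcal S_{m,n}$ satisfies \eqref{eq:screening-finite-ward-identity}. Since $L_k$ acts on $|\Lambda\rangle$ by $\delta_{k,1}\Lambda_1+\delta_{k,2}\Lambda_2$ and vanishes for $k>2$, this is exactly the projected form of \eqref{eq:irregular-vertex-commutator} applied to $|\Lambda\rangle$.

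To transfer the identity to the formal expansion, substitute the Taylor decomposition \eqref{eq:contour-taylor-remainder} at input cutoff $\ell+k$, using \eqref{eq:virasoro-finite-input-cutoff}. Lemma~\ref{lem:contour-differentiation}(ii),(iii) shows that applying $z\partial_z$ or $\pi_{\le\ell}L_k$ to the remainder contributes only $O(|z|^{M+1})$. Comparing coefficients of $z^\kappa e^{\beta/z}z^j$ for $j\le M$ then gives the relations coefficientwise. Because $\ell,N$ are arbitrary and the coefficients continue meromorphically in the parameters, the relations hold for all $j$ and at full Fock level.

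\textbf{Modes $L_k$ with $k\le0$, and the main obstacle.} These modes do not annihilate $|\Lambda\rangle$, so the relation for them is not a boundary identity on the source. For $k=0$, one route is to take the identity for $L_0$ from the same Ward argument, since Lemma~\ref{lem:contour-boundary-vanishing}(iii) covers $k\ge-J$. A cleaner route is to observe that Nagoya's uniqueness argument only uses the relations with $k\ge1$ together with the normalization. Indeed, \eqref{eq:irregular-vertex-commutator} applied to $|\Lambda\rangle$ for $k\ge1$ already determines the vectors $v_m$ recursively, which is how Theorem~\ref{thm:irregular-vertex-existence} is proved. The operator on the rest of the module is then defined by $\Phi L_{-k}|\Lambda\rangle:=L_{-k}\Phi|\Lambda\rangle-z^{-k}(\cdots)\Phi|\Lambda\rangle$.

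The main obstacle is precisely this uniqueness step. I would either verify that the recursion in the existence proof uses only the positive modes acting on $\Phi|\Lambda\rangle$, or supply the $k\le0$ identities via Lemma~\ref{lem:contour-boundary-vanishing}(iii) applied to the moving screenings as well. Once it is settled, $\widetilde v_j=v_j$ for all $j$, which gives \eqref{eq:screening-vector-formal-identification}.
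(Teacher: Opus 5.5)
Your proposal is correct and follows essentially the paper's argument: you extract the $k\ge1$ recursion for $\widetilde v_j$ from the finite Ward identity \eqref{eq:screening-finite-ward-identity} via the remainder lemmas, pull it back through the injective Virasoro homomorphism $\jmath_n$, and invoke the uniqueness in Theorem~\ref{thm:irregular-vertex-existence}. Your worry about $k\le0$ is settled by your own ``cleaner route'': since $M_\Lambda^{[1]}$ is induced from $\Vir_{\ge1}$, the normalized operator is determined by $\Phi|\Lambda\rangle$ subject only to the $k\ge1$ relations (in fact $k=1,2$ suffice), and this is what the paper uses implicitly when it calls these relations the compatibility conditions.
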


\begin{proof}
Compare the finite-order asymptotic coefficients in
\eqref{eq:screening-finite-ward-identity}. Since $\jmath_n$ is an
injective Virasoro homomorphism, we obtain
\begin{equation*}
 \begin{aligned}
 &(L_k-\delta_{k,1}\Lambda_1-\delta_{k,2}\Lambda_2)\widetilde v_j\\
 &\qquad=-\beta\widetilde v_{j+1-k}
   +\bigl(\kappa+(k+1)\Delta(\gamma_z)+j-k\bigr)
       \widetilde v_{j-k},\qquad k\ge1,
 \end{aligned}
\end{equation*}
where $\widetilde v_a=0$ for $a<0$.
Together with $\widetilde v_0=|\Lambda'\rangle$, these are precisely
the compatibility conditions for the irregular vertex operator.
Let $v_j$ be defined by
\[
 \Phi_{\Lambda',\Lambda}^{\Delta(\gamma_z)}(z)|\Lambda\rangle
 =z^\kappa e^{\beta/z}\sum_{j\ge0}v_jz^j.
\]
The uniqueness in Theorem~\ref{thm:irregular-vertex-existence}
therefore gives $\widetilde v_j=v_j$ for all $j\ge0$.
\end{proof}

\begin{corollary}[Conformal block]
\label{cor:screening-block-identification}
Under the hypotheses of
Theorem~\ref{thm:screening-vector-identification} with $\ell=0$,
put $\Delta_\infty=\Delta(\gamma_\infty)$ and
\begin{equation*}
 \mathcal I_{m,n}(z)
 =C_{m,n}^{-1}\langle p_{\mathrm{out}}|
                         \mathcal S_{m,n}(z)\rangle.
\end{equation*}
Then
\begin{equation*}
 \mathcal I_{m,n}(z)
 =z^\kappa e^{\beta/z}
   \left(\sum_{j=0}^{N}b_jz^j+O(z^{N+1})\right),
\end{equation*}
where $b_j=\langle\Delta_\infty|v_j\rangle$.
Its formal expansion is
$\mathcal B(z;c,\Delta_\infty,\Delta(\gamma_z),\Lambda_1,\Lambda_2,\beta)$.
Equivalently,
\begin{equation}
 b_j=
 \frac{\displaystyle\int W(x,y)[z^j]G(z;x,y)\,d^mx\,d^ny}
      {\displaystyle\int W(x,y)\,d^mx\,d^ny},
 \qquad j\ge0.
 \label{eq:screening-block-coefficient-formula}
\end{equation}
\end{corollary}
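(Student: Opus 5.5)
The plan is to pair the vector expansion of Theorem~\ref{thm:screening-vector-identification} with the Fock vacuum functional $\langle p_{\mathrm{out}}|$. Since $\ell=0$, only the degree-zero Fock component matters. The first step is to verify that
\[
 \langle p_{\mathrm{out}}|\jmath_n(\cdot)
\]
is an invariant functional on $M_{\Lambda'}^{[1]}$ of highest weight $\Delta_\infty=\Delta(\gamma_\infty)$. Because $\jmath_n$ is a Virasoro homomorphism, it suffices to check the corresponding properties of $\langle p_{\mathrm{out}}|$ as a right module element:
\[
 \langle p_{\mathrm{out}}|L_n=0\quad (n<0),\qquad
 \langle p_{\mathrm{out}}|L_0=\tfrac12 p_{\mathrm{out}}(p_{\mathrm{out}}-2\rho)\langle p_{\mathrm{out}}| .
\]
Both follow from the free field formula \eqref{eq:free-field-virasoro}: the vacuum functional kills $a_{-s}$, so $L_n$ with $n<0$ contains a creation mode on the left. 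Using $\gamma_\infty=2\rho-p_{\mathrm{out}}$, one gets
\[
 \tfrac12 p_{\mathrm{out}}(p_{\mathrm{out}}-2\rho)=\tfrac12\gamma_\infty(\gamma_\infty-2\rho)=\Delta(\gamma_\infty).
\]
The normalization $\langle p_{\mathrm{out}}|\Omega_n\rangle=1$ was already established. By the uniqueness of the normalized invariant pairing, $\langle p_{\mathrm{out}}|\jmath_n(v)\rangle=\langle\Delta_\infty|v\rangle$ for all $v\in M_{\Lambda'}^{[1]}$. One technical point is that $\langle p_{\mathrm{out}}|$ acts on the completion, and $L_n$ for $n\le 0$ maps level components into finitely many lower levels, so the pairing is computed coefficientwise. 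I would check that $\langle p_{\mathrm{out}}|L_n u=0$ holds termwise for $u=U|\Omega_n\rangle$, using only finitely many Fock levels at each step.

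Next, apply this to the expansion $z^\kappa e^{\beta/z}\sum_j z^j\jmath_n(v_j)$, which holds through order $N$ at cutoff $0$ by Lemma~\ref{lem:contour-remainder} and the theorem. Because $\pi_{\le0}$ followed by the vacuum functional is a bounded map, the remainder estimate transfers directly. This gives
\[
 \mathcal I_{m,n}(z)=z^\kappa e^{\beta/z}\Bigl(\sum_{j\le N}b_jz^j+O(z^{N+1})\Bigr),\qquad b_j=\langle\Delta_\infty|v_j\rangle .
\]
By Definition~\ref{def:rank-one-block}, the formal series is $\mathcal B$ with $\Delta=\Delta(\gamma_z)$ and with $\Lambda_1,\Lambda_2,\beta$ as identified in the previous subsection.

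For \eqref{eq:screening-block-coefficient-formula}, I would compute the same coefficients directly from \eqref{eq:screening-integral-transformed}. Applying $\langle p_{\mathrm{out}}|$ to $\mathcal E$ gives $1$, because every creation term is annihilated and $\langle p_{\mathrm{out}}|p_{\mathrm{out}};c_1\rangle=1$. The scalar integral is therefore
\[
 \varepsilon_{m,n}z^\kappa e^{\beta/z}\int WG .
\]
Lemma~\ref{lem:contour-remainder} at $h=0$, with $T_{q,0}=[z^q]G$, gives the termwise expansion with an $O(z^{N+1})$ remainder. Dividing by $C_{m,n}=\varepsilon_{m,n}Z_xZ_y=\varepsilon_{m,n}\int W$ yields the formula. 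Uniqueness of asymptotic coefficients in the sector then matches these with $b_j$.

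The main obstacle I expect is the first step: justifying the vacuum invariance on the completed module, i.e.\ checking that the coefficientwise action is legitimate. This reduces to the level-lowering property recorded in \eqref{eq:virasoro-finite-input-cutoff} and its counterpart for $n\le0$.
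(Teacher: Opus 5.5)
Your proposal is correct and follows essentially the same route as the paper. You identify $v\mapsto\langle p_{\mathrm{out}}|\jmath_n(v)\rangle$ with $\langle\Delta_\infty|v\rangle$ via uniqueness of the normalized invariant pairing and $\Delta(p_{\mathrm{out}})=\Delta(\gamma_\infty)$, pair the vector expansion with the level-zero projection, and then use $\langle p_{\mathrm{out}}|\mathcal E(z;x,y)\rangle=1$ and $C_{m,n}=\varepsilon_{m,n}Z_xZ_y$. One small slip: $L_n$ with $n\le0$ raises the Fock level by $-n$ rather than lowering it, but each $L_n$ is homogeneous of degree $-n$, so its action on the completion is automatically coefficientwise and $\langle p_{\mathrm{out}}|L_n$ vanishes for $n<0$; the technical point you flag is therefore immediate.
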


\begin{proof}
The intertwining property of $\jmath_n$ and
$\Delta(p_{\mathrm{out}})=\Delta(\gamma_\infty)$ imply that
$v\mapsto\langle p_{\mathrm{out}}|\jmath_n(v)\rangle$ has weight
$\Delta_\infty$, annihilates $L_kv$ for $k<0$, and takes the
value one on $|\Lambda'\rangle$. By uniqueness of the pairing
\eqref{eq:rank-one-pairing}, this functional equals
$v\mapsto\langle\Delta_\infty|v\rangle$.
Apply this functional to
\eqref{eq:screening-vector-formal-identification}. Since the Fock vacuum
functional is the level-zero projection, the remainder has the
stated order. Finally,
$\langle p_{\mathrm{out}}|\mathcal E(z;x,y)\rangle=1$ gives
\eqref{eq:screening-block-coefficient-formula}.
\end{proof}

\section{Jack expansion}
\label{sec:jack-expansion}

Recall the scalar integration kernel in
\eqref{eq:screening-scalar-factor}:
\begin{equation}
 G(z;x,y)=
 \prod_{i=1}^{m}(1-zx_i)^{\gamma_+\gamma_\infty}
 \prod_{j=1}^{n}(1-zy_j)^{\gamma_+\gamma_z}
 \prod_{i=1}^{m}\prod_{j=1}^{n}
          \bigl(1-z(x_i+y_j)\bigr)^{\gamma_+^2}.
 \label{eq:jack-screening-kernel}
\end{equation}
Integrating $[z^j]G(z;x,y)$ against $W(x,y)$ and dividing by $Z_xZ_y$
gives the coefficient $b_j$ in \eqref{eq:screening-block-coefficient-formula}.
We express this kernel as an operator flow, compute its Jack coefficients,
and evaluate the resulting Laguerre--Jack integrals.
The symmetric-function expansions below are formal over
$\mathbb Q(\alpha)$, with the other parameters adjoined.

\subsection{Jack and shifted Jack polynomials}

A partition $\lambda=(\lambda_1,\lambda_2,\ldots)$ has size
$|\lambda|=\sum_i\lambda_i$, length $\ell(\lambda)$, and conjugate
$\lambda'$.  We identify it with the diagram
$\{(i,j):1\leq j\leq\lambda_i\}$.  For a cell $s=(i,j)\in\lambda$, set
\begin{equation*}
 \begin{aligned}
 a(s)&=\lambda_i-j,\quad \ell(s)=\lambda_j'-i,\quad
 a'(s)=j-1,\quad \ell'(s)=i-1,\\[4pt]
 \operatorname{ct}_\theta(s)&=a'(s)-\theta\ell'(s).
 \end{aligned}
\end{equation*}
Containment $\nu\subseteq\lambda$ means $\nu_i\leq\lambda_i$ for every
$i$. We use Macdonald's Jack parameter $\alpha$ and put
\begin{equation*}
 \theta=\alpha^{-1}.
\end{equation*}

For $\varrho=(1^{m_1}2^{m_2}\cdots)$, write
$z_\varrho=\prod_{r\geq1}r^{m_r}m_r!$.  The Jack scalar product is
\begin{equation}
 \langle p_\lambda,p_\mu\rangle_{\alpha}
 =\delta_{\lambda\mu}z_\lambda\alpha^{\ell(\lambda)}.
 \label{eq:jack-scalar-product}
\end{equation}
Let $P_\lambda^{(\alpha)}$ be the monic Jack polynomial, characterized by
triangularity in the monomial basis and orthogonality for
\eqref{eq:jack-scalar-product}; see
\cite[Chapter~VI, Section~10, equations~(10.13)--(10.14)]{Macdonald1995}.
Define
\begin{equation}
 \begin{aligned}
 H_\theta(\lambda)
 &=\prod_{s\in\lambda}\bigl(a(s)+\theta\ell(s)+1\bigr),\quad
 H_\theta'(\lambda)
 =\prod_{s\in\lambda}\bigl(a(s)+\theta\ell(s)+\theta\bigr),\\[4pt]
 b_\lambda(\alpha)&=\frac{H_\theta'(\lambda)}{H_\theta(\lambda)}.
 \end{aligned}
 \label{eq:hook-products}
\end{equation}
The dual Jack polynomial and the norm of the monic polynomial are therefore
\begin{equation}
 Q_\lambda^{(\alpha)}=b_\lambda(\alpha)P_\lambda^{(\alpha)},
 \qquad
 \langle P_\lambda^{(\alpha)},P_\mu^{(\alpha)}\rangle_{\alpha}
 =\delta_{\lambda\mu}\frac{H_\theta(\lambda)}{H_\theta'(\lambda)}
 =\frac{\delta_{\lambda\mu}}{b_\lambda(\alpha)}.
 \label{eq:PQ-normalization}
\end{equation}
These products are related to Macdonald's hook products by
$H_\theta(\lambda)=\alpha^{-|\lambda|}c_\lambda'(\alpha)$ and
$H_\theta'(\lambda)=\alpha^{-|\lambda|}c_\lambda(\alpha)$.
For the duality, norm, and hook formulas, see
\cite[Chapter~VI, Section~10, equations~(10.16) and~(10.21)]{Macdonald1995}.

We use the shifted Jack polynomials of Okounkov and Olshanski
\cite[Section~2, equation~(2.2) and the preceding definition]{OkounkovOlshanski1997}, normalized by
\begin{equation*}
 P_\nu^*(\nu;\theta)=H_\theta(\nu),
 \qquad
 P_\nu^*(\lambda;\theta)=0\quad\text{if }\nu\not\subseteq\lambda.
\end{equation*}
For $\nu\subseteq\lambda$, define the generalized skew Pochhammer symbol by
\begin{equation}
 (a)_{\lambda/\nu}^{(\theta)}
 =\prod_{s\in\lambda\setminus\nu}
   \bigl(a+a'(s)-\theta\ell'(s)\bigr)
 =\prod_{s\in\lambda\setminus\nu}
   \bigl(a+\operatorname{ct}_\theta(s)\bigr).
 \label{eq:skew-Pochhammer}
\end{equation}
The non-skew notation $(a)_\lambda^{(\theta)}$ means
$(a)_{\lambda/\varnothing}^{(\theta)}$.  The coefficient that will occur in the
exponential operator is
\begin{equation*}
 {
 A_{\lambda,\nu}^{(\alpha)}(a)
 =\frac{P_\nu^*(\lambda;\theta)}{H_\theta(\lambda)}
  (a)_{\lambda/\nu}^{(\theta)}.}
\end{equation*}
We set $A_{\lambda,\nu}^{(\alpha)}(a)=0$ when $\nu\not\subseteq\lambda$.

\subsection{Operator flow and explicit Jack coefficients}
\label{subsec:Fock-flow}
\label{subsec:exponential-Jack-coefficients}

Let $\mathcal S=\mathbb Q(\alpha)[p_1,p_2,\ldots]$.  Write
$\mathcal S=\bigoplus_{d\geq0}\mathcal S_d$ and
$\widehat{\mathcal S}=\prod_{d\geq0}\mathcal S_d$, where $\mathcal S_d$ is the
homogeneous degree-$d$ subspace.
We adjoin any additional parameters to the coefficient field. The symbol $\dagger$ denotes the bilinear
Jack adjoint, interpreted coefficientwise with $z$ unchanged. Define
\begin{equation}
 \mathsf E=\sum_{r\geq1}r p_r\frac{\partial}{\partial p_r},
 \qquad
 \mathsf D_a=\sum_{r\geq1}r p_{r+1}\frac{\partial}{\partial p_r}+ap_1,
 \qquad
 {\mathsf T_a(z)=e^{z\mathsf D_a}z^{\mathsf E}}.
 \label{eq:Fock-triangular-operators}
\end{equation}
On a finite alphabet, the derivation
\begin{equation*}
 \sum_{r\geq1}r p_{r+1}\frac{\partial}{\partial p_r}
 =\sum_i x_i^2\frac{\partial}{\partial x_i}
\end{equation*}
has flow $x_i\mapsto x_i/(1-zx_i)$.  Including the multiplication term $ap_1$ gives
\begin{equation*}
 e^{z\mathsf D_a}f(x)
 =\prod_i(1-zx_i)^{-a}
 f\!\left(\frac{x_1}{1-zx_1},
           \frac{x_2}{1-zx_2},\ldots\right).
\end{equation*}
Indeed, the right-hand side $F(z,x)$ satisfies
$\partial_zF=\mathsf D_aF$ and $F(0,x)=f(x)$; comparing powers of $z$
gives the unique formal solution $F=e^{z\mathsf D_a}f$.
Since $z^{\mathsf E}f(x)=f(zx)$, it follows that
\begin{equation}
 \mathsf T_a(z)f(x)
 =\prod_i(1-zx_i)^{-a}
 f\!\left(\frac{zx_1}{1-zx_1},
           \frac{zx_2}{1-zx_2},\ldots\right).
 \label{eq:T-flow}
\end{equation}
Equivalently, $[\mathsf E,\mathsf D_a]=\mathsf D_a$ gives
\begin{equation*}
 e^{z\mathsf D_a}z^{\mathsf E}=z^{\mathsf E}e^{\mathsf D_a}.
\end{equation*}
Here $e^{\mathsf D_a}$ is defined on $\widehat{\mathcal S}$ degree by degree.
For $f=\sum_{q\ge0}f_q$, with $f_q\in\mathcal S_q$, one has
\[
 [z^d]\mathsf T_a(z)f=\sum_{q=0}^{d}
 \frac{\mathsf D_a^{d-q}}{(d-q)!}f_q\in\mathcal S_d.
\]

Define the oscillator-number operator and the mixed-kernel operator by
\begin{equation}
 N_{\mathrm{osc}}p_\varrho=\ell(\varrho)p_\varrho,
 \qquad
 {\mathsf K(w)=(\alpha w)^{N_{\mathrm{osc}}}},
 \qquad
 \mathsf K(w)p_\varrho=(\alpha w)^{\ell(\varrho)}p_\varrho.
 \label{eq:K-operator-definition}
\end{equation}

Define the mixed kernel by
\begin{equation}
 \mathcal K_w(x,y)
 =\exp\!\left(w\sum_{r\ge1}\frac{p_r(x)p_r(y)}{r}\right)
 =\prod_{i,j}(1-x_i y_j)^{-w}.
 \label{eq:flow-mixed-kernel}
\end{equation}
For $x=(x_1,\ldots,x_m)$ and $y=(y_1,\ldots,y_n)$, define
\begin{equation}
 F_{m,n}(x,y;z;u,v,w)
 :=\prod_{i=1}^{m}(1-zx_i)^{-u}
 \prod_{j=1}^{n}(1-zy_j)^{-v}
 \prod_{i=1}^{m}\prod_{j=1}^{n}(1-zx_i-zy_j)^{-w}.
 \label{eq:universal-two-alphabet-function}
\end{equation}
Put $a_{\mathrm L}=u+nw$ and $a_{\mathrm R}=v+mw$. Superscripts
$(x)$ and $(y)$ indicate the alphabet on which an operator acts.
The flow formula \eqref{eq:T-flow} gives
\begin{equation}
 F_{m,n}
 =\mathsf T_{a_{\mathrm L}}^{(x)}(z)
   \mathsf T_{a_{\mathrm R}}^{(y)}(z)\mathcal K_w(x,y).
 \label{eq:universal-kernel-flow}
\end{equation}
To verify this directly, set
\begin{equation*}
 X_i=\frac{zx_i}{1-zx_i},\qquad
 Y_j=\frac{zy_j}{1-zy_j}.
\end{equation*}
Then
\begin{equation*}
 1-zx_i-zy_j=(1-zx_i)(1-zy_j)(1-X_iY_j),
\end{equation*}
which proves \eqref{eq:universal-kernel-flow} by \eqref{eq:T-flow}.
Only mixed-kernel components of bidegree $(r,r)$ with $2r\le d$
contribute to $[z^d]F_{m,n}$, so the action is well defined coefficientwise.

In particular, the screening kernel \eqref{eq:jack-screening-kernel} is
\begin{equation*}
 G(z;x,y)=
 \mathsf T_{-\gamma_+\gamma_\infty-n\gamma_+^2}^{(x)}(z)
 \mathsf T_{-\gamma_+\gamma_z-m\gamma_+^2}^{(y)}(z)
 \mathcal K_{-\gamma_+^2}(x,y).
\end{equation*}

Write $[P_\lambda]f$ for the coefficient of $P_\lambda^{(\alpha)}$ in $f$.

\begin{theorem}[Explicit coefficients of the exponential operator]
\label{thm:exponential-Jack-coefficients}
For every partition $\nu$, one has, in the degree completion,
\begin{equation}
 e^{\mathsf D_a}P_\nu^{(\alpha)}
 =\sum_{\lambda\supseteq\nu}
 \frac{P_\nu^*(\lambda;\theta)}{H_\theta(\lambda)}
 (a)_{\lambda/\nu}^{(\theta)}P_\lambda^{(\alpha)}.
 \label{eq:unit-exponential-Jack-expansion}
\end{equation}
Consequently,
\begin{equation}
 [P_\lambda]\mathsf T_a(z)P_\nu
 =z^{|\lambda|}A_{\lambda,\nu}^{(\alpha)}(a).
 \label{eq:T-P-matrix}
\end{equation}
\end{theorem}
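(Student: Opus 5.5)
The plan is to compute the one-step action of $\mathsf D_a$ on the Jack basis, exponentiate it by summing over chains of one-box additions, and identify the resulting chain sum with a shifted Jack value.

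\textbf{Step 1: one-step action.} Write $\mathsf D_a=\mathsf D^{0}+ap_1$, where $\mathsf D^{0}=\sum_{r\ge1}rp_{r+1}\partial/\partial p_r$ equals $\sum_i x_i^2\partial_{x_i}$ on a finite alphabet. For the multiplication part we use the one-box Pieri rule
\[
 p_1P_\nu^{(\alpha)}=\sum_{\lambda=\nu+\square}\psi'_{\lambda/\nu}P_\lambda^{(\alpha)} .
\]
For the derivation part, work in $n\ge\ell(\lambda)$ variables with the Laplace--Beltrami (Sekiguchi) operator
\[
 H=\tfrac12\sum_i x_i^2\partial_i^2+\theta\sum_{i<j}\frac{x_i^2\partial_i-x_j^2\partial_j}{x_i-x_j}.
\]
Its eigenvalue on $P_\lambda^{(\alpha)}$ is $\sum_i\bigl(\binom{\lambda_i}{2}+\theta(n-i)\lambda_i\bigr)=\sum_{s\in\lambda}\operatorname{ct}_\theta(s)+\theta(n-1)|\lambda|$. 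A direct computation gives
\[
 [H,p_1]=\sum_i x_i^2\partial_i+\theta(n-1)p_1 .
\]
Applying $[H,p_1]$ to $P_\nu$, the eigenvalue difference for $\lambda=\nu+\square$ is $\operatorname{ct}_\theta(\square)+\theta(n-1)$. The $\theta(n-1)$ cancels against the correction term, so
\[
 \mathsf D^0P_\nu=\sum_{\lambda=\nu+\square}\psi'_{\lambda/\nu}\operatorname{ct}_\theta(\square)P_\lambda .
\]
This is independent of $n$ and hence holds in $\mathcal S$. Therefore
\[
 \mathsf D_aP_\nu=\sum_{\lambda=\nu+\square}\psi'_{\lambda/\nu}\bigl(a+\operatorname{ct}_\theta(\square)\bigr)P_\lambda .
\]

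\textbf{Step 2: exponentiation.} Iterating, $[P_\lambda]\mathsf D_a^kP_\nu$ for $|\lambda|=|\nu|+k$ is a sum over standard chains $\nu=\lambda^0\subset\cdots\subset\lambda^k=\lambda$ of $\prod_t\psi'_{\lambda^t/\lambda^{t-1}}\prod_t(a+\operatorname{ct}_\theta(\lambda^t\setminus\lambda^{t-1}))$. The second product runs over the boxes of $\lambda\setminus\nu$ in some order, so it equals $(a)_{\lambda/\nu}^{(\theta)}$ for every chain and factors out. The remaining sum $\frac1{k!}\sum_{\text{chains}}\prod\psi'$ is exactly $[P_\lambda]\,p_1^kP_\nu/k!$. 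Hence
\[
 [P_\lambda]e^{\mathsf D_a}P_\nu=(a)^{(\theta)}_{\lambda/\nu}\,[P_\lambda]\frac{p_1^{k}}{k!}P_\nu .
\]

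\textbf{Step 3: the key identification (main obstacle).} We need $[P_\lambda]\,p_1^kP_\nu/k!=P_\nu^*(\lambda;\theta)/H_\theta(\lambda)$. This is the Okounkov--Olshanski formula expressing the normalized chain sum $\dim(\lambda/\nu)/\dim\lambda$ through the shifted Jack polynomial $P_\nu^*(\lambda;\theta)$. I would invoke it with their definition and then fix normalizations by two checks. First, for $\lambda=\nu$ both sides equal $1$, since $P_\nu^*(\nu)=H_\theta(\nu)$. Second, for $\nu=\varnothing$ it reduces to $[P_\lambda]p_1^k/k!=1/H_\theta(\lambda)$. That follows from pairing $p_1^k$ with $Q_\lambda$ via \eqref{eq:jack-scalar-product} and \eqref{eq:PQ-normalization}, using the evaluation $[p_1^k]Q_\lambda=k!\,\alpha^{-k}/\,c'_\lambda(\alpha)$ up to $b_\lambda$, i.e.\ $\langle p_1^k,P_\lambda\rangle_\alpha=k!/H'_\theta(\lambda)$. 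Reconciling the $\alpha$-powers between Macdonald's $c'_\lambda$ and $H_\theta$, and matching Okounkov--Olshanski's convention ($\theta$ versus $\alpha$, normalization of $P^*$), is where I expect the care to be needed. The support condition $P_\nu^*(\lambda)=0$ for $\nu\not\subseteq\lambda$ is consistent with the absence of chains in that case.

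\textbf{Step 4: the $\mathsf T_a$ formula.} For \eqref{eq:T-P-matrix}, use $\mathsf T_a(z)=e^{z\mathsf D_a}z^{\mathsf E}=z^{\mathsf E}e^{\mathsf D_a}$. Every $P_\lambda$-component of $e^{\mathsf D_a}P_\nu$ has degree $|\lambda|$, so $z^{\mathsf E}$ multiplies it by $z^{|\lambda|}$. This gives $z^{|\lambda|}A^{(\alpha)}_{\lambda,\nu}(a)$ coefficientwise in the degree completion.
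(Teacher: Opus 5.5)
Your proposal is correct and follows the paper's proof essentially step for step: a Laplace--Beltrami commutator with $p_1$ combined with the one-box Pieri rule gives the content-weighted Pieri rule, the content product factors out of the chain sum, the Okounkov--Olshanski formula identifies $[P_\lambda]p_1^kP_\nu/k!$ with $P_\nu^*(\lambda;\theta)/H_\theta(\lambda)$, and the $\mathsf T_a(z)$ formula follows from $z^{\mathsf E}e^{\mathsf D_a}$. The only difference is that you use the finite-variable Sekiguchi operator, whose $\theta(n-1)p_1$ shift you correctly cancel, where the paper uses the stable power-sum operator $\mathsf H_\theta$ with eigenvalue already $\sum_{s}\operatorname{ct}_\theta(s)$. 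Your $\nu=\varnothing$ normalization check is right in its conclusion $\langle p_1^k,P_\lambda\rangle_\alpha=k!/H'_\theta(\lambda)$, but the intermediate claim should read $[p_1^k]Q_\lambda=1/c'_\lambda(\alpha)$.
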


\begin{proof}
Write $P_\mu=P_\mu^{(\alpha)}$ throughout the proof.  Consider the
degree-preserving Laplace--Beltrami operator
of \cite[Chapter~VI, Section~3, Example~3(e), and Section~4,
Example~3]{Macdonald1995}, in the normalization obtained by dividing
Macdonald's operator by $\alpha$:
\begin{align}
 \mathsf H_\theta={}&
 \frac12\sum_{r,s\geq1}rs\,p_{r+s}
       \frac{\partial^2}{\partial p_r\partial p_s}
 +\frac{\theta}{2}\sum_{r,s\geq1}(r+s)p_rp_s
       \frac{\partial}{\partial p_{r+s}}
 \notag\\
 &+\frac{1-\theta}{2}\sum_{r\geq1}r(r-1)p_r
       \frac{\partial}{\partial p_r}.
 \label{eq:content-Laplace-Beltrami}
\end{align}
Its eigenvalue on $P_\mu$ is given by
\begin{equation}
 \mathsf H_\theta P_\mu
 =\varepsilon_\theta(\mu)P_\mu,
 \qquad
 \varepsilon_\theta(\mu)
 =\sum_i\left\{\binom{\mu_i}{2}-\theta(i-1)\mu_i\right\}
 =\sum_{s\in\mu}\operatorname{ct}_\theta(s).
 \label{eq:content-Laplace-Beltrami-eigenvalue}
\end{equation}
A direct calculation from \eqref{eq:content-Laplace-Beltrami} gives
\begin{equation}
 [\mathsf H_\theta,p_1]
 =\sum_{r\geq1}r p_{r+1}\frac{\partial}{\partial p_r},
 \qquad
 \mathsf D_a=[\mathsf H_\theta,p_1]+ap_1,
 \label{eq:Da-content-commutator}
\end{equation}
where $p_1$ also denotes multiplication by $p_1$.

Write the one-box Pieri rule as
\begin{equation*}
 p_1P_\mu
 =\sum_{\lambda=\mu+\square}
   \pi_{\lambda\mu}(\theta)P_\lambda;
\end{equation*}
see \cite[Chapter~VI, (6.24)(iv), with the Jack specialization of
Section~10]{Macdonald1995}.
Equations \eqref{eq:content-Laplace-Beltrami-eigenvalue} and
\eqref{eq:Da-content-commutator} imply
\begin{equation}
 \mathsf D_aP_\mu
 =\sum_{\lambda=\mu+\square}
   \pi_{\lambda\mu}(\theta)
   \bigl(a+\operatorname{ct}_\theta(\square)\bigr)P_\lambda.
 \label{eq:Da-weighted-Pieri}
\end{equation}
For $k=|\lambda|-|\nu|\geq0$, let
\begin{equation*}
 d_\theta(\lambda/\nu)=[P_\lambda]p_1^kP_\nu.
\end{equation*}
Iterating \eqref{eq:Da-weighted-Pieri} gives a sum over chains
$\nu=\mu^{(0)}\nearrow\mu^{(1)}\nearrow\cdots
\nearrow\mu^{(k)}=\lambda$ of one-box additions.  The product of the
content factors is $(a)_{\lambda/\nu}^{(\theta)}$ on every such chain,
since the content of a box depends only on its coordinates.  The
remaining sum of products of Pieri coefficients is
$d_\theta(\lambda/\nu)$.  Consequently,
\begin{equation}
 [P_\lambda]\mathsf D_a^kP_\nu
 =(a)_{\lambda/\nu}^{(\theta)}d_\theta(\lambda/\nu).
 \label{eq:Da-chain-factorization}
\end{equation}

Okounkov and Olshanski
\cite[Section~5, equations~(5.1) and~(5.5)]{OkounkovOlshanski1997}
express $d_\theta(\lambda/\nu)$ in terms of shifted Jack polynomials:
\begin{equation}
 \frac{d_\theta(\lambda/\nu)}{k!}
 =\frac{P_\nu^*(\lambda;\theta)}{H_\theta(\lambda)}.
 \label{eq:shifted-Pieri-coefficient-evaluation}
\end{equation}

Dividing \eqref{eq:Da-chain-factorization} by $k!$ and using
\eqref{eq:shifted-Pieri-coefficient-evaluation} gives the degree
$|\nu|+k$ component of \eqref{eq:unit-exponential-Jack-expansion}.
The matrix formula \eqref{eq:T-P-matrix} follows from
$\mathsf T_a(z)=z^{\mathsf E}e^{\mathsf D_a}$.
\end{proof}

Combining Theorem~\ref{thm:exponential-Jack-coefficients} with the flow
formula \eqref{eq:T-flow} recovers the fractional-linear Jack formula of
Lascoux and Warnaar \cite[Corollary~5.3]{LascouxWarnaar2011} in the monic
normalization: for $x=(x_1,\ldots,x_m)$ and $\ell(\nu)\le m$,
\begin{equation*}
 \prod_{i=1}^{m}(1-zx_i)^{-a}
 P_\nu^{(\alpha)}\!\left(
 \frac{zx_1}{1-zx_1},\ldots,\frac{zx_m}{1-zx_m}\right)
 =\sum_{\substack{\lambda\supseteq\nu\\\ell(\lambda)\le m}}
 z^{|\lambda|}A_{\lambda,\nu}^{(\alpha)}(a)P_\lambda^{(\alpha)}(x).
\end{equation*}

\subsection{Universal Jack expansion and operator factorization}

Define the power-sum-to-$P$ transition coefficients by
\begin{equation}
 p_\varrho=
 \sum_{\nu\vdash|\varrho|}
 \Theta_{\varrho,\nu}^{(\alpha)}P_\nu^{(\alpha)}.
 \label{eq:Theta-definition}
\end{equation}
At $\alpha=1$, these coefficients are the ordinary symmetric-group
characters $\Theta_{\varrho,\nu}^{(1)}=\chi^\nu(\varrho)$.
For general $\alpha$, the map formula of Ben Dali and
{\fontencoding{T1}\selectfont Do\l{}\k{e}ga}
\cite[Theorem~1.4]{BenDaliDolega2023} evaluates the power-sum coefficients
of the integral-form Jack polynomial
$J_\nu^{(\alpha)}=\alpha^{|\nu|}H'_\theta(\nu)P_\nu^{(\alpha)}$.
In our normalization, Jack orthogonality gives
\[
 \Theta_{\varrho,\nu}^{(\alpha)}
 =\frac{z_\varrho\alpha^{\ell(\varrho)-|\nu|}}{H_\theta(\nu)}
 [p_\varrho]J_\nu^{(\alpha)}.
\]
For $|\nu|=|\eta|=r$, set
\begin{equation}
 {
 K_{\nu,\eta}^{(\alpha)}(w)
 =\sum_{\varrho\vdash r}\frac{w^{\ell(\varrho)}}{z_\varrho}
 \Theta_{\varrho,\nu}^{(\alpha)}
 \Theta_{\varrho,\eta}^{(\alpha)}.}
 \label{eq:K-definition}
\end{equation}
Set $K_{\nu,\eta}^{(\alpha)}(w)=0$ if $|\nu|\ne|\eta|$.

\begin{lemma}[Universal mixed Jack kernel]
\label{lem:mixed-Jack-kernel}
The mixed kernel \eqref{eq:flow-mixed-kernel} has expansion
\begin{equation*}
 \mathcal K_w(x,y)
 =\sum_{|\nu|=|\eta|}
 K_{\nu,\eta}^{(\alpha)}(w)
 P_\nu^{(\alpha)}(x)P_\eta^{(\alpha)}(y).
\end{equation*}
\end{lemma}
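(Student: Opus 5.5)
The plan is to expand the exponential form of $\mathcal K_w(x,y)$ in power sums and then change basis using the transition coefficients \eqref{eq:Theta-definition}.

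\textbf{Step 1: power-sum expansion.} Since the summands $p_r(x)p_r(y)/r$ commute, the exponential factorizes over $r$:
\[
 \exp\!\Bigl(w\sum_{r\ge1}\frac{p_r(x)p_r(y)}{r}\Bigr)
 =\prod_{r\ge1}\sum_{m_r\ge0}\frac{w^{m_r}\,p_r(x)^{m_r}p_r(y)^{m_r}}{r^{m_r}m_r!}.
\]
Collect the terms with multiplicities $m_r$ into the partition $\varrho=(1^{m_1}2^{m_2}\cdots)$. We have $\sum_r m_r=\ell(\varrho)$ and $\prod_r r^{m_r}m_r!=z_\varrho$. Hence
\[
 \mathcal K_w(x,y)=\sum_{\varrho}\frac{w^{\ell(\varrho)}}{z_\varrho}\,p_\varrho(x)p_\varrho(y),
\]
as an identity in the completed tensor product $\widehat{\mathcal S}\widehat\otimes\widehat{\mathcal S}$. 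Each bidegree $(r,r)$ component is a finite sum over $\varrho\vdash r$, so the expansion is meaningful coefficientwise.

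\textbf{Step 2: change to the Jack basis.} Substitute \eqref{eq:Theta-definition} into both factors:
\[
 p_\varrho(x)=\sum_{\nu\vdash|\varrho|}\Theta_{\varrho,\nu}^{(\alpha)}P_\nu^{(\alpha)}(x),
 \qquad
 p_\varrho(y)=\sum_{\eta\vdash|\varrho|}\Theta_{\varrho,\eta}^{(\alpha)}P_\eta^{(\alpha)}(y).
\]
Interchanging the finite sums at each bidegree gives, at bidegree $(r,r)$,
\[
 \sum_{\nu,\eta\vdash r}\Bigl(\sum_{\varrho\vdash r}\frac{w^{\ell(\varrho)}}{z_\varrho}\Theta_{\varrho,\nu}^{(\alpha)}\Theta_{\varrho,\eta}^{(\alpha)}\Bigr)P_\nu^{(\alpha)}(x)P_\eta^{(\alpha)}(y).
\]
The inner sum is exactly $K_{\nu,\eta}^{(\alpha)}(w)$ as defined in \eqref{eq:K-definition}. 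Because only bidegrees $(r,r)$ occur, pairs with $|\nu|\ne|\eta|$ never appear. This matches the convention $K_{\nu,\eta}^{(\alpha)}(w)=0$ in that case.

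\textbf{Step 3: finite alphabets.} For finite alphabets $x=(x_1,\ldots,x_m)$ and $y=(y_1,\ldots,y_n)$, specialize the identity of Step 2 through the evaluation homomorphisms $\mathcal S\to\mathbb Q(\alpha)[x_1,\ldots,x_m]^{S_m}$ and its analogue in $y$. Then use $\log\prod_{i,j}(1-x_iy_j)^{-w}=w\sum_r p_r(x)p_r(y)/r$, valid as formal power series. Under specialization, the Jack polynomials $P_\nu^{(\alpha)}(x)$ with $\ell(\nu)>m$ vanish, so no further argument is needed.

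There is no serious obstacle. The only point needing care is that the rearrangements are legitimate, and this holds because every bihomogeneous component of $\mathcal K_w$ is a finite sum.
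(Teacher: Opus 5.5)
Your proposal is correct and follows the paper's proof exactly: expand the exponential to get $\sum_\varrho w^{\ell(\varrho)}z_\varrho^{-1}p_\varrho(x)p_\varrho(y)$, then substitute \eqref{eq:Theta-definition} in both alphabets. Your Step~3 on finite alphabets and the bidegree-finiteness remarks are harmless additions that the paper leaves implicit.
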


\begin{proof}
Expanding the exponential in \eqref{eq:flow-mixed-kernel} gives
\[
 \mathcal K_w(x,y)
 =\sum_\varrho\frac{w^{\ell(\varrho)}}{z_\varrho}
 p_\varrho(x)p_\varrho(y).
\]
Substitute \eqref{eq:Theta-definition} in both alphabets.
\end{proof}

\begin{theorem}[Universal Jack expansion and operator factorization]
\label{thm:universal-jack}
\label{thm:universal-sandwich}
For $m,n\in\mathbb Z_{\ge0}$ and independent parameters $u,v,w$, put
\begin{equation}
 \mathsf C_{m,n}(z;u,v,w)
 =\mathsf T_{u+nw}(z)\mathsf K(w)
  \mathsf T_{v+mw}(z)^\dagger.
 \label{eq:universal-sandwich}
\end{equation}
Define the coefficients by the finite sums
\begin{equation}
 C_{\lambda,\mu}^{(\alpha)}(u,v,w;m,n)
 =\sum_{\substack{\nu\subseteq\lambda,\ \eta\subseteq\mu\\
                   |\nu|=|\eta|}}
 A_{\lambda,\nu}^{(\alpha)}(u+nw)
 K_{\nu,\eta}^{(\alpha)}(w)
 A_{\mu,\eta}^{(\alpha)}(v+mw).
 \label{eq:universal-Jack-coefficient}
\end{equation}
As a formal series in $z$, the flow kernel \eqref{eq:universal-two-alphabet-function} has expansion
\begin{equation}
 F_{m,n}
 =\sum_{\substack{\ell(\lambda)\le m\\\ell(\mu)\le n}}
 z^{|\lambda|+|\mu|}
 C_{\lambda,\mu}^{(\alpha)}(u,v,w;m,n)
 P_\lambda^{(\alpha)}(x)P_\mu^{(\alpha)}(y).
 \label{eq:universal-Jack-expansion}
\end{equation}
The same coefficients give the matrix of the operator
\eqref{eq:universal-sandwich} on the Jack Fock space:
\begin{equation}
 [P_\lambda]\mathsf C_{m,n}(z;u,v,w)P_\mu
 =z^{|\lambda|+|\mu|}
 \frac{C_{\lambda,\mu}^{(\alpha)}(u,v,w;m,n)}{b_\mu(\alpha)}.
 \label{eq:universal-sandwich-matrix}
\end{equation}
The matrix formula holds for all partitions $\lambda,\mu$.
\end{theorem}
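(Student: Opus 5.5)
We combine three ingredients already established: the flow factorization \eqref{eq:universal-kernel-flow}, the Jack expansion of the mixed kernel in Lemma~\ref{lem:mixed-Jack-kernel}, and the matrix coefficients \eqref{eq:T-P-matrix} of Theorem~\ref{thm:exponential-Jack-coefficients}.

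\emph{Step 1: the expansion \eqref{eq:universal-Jack-expansion}.} Apply $\mathsf T^{(x)}_{u+nw}(z)\mathsf T^{(y)}_{v+mw}(z)$ termwise to the expansion
\[
 \mathcal K_w=\sum_{|\nu|=|\eta|}K_{\nu,\eta}^{(\alpha)}(w)P_\nu(x)P_\eta(y).
\]
This is legitimate coefficientwise in $z$. Indeed, $[z^d]$ of $\mathsf T_a(z)P_\nu$ vanishes unless $|\nu|\le d$, so only finitely many $(\nu,\eta)$ contribute at each power of $z$. Theorem~\ref{thm:exponential-Jack-coefficients} gives
\[
 \mathsf T_{a}(z)P_\nu=\sum_{\lambda\supseteq\nu}z^{|\lambda|}A_{\lambda,\nu}^{(\alpha)}(a)P_\lambda
\]
in infinitely many variables. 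Collecting the coefficient of $P_\lambda(x)P_\mu(y)$ produces exactly $z^{|\lambda|+|\mu|}C^{(\alpha)}_{\lambda,\mu}$.

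To pass to the finite alphabets $x=(x_1,\dots,x_m)$ and $y=(y_1,\dots,y_n)$, restrict via the specialization homomorphism. This specialization sends $P_\lambda$ to $0$ when $\ell(\lambda)>m$ and leaves the rest linearly independent. The flow formula \eqref{eq:T-flow} commutes with specialization, since both sides are given by the same substitution. Hence the restricted identity is \eqref{eq:universal-kernel-flow}, and we obtain the sum over $\ell(\lambda)\le m$, $\ell(\mu)\le n$.

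One subtlety: the parameters $a_{\mathrm L}=u+nw$ and $a_{\mathrm R}=v+mw$ depend on $m,n$. They must be fixed \emph{before} passing to the symmetric-function level. This is consistent because \eqref{eq:universal-kernel-flow} was proved for the given $m,n$ with these shifts. The coefficients $C_{\lambda,\mu}$ are then defined universally as formal expressions.

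\emph{Step 2: the matrix formula \eqref{eq:universal-sandwich-matrix}.} Compute $\mathsf C_{m,n}P_\mu$ in three stages, using the Jack adjoint.

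First, by \eqref{eq:PQ-normalization} the $P$-basis is orthogonal with $\langle P_\lambda,P_\lambda\rangle=b_\lambda^{-1}$. Therefore the adjoint of an operator with matrix $M_{\lambda\nu}=[P_\lambda]\mathcal O P_\nu$ has matrix
\[
 [P_\eta]\mathcal O^\dagger P_\mu=M_{\mu\eta}\,b_\eta/b_\mu .
\]
This gives $[P_\eta]\mathsf T_{v+mw}(z)^\dagger P_\mu=z^{|\mu|}A_{\mu,\eta}(v+mw)\,b_\eta/b_\mu$. Since $\mathsf T^\dagger$ lowers degree, this is a finite sum.

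Second, the matrix of $\mathsf K(w)$ in the $P$-basis must be identified. Using \eqref{eq:Theta-definition} together with the dual relation
\[
 P_\eta=\sum_\varrho \frac{b_\eta\,\Theta_{\varrho,\eta}}{z_\varrho\alpha^{\ell(\varrho)}}\,p_\varrho,
\]
which follows from orthogonality and \eqref{eq:jack-scalar-product}, we get
\[
 [P_\nu]\mathsf K(w)P_\eta=\sum_\varrho \frac{w^{\ell(\varrho)}}{z_\varrho}\Theta_{\varrho,\nu}\Theta_{\varrho,\eta}\,b_\eta=K_{\nu,\eta}(w)\,b_\eta .
\]
This gives $[P_\nu]\mathsf K(w)P_\eta=K_{\nu,\eta}(w)\,b_\eta$.

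Third, compose with $\mathsf T_{u+nw}(z)$ on the left. The factors $b_\eta$ cancel, leaving $z^{|\lambda|+|\mu|}C^{(\alpha)}_{\lambda,\mu}/b_\mu$. The formula holds for all $\lambda,\mu$ because no finite-alphabet restriction enters here.

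\emph{Main obstacle.} The delicate points are bookkeeping rather than conceptual. The key one is the dual expansion of $P_\eta$ in power sums, and in particular getting the factor $\alpha^{\ell(\varrho)}$ right so that it cancels against $\mathsf K(w)=(\alpha w)^{N_{\mathrm{osc}}}$. I would verify this first, by checking the orthogonality relation
\[
 \sum_\varrho \Theta_{\varrho,\nu}\Theta_{\varrho,\eta}\,z_\varrho^{-1}\alpha^{-\ell(\varrho)}=\delta_{\nu\eta}/b_\nu .
\]
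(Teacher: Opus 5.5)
Your route is the paper's. In Step~1 you expand $\mathcal K_w$ by Lemma~\ref{lem:mixed-Jack-kernel}, apply \eqref{eq:T-P-matrix} in each alphabet of \eqref{eq:universal-kernel-flow}, and then specialize to finite alphabets. In Step~2 you form the same three-factor matrix product. Your adjoint formula $[P_\eta]\mathcal O^\dagger P_\mu=M_{\mu\eta}\,b_\eta/b_\mu$ is also correct. The gap is in the middle factor, exactly where you suspected trouble: your dual power-sum expansion of $P_\eta$ has $b_\eta$ on the wrong side.

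Pair $p_\varrho=\sum_\nu\Theta_{\varrho,\nu}^{(\alpha)}P_\nu$ with $P_\eta$ and use $\langle P_\eta,P_\eta\rangle_\alpha=1/b_\eta$ from \eqref{eq:PQ-normalization}. This gives $z_\varrho\alpha^{\ell(\varrho)}[p_\varrho]P_\eta=\Theta_{\varrho,\eta}^{(\alpha)}/b_\eta$, so
\[
 [p_\varrho]P_\eta=\frac{\Theta_{\varrho,\eta}^{(\alpha)}}{z_\varrho\alpha^{\ell(\varrho)}b_\eta},
 \qquad
 [P_\nu]\mathsf K(w)P_\eta=\frac{K_{\nu,\eta}^{(\alpha)}(w)}{b_\eta},
\]
not $K_{\nu,\eta}^{(\alpha)}(w)\,b_\eta$. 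Your coefficient $b_\eta\Theta_{\varrho,\eta}^{(\alpha)}/(z_\varrho\alpha^{\ell(\varrho)})$ is actually the coefficient of $b_\eta Q_\eta=b_\eta^2P_\eta$.

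With your version, the composition becomes
\[
 z^{|\lambda|+|\mu|}\sum_{\nu,\eta}A_{\lambda,\nu}^{(\alpha)}(u+nw)\,K_{\nu,\eta}^{(\alpha)}(w)\,b_\eta^2\,A_{\mu,\eta}^{(\alpha)}(v+mw)\,\frac{1}{b_\mu}.
\]
Since $b_\eta$ depends on the summation index, nothing cancels. So the cancellation you assert does not follow from the formulas you wrote, and the proof of \eqref{eq:universal-sandwich-matrix} fails as stated. Only with $1/b_\eta$ do the factors $b_\eta^{-1}$ and $b_\eta/b_\mu$ combine to $b_\mu^{-1}$.

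The orthogonality relation you planned to check is inverted in the same way. The Jack Cauchy identity $\sum_\lambda P_\lambda(x)Q_\lambda(y)=\exp\bigl(\alpha^{-1}\sum_r p_r(x)p_r(y)/r\bigr)$ gives
\[
 \sum_\varrho\Theta_{\varrho,\nu}^{(\alpha)}\Theta_{\varrho,\eta}^{(\alpha)}z_\varrho^{-1}\alpha^{-\ell(\varrho)}=\delta_{\nu\eta}\,b_\nu.
\]
For example, at $\nu=\eta=(1)$ the left side is $\alpha^{-1}=\theta=b_{(1)}$, whereas $1/b_{(1)}=\alpha$. So your check would have failed rather than confirmed your formula. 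Once you correct this one normalization, your Step~2 coincides with the paper's proof.
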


\begin{proof}
Expand $\mathcal K_w$ by Lemma~\ref{lem:mixed-Jack-kernel} and apply
\eqref{eq:T-P-matrix} in both alphabets of
\eqref{eq:universal-kernel-flow}. This gives
\eqref{eq:universal-Jack-coefficient} and
\eqref{eq:universal-Jack-expansion}. The length restrictions follow by
specializing symmetric functions to $m$ and $n$ variables.

By \eqref{eq:PQ-normalization}, the adjoint matrix is
\begin{equation}
 [P_\eta]\mathsf T_a(z)^\dagger P_\mu
 =z^{|\mu|}A_{\mu,\eta}^{(\alpha)}(a)
   \frac{b_\eta(\alpha)}{b_\mu(\alpha)}.
 \label{eq:T-adjoint-P-matrix}
\end{equation}
Pairing \eqref{eq:Theta-definition} with $P_\eta$ gives
\[
 [p_\varrho]P_\eta
 =\frac{\Theta_{\varrho,\eta}^{(\alpha)}}
 {z_\varrho\alpha^{\ell(\varrho)}b_\eta(\alpha)}.
\]
Hence \eqref{eq:K-operator-definition} yields, for $|\nu|=|\eta|$,
\begin{equation}
 [P_\nu]\mathsf K(w)P_\eta
 =\frac{1}{b_\eta(\alpha)}
 \sum_{\varrho\vdash|\eta|}\frac{w^{\ell(\varrho)}}{z_\varrho}
 \Theta_{\varrho,\nu}^{(\alpha)}\Theta_{\varrho,\eta}^{(\alpha)}
 =\frac{K_{\nu,\eta}^{(\alpha)}(w)}{b_\eta(\alpha)}.
 \label{eq:K-P-matrix}
\end{equation}
Both sides vanish when $|\nu|\ne|\eta|$, since $\mathsf K(w)$ preserves degree.
Multiplying the matrices given by \eqref{eq:T-P-matrix}, \eqref{eq:K-P-matrix}, and
\eqref{eq:T-adjoint-P-matrix} gives \eqref{eq:universal-sandwich-matrix}:
the intermediate factors $b_\eta^{-1}$ and $b_\eta/b_\mu$ cancel to $b_\mu^{-1}$,
and the remaining sum is \eqref{eq:universal-Jack-coefficient}.
The adjoint coefficient $[z^d]\mathsf T_a(z)^\dagger$ depends only on the
degree-$d$ input component and has output degrees at most $d$.
In the product, $[z^N]$ involves only $|\lambda|+|\mu|=N$,
so it is well defined on $\widehat{\mathcal S}$ coefficientwise.
\end{proof}

\subsection{Application to the rank-one irregular conformal block}
\label{subsec:screening-kernel-specialization}
\label{subsec:general-coefficient-formula}

The Vandermonde exponent in \eqref{eq:screening-weight} fixes the
Jack parameter to
\begin{equation}
 \alpha=\frac{2}{\gamma_+^2},
 \qquad \theta=\frac{\gamma_+^2}{2}.
 \label{eq:screening-Jack-parameter}
\end{equation}
The kernel \eqref{eq:jack-screening-kernel} is the specialization
of \eqref{eq:universal-two-alphabet-function} given by
\begin{equation}
 u=-\gamma_+\gamma_\infty,\qquad
 v=-\gamma_+\gamma_z,\qquad w=-\gamma_+^2.
 \label{eq:screening-universal-specialization}
\end{equation}
At this specialization, $\alpha w=-2$, so the middle operator is
$\mathsf K(-\gamma_+^2)=(-2)^{N_{\mathrm{osc}}}$.
Consequently, Theorem~\ref{thm:universal-jack} gives
\begin{equation}
 G(z;x,y)=
 \sum_{\substack{\ell(\lambda)\le m\\\ell(\mu)\le n}}
 z^{|\lambda|+|\mu|}
 C_{\lambda,\mu}^{(\alpha)}
       (-\gamma_+\gamma_\infty,-\gamma_+\gamma_z,-\gamma_+^2;m,n)
 P_\lambda^{(\alpha)}(x)
 P_\mu^{(\alpha)}(y).
 \label{eq:screening-Jack-expansion}
\end{equation}
We now evaluate the integrals of these Jack polynomials against the weight
$W(x,y)=W_x(x)W_y(y)$.

We recall the Laguerre limit of the Selberg integral
\cite[equation~(1.1)]{ForresterWarnaar2008}:
\begin{equation}
 \begin{aligned}
 Z_M(a,\theta;q)
 &:=
 \int_{0<t_1<\cdots<t_M<\infty}
 \prod_{i<j}(t_j-t_i)^{2\theta}
 \prod_{i=1}^{M}t_i^a e^{-qt_i}\,d^Mt\\
 &=
 \frac{q^{-M(a+1)-\theta M(M-1)}}{M!}
 \prod_{j=1}^{M}
 \frac{\Gamma(1+j\theta)\Gamma(a+1+(j-1)\theta)}
      {\Gamma(1+\theta)}.
 \end{aligned}
 \label{eq:Laguerre-Selberg-normalization}
\end{equation}
This integral is absolutely convergent and nonzero for
$\operatorname{Re}a>-1$, $\operatorname{Re}\theta\ge0$,
and $\operatorname{Re}q>0$, with $Z_0=1$.
The powers of the integration variables are defined with real logarithms,
and the branch of $\log q$ is real for $q>0$.
The factor $1/M!$ accounts for the ordered chamber.

The Laguerre limit of Kadell's Selberg--Jack formula
\cite{Kadell1997} (see also \cite[Theorem~3]{Warnaar2007}),
followed by scaling, gives
\[
 \begin{aligned}
 &\frac{1}{Z_M(a,\theta;q)}
 \int_{0<t_1<\cdots<t_M<\infty}
 P_\lambda^{(\alpha)}(t)
 \prod_{i<j}(t_j-t_i)^{2\theta}
 \prod_{i=1}^{M}t_i^a e^{-qt_i}\,d^Mt\\
 &\qquad=
 q^{-|\lambda|}P_\lambda^{(\alpha)}(1^M)
 \bigl(a+1+(M-1)\theta\bigr)_\lambda^{(\theta)},
 \end{aligned}
\]
where $\alpha=\theta^{-1}$, $\theta\ne0$ is generic,
and $\ell(\lambda)\le M$.

The same normalized formula holds on the ordered ray
$t_i=e^{\mathrm i\chi}u_i$, $0<u_1<\cdots<u_M<\infty$, provided
$\operatorname{Re}(qe^{\mathrm i\chi})>0$, with logarithms continued
from the positive ray. Indeed, the common weight and Jacobian
factors cancel, while homogeneity gives
$e^{\mathrm i\chi|\lambda|}(qe^{\mathrm i\chi})^{-|\lambda|}=q^{-|\lambda|}$.
Applying the normalized formula to $W_x$ and $W_y$, with
Laplace coefficients $-\gamma_+c_1$ and $\gamma_+c_1$,
respectively, gives
\begin{align}
 \frac{\displaystyle\int_{\Gamma_x^{(m)}(\varphi)}
 P_\lambda^{(\alpha)}(x)W_x(x)\,\mathrm d^mx}{Z_x}
 &=(-\gamma_+c_1)^{-|\lambda|}
   \frac{(m\theta)_\lambda^{(\theta)}
         (1+\gamma_+\gamma_z+(m-1)\theta)_\lambda^{(\theta)}}{H'_\theta(\lambda)},
 \label{eq:screening-x-Jack-average}\\
 \frac{\displaystyle\int_{\Gamma_y^{(n)}(\psi)}
 P_\mu^{(\alpha)}(y)W_y(y)\,\mathrm d^ny}{Z_y}
 &=(\gamma_+c_1)^{-|\mu|}
   \frac{(n\theta)_\mu^{(\theta)}
         (1+\gamma_+\gamma_\infty+(n-1)\theta)_\mu^{(\theta)}}{H'_\theta(\mu)}.
 \label{eq:screening-y-Jack-average}
\end{align}
These evaluations extend meromorphically in the parameters. Here we have substituted
$P_\lambda^{(\alpha)}(1^M)
=(M\theta)_\lambda^{(\theta)}/H'_\theta(\lambda)$.

We now remove the integer screening restriction.
Take $p,p_{\mathrm t},\gamma_z,\gamma_\infty\in\mathbb C$ and
$c_1\ne0$, and put $\delta=p_{\mathrm t}-p$.
The source parameters $\Lambda$ are given by
\eqref{eq:rank-one-free-field-parameters}; the target parameters
$\Lambda'$ are obtained by replacing $p$ with $p_{\mathrm t}$.
The prefactor parameters of \eqref{eq:rank-one-parameters} become
\begin{equation}
 \beta=-c_1\delta,\qquad
 \kappa=\delta(p_{\mathrm t}-2\rho)-2\Delta(\gamma_z).
 \label{eq:generic-prefactor-parameters}
\end{equation}
Substituting the screening conditions $p_{\mathrm t}=p+\gamma_z+m\gamma_+$
and $2\rho-p_{\mathrm t}-\gamma_\infty=n\gamma_+$ into
\eqref{eq:screening-x-Jack-average} and
\eqref{eq:screening-y-Jack-average} expresses the content parameters
in terms of charges.
For general charges, we use the same Laplace scales
$-\gamma_+c_1$ and $\gamma_+c_1$, and define
\begin{equation}
 \begin{aligned}
 x_-&=\frac{\gamma_+}{2}(\delta-\gamma_z),&
 x_+&=\frac{\gamma_+}{2}(\delta+\gamma_z-2\rho),\\
 y_-&=\frac{\gamma_+}{2}(2\rho-p_{\mathrm t}-\gamma_\infty),&
 y_+&=\frac{\gamma_+}{2}(\gamma_\infty-p_{\mathrm t}),
 \end{aligned}
 \label{eq:generic-cap-parameters}
\end{equation}
Rewriting $u+nw$ and $v+mw$ under the specialization
\eqref{eq:screening-universal-specialization} and the screening
conditions above gives the following definitions for general charges:
\begin{equation}
 a_{\mathrm L}=\gamma_+(p_{\mathrm t}-2\rho),\qquad
 a_{\mathrm R}=-\gamma_+\delta.
 \label{eq:generic-core-parameters}
\end{equation}
Substituting $u+nw=a_{\mathrm L}$, $v+mw=a_{\mathrm R}$, and
$w=-\gamma_+^2$ into \eqref{eq:universal-Jack-coefficient},
we define for general charges
\begin{equation*}
 \mathcal C_{\lambda,\mu}
 =\sum_{\substack{\nu\subseteq\lambda,\ \eta\subseteq\mu\\
                  |\nu|=|\eta|}}
 A_{\lambda,\nu}^{(\alpha)}(a_{\mathrm L})
 K_{\nu,\eta}^{(\alpha)}(-\gamma_+^2)
 A_{\mu,\eta}^{(\alpha)}(a_{\mathrm R}).
\end{equation*}

\begin{theorem}[General Jack formula for the rank-one irregular conformal block]
\label{thm:screening-Jack-evaluation}
For generic parameters as above, the rank-one irregular conformal
block \eqref{eq:rank-one-irregular-block}, with
$\Delta_\infty=\Delta(\gamma_\infty)$ and
$\Delta=\Delta(\gamma_z)$, satisfies
\begin{equation}
 \begin{aligned}
 \mathcal B(z)
 ={}&z^\kappa e^{\beta/z}
 \sum_{\lambda,\mu}z^{|\lambda|+|\mu|}\mathcal C_{\lambda,\mu}\\
 &\times (-\gamma_+c_1)^{-|\lambda|}
   \frac{(x_-)_\lambda^{(\theta)}(x_+)_\lambda^{(\theta)}}
        {H'_\theta(\lambda)}
 \ (\gamma_+c_1)^{-|\mu|}
   \frac{(y_-)_\mu^{(\theta)}(y_+)_\mu^{(\theta)}}
        {H'_\theta(\mu)}.
 \end{aligned}
 \label{eq:screening-evaluated-Jack-expansion}
\end{equation}
Both sums range over all partitions, with no length restrictions.
At each total degree the sum is finite.
\end{theorem}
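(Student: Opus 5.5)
First treat the integer-screening case and then extend by polynomiality. Fix $m,n\in\mathbb Z_{\ge0}$ and parameters in the convergence domain \eqref{eq:screening-convergence-domain}, with $p_{\mathrm t}=p+\gamma_z+m\gamma_+$ and $\gamma_\infty=2\rho-p_{\mathrm t}-n\gamma_+$. Then $\beta=-c_1(\gamma_z+m\gamma_+)=-c_1\delta$, and one checks that $\kappa$ from \eqref{eq:screening-integral-transformed} agrees with \eqref{eq:generic-prefactor-parameters} (this is also forced by \eqref{eq:rank-one-parameters}). By Corollary~\ref{cor:screening-block-identification}, $b_j$ is the normalized $W$-average of $[z^j]G$. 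Insert the Jack expansion \eqref{eq:screening-Jack-expansion}; at each fixed $j$ it is a finite sum over $|\lambda|+|\mu|=j$. The averages factor as $W=W_xW_y$ and $Z_xZ_y$, so \eqref{eq:screening-x-Jack-average} and \eqref{eq:screening-y-Jack-average} give $b_j$ as a finite sum of $C^{(\alpha)}_{\lambda,\mu}(-\gamma_+\gamma_\infty,-\gamma_+\gamma_z,-\gamma_+^2;m,n)$ times the two Laguerre factors, restricted to $\ell(\lambda)\le m$ and $\ell(\mu)\le n$.

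Next I match parameters. Under the screening conditions, $u+nw=-\gamma_+(\gamma_\infty+n\gamma_+)=\gamma_+(p_{\mathrm t}-2\rho)=a_{\mathrm L}$ and $v+mw=-\gamma_+(\gamma_z+m\gamma_+)=-\gamma_+\delta=a_{\mathrm R}$, so $C_{\lambda,\mu}^{(\alpha)}=\mathcal C_{\lambda,\mu}$. For the content parameters, using $\theta=\gamma_+^2/2$ gives $m\theta=\tfrac{\gamma_+}{2}(\delta-\gamma_z)=x_-$. Using $\gamma_+^2-2\rho\gamma_+=2$ gives
\[
1+\gamma_+\gamma_z+(m-1)\theta=\tfrac{\gamma_+}{2}(\delta+\gamma_z-2\rho)=x_+ .
\]
Similarly $n\theta=y_-$ and $1+\gamma_+\gamma_\infty+(n-1)\theta=y_+$. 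The length restrictions can then be dropped. If $\ell(\lambda)>m$, then $(x_-)_\lambda^{(\theta)}=(m\theta)^{(\theta)}_\lambda$ contains the box $(m+1,1)$ with factor $m\theta-\theta m=0$. Likewise $(y_-)_\mu^{(\theta)}$ vanishes when $\ell(\mu)>n$. Hence \eqref{eq:screening-evaluated-Jack-expansion} holds for all $(m,n)$ and for parameters in the convergence domain. Note that the condition on $\operatorname{Re}(\gamma_+\gamma_\infty)$ can be met for each fixed order $N$.

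The main obstacle is the extension to generic parameters. The left side $b_j$ is a polynomial in $c,\Delta,\Delta_\infty,\Lambda_1,\beta,\Lambda_2^{-1}$ by Theorem~\ref{thm:irregular-vertex-existence}, and, after pairing, in $\Delta_\infty$. Composed with the free-field parametrization, it becomes a rational function of $(\gamma_+,c_1,p,p_{\mathrm t},\gamma_z,\gamma_\infty)$, with poles only at $c_1=0$. The right side at order $j$ is a finite sum that is rational in the same variables, with possible poles only from $H_\theta$ and $H'_\theta$ at special $\theta$. Both sides agree on the set where $\delta-\gamma_z\in\gamma_+\mathbb Z_{\ge0}$ and $2\rho-p_{\mathrm t}-\gamma_\infty\in\gamma_+\mathbb Z_{\ge0}$, intersected with an open convergence region in the remaining variables. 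For fixed $\gamma_+$, the set of $(m,n)$ is infinite. Holding the other variables in an open set, each side is a polynomial in $x_-=m\theta$ and in $y_-=n\theta$, and the right side is polynomial in $\delta$ and $\gamma_\infty$ as well. Equality at infinitely many $m$ and infinitely many $n$ therefore gives identity in those two directions, and then the open-set agreement propagates by analytic continuation to all generic parameters. Some care is needed to ensure the convergence domain is nonempty for infinitely many $m,n$. Increasing $m$ and $n$ moves $\gamma_z$ and $\gamma_\infty$ only through $\delta$ and $p_{\mathrm t}$, which are free, so one can fix $\gamma_z$ and $\gamma_\infty$ satisfying the inequalities and then vary $m,n$ by adjusting $p$ and $p_{\mathrm t}$. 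I would check this bookkeeping carefully, since it is where the argument could fail.
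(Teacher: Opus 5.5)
Your proposal is correct and follows the paper's proof essentially step by step. You treat the integer-screening case via Corollary~\ref{cor:screening-block-identification}, the Jack expansion \eqref{eq:screening-Jack-expansion} and the two Laguerre--Jack averages. You then use the same parameter matching and remove the length restrictions through the vanishing box $(m+1,1)$ of $(m\theta)_\lambda^{(\theta)}$. Finally you extend by polynomiality in two affine coordinates, using vanishing on $\mathbb Z_{\ge0}^2$ times a fixed nonempty convergence domain.

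Your coordinates $x_-=\theta s$ and $y_-=\theta t$ are only rescalings of the paper's $s,t$. Your observation that the convergence inequalities involve only $(\gamma_+,\gamma_z,\gamma_\infty,c_1)$ is exactly what makes that domain independent of $(m,n)$, which settles the bookkeeping you were worried about.
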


\begin{proof}
First impose the screening specialization
\begin{equation*}
 p_{\mathrm t}=p+\gamma_z+m\gamma_+,\qquad
 2\rho-p_{\mathrm t}-\gamma_\infty=n\gamma_+,
 \qquad m,n\in\mathbb Z_{\ge0}.
\end{equation*}
Since $\gamma_+(\gamma_+-2\rho)=2$, the content parameters become
\[
 x_-=m\theta,\quad x_+=1+\gamma_+\gamma_z+(m-1)\theta,\qquad
 y_-=n\theta,\quad y_+=1+\gamma_+\gamma_\infty+(n-1)\theta.
\]
Moreover, $a_{\mathrm L}=u+nw$ and $a_{\mathrm R}=v+mw$ under
\eqref{eq:screening-universal-specialization}. Hence
\[
 \mathcal C_{\lambda,\mu}
 =C_{\lambda,\mu}^{(\alpha)}(u,v,w;m,n).
\]
The factors $(m\theta)_\lambda^{(\theta)}$ and
$(n\theta)_\mu^{(\theta)}$ vanish for
$\ell(\lambda)>m$ and $\ell(\mu)>n$, respectively.
For each $j\geq0$, substitute the coefficient of $z^j$ in
\eqref{eq:screening-Jack-expansion} into the integral formula
\eqref{eq:screening-block-coefficient-formula} of
Corollary~\ref{cor:screening-block-identification}.
Evaluating the resulting normalized Jack integrals by
\eqref{eq:screening-x-Jack-average} and
\eqref{eq:screening-y-Jack-average} proves
\eqref{eq:screening-evaluated-Jack-expansion} on the screening specialization.

To extend the identity, fix $j\geq0$ and consider the coefficient $b_j$
of $z^j$ in $z^{-\kappa}e^{-\beta/z}\mathcal B(z)$.
The polynomial PBW assertion of Theorem~\ref{thm:irregular-vertex-existence},
together with the highest-weight pairing, shows that $b_j$ is polynomial
in the irregular parameters, the external weights, and $\Lambda_2^{-1}$. Since $\Lambda_2=c_1^2/2$ and
$2\rho=\gamma_+-2/\gamma_+$, it is polynomial in
$p,p_{\mathrm t},\gamma_z,\gamma_\infty$ over
$\mathbb Q(\gamma_+,c_1)$. The finite Jack formula on the right has
the same polynomial dependence.

Introduce the independent affine coordinates
\[
 s=\frac{p_{\mathrm t}-p-\gamma_z}{\gamma_+},\qquad
 t=\frac{2\rho-p_{\mathrm t}-\gamma_\infty}{\gamma_+}
\]
in place of $p,p_{\mathrm t}$. For every $m,n\in\mathbb Z_{\ge0}$, the screening calculation shows that
the difference of the two coefficients at $s=m,t=n$ vanishes on a
nonempty convergence domain in $(\gamma_z,\gamma_\infty,\gamma_+,c_1)$.
Hence the difference at $s=m,t=n$ is the zero polynomial in
$\gamma_z,\gamma_\infty$ over $\mathbb Q(\gamma_+,c_1)$.
This proves \eqref{eq:screening-evaluated-Jack-expansion} coefficientwise.
\end{proof}

\begin{corollary}[Combinatorial formula at $c=1$]
\label{cor:c-one-combinatorial-formula}
Conjecture~\ref{conj:c-one-combinatorial-formula} holds. Explicitly, with
$S_{\lambda,\mu}$, $U_{\lambda/\nu}$, and $V_{\mu/\eta}$ as in
Section~\ref{subsec:combinatorial-conjecture}, one has
\begin{equation*}
 \begin{aligned}
 &\mathcal B(z;1,\theta_0^2,\theta_t^2,\theta_*,\tfrac14,\beta)\\
 &\quad=z^{-2\theta_t^2-2\beta(\theta_*-\beta)}e^{\beta/z}
  \sum_{\lambda,\mu\in\mathbb Y}z^{|\lambda|+|\mu|}
  S_{\lambda,\mu}
  \sum_{\substack{\nu\subseteq\lambda,\ \eta\subseteq\mu\\
                   |\nu|=|\eta|}}
       (-1)^{|\nu|}c_{\lambda,\mu}^{\nu,\eta}
       U_{\lambda/\nu}V_{\mu/\eta},
 \end{aligned}
\end{equation*}
where $c_{\lambda,\mu}^{\nu,\eta}\in\mathbb Z_{\ge0}$ are independent of
$\theta_0,\theta_t,\theta_*,\beta$.
\end{corollary}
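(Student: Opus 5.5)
We specialize Theorem~\ref{thm:screening-Jack-evaluation} to $c=1$ and match it term by term with Conjecture~\ref{conj:c-one-combinatorial-formula}.

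\textbf{Parameter dictionary.} At $c=1$ we take $\rho=0$. The condition $\tfrac12\gamma_+(\gamma_+-2\rho)=1$ then gives $\gamma_+=\sqrt2$, so $\theta=\alpha=1$.

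\begin{itemize}
\item From \eqref{eq:rank-one-free-field-parameters}, $\Lambda_2=\tfrac14$ forces $c_1=1/\sqrt2$. Hence $\gamma_+c_1=1$.
\item $\Delta=\theta_t^2$ gives $\gamma_z=\sqrt2\,\theta_t$, and $\Delta_\infty=\theta_0^2$ gives $\gamma_\infty=\sqrt2\,\theta_0$.
\item $\Lambda_1=\theta_*$ gives $p=\sqrt2\,\theta_*$.
\item $\beta=-c_1\delta$ gives $\delta=-\sqrt2\,\beta$, so $p_{\mathrm t}=\sqrt2(\theta_*-\beta)$.
\end{itemize}

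Substituting into \eqref{eq:generic-prefactor-parameters}--\eqref{eq:generic-core-parameters} yields
\[
x_\mp=-\beta\mp\theta_t,\quad y_\mp=-(\theta_*-\beta)\mp\theta_0,\quad a_{\mathrm L}=2(\theta_*-\beta),\quad a_{\mathrm R}=2\beta,
\]
and $\kappa=-2\theta_t^2-2\beta(\theta_*-\beta)$.

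At $\theta=1$ the content of $(i,j)$ is $j-i$, and $H'_1=H_1$ is the ordinary hook product $\prod h$. Therefore
\[
(x_-)_\lambda(x_+)_\lambda=\prod_{(i,j)\in\lambda}\bigl((\beta+i-j)^2-\theta_t^2\bigr),
\]
and similarly for $\mu$. Moreover $(a_{\mathrm L})_{\lambda/\nu}=(-1)^{|\lambda|-|\nu|}U_{\lambda/\nu}$ and $(a_{\mathrm R})_{\mu/\eta}=(-1)^{|\mu|-|\eta|}V_{\mu/\eta}$.

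Each $A$-factor contributes $P^*_\nu(\lambda;1)/H(\lambda)$. Together with the Laguerre factor $1/H'(\lambda)$, this produces the $h_\lambda^{-2}$ in $S_{\lambda,\mu}$.

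\textbf{Sign bookkeeping.} The Laplace scales contribute $(-1)^{|\lambda|}$. The $A$-factors contribute $(-1)^{|\lambda|+|\mu|-|\nu|-|\eta|}$. Since $|\nu|=|\eta|$, the product of these is $(-1)^{|\mu|}$, which is the sign in $S_{\lambda,\mu}$. The remaining $(-1)^{|\nu|}$ must come from the kernel.

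\textbf{The kernel at $w=-2$.} At $\alpha=1$ and $w=-\gamma_+^2=-2$ we have
\[
\mathcal K_{-2}=\prod_{i,j}(1-x_iy_j)^2.
\]
The dual Cauchy identity gives $\prod(1-x_iy_j)=\sum_a(-1)^{|a|}s_a(x)s_{a'}(y)$. Squaring and expanding $s_as_b$ and $s_{a'}s_{b'}$ by Littlewood--Richardson gives
\[
K^{(1)}_{\nu,\eta}(-2)=(-1)^{|\nu|}\sum_{|a|+|b|=|\nu|}\mathrm{LR}^\nu_{a,b}\mathrm{LR}^\eta_{a',b'}.
\]
This supplies the missing sign and yields the formula for $c^{\nu,\eta}_{\lambda,\mu}$ stated in the introduction.

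\textbf{Nonnegativity and integrality.} It remains to show $P^*_\nu(\lambda;1)\in\mathbb Z_{\ge0}$.
\begin{itemize}
\item For nonnegativity, use \eqref{eq:shifted-Pieri-coefficient-evaluation} at $\theta=1$. There $d_1(\lambda/\nu)=f^{\lambda/\nu}$ counts standard skew tableaux, so $P^*_\nu(\lambda;1)=H(\lambda)f^{\lambda/\nu}/k!\ge0$.
\item For integrality, use Okounkov--Olshanski's tableau formula for shifted Schur functions. It expresses $s^*_\nu$ as a polynomial with integer coefficients in the shifted variables, so its values at the integers $\lambda_i$ are integers.
\end{itemize}

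\textbf{Main obstacle.} The hardest part is this last integrality step, together with making the normalization precise. One must check that $P^*_\nu$ normalized by $P^*_\nu(\nu;1)=H(\nu)$ coincides exactly with the Okounkov--Olshanski $s^*_\nu$, which satisfies $s^*_\nu(\nu)=H(\nu)$.

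If this identification is awkward, a fallback is the formula
\[
s^*_\nu(\lambda)=\frac{n!}{(n-k)!}\,\frac{\dim(\lambda/\nu)}{\dim\lambda}\qquad(n=|\lambda|),
\]
which reduces integrality to the divisibility statement $k!\mid H(\lambda)f^{\lambda/\nu}$. Here one would write $H(\lambda)/H(\nu)$ and use that $f^{\lambda/\nu}$ counts chains in Young's lattice.

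Finally, specializing the generic identity is legitimate. The coefficients are polynomial, and the factors $H'_\theta$ are nonvanishing at $\theta=1$, so Theorem~\ref{thm:screening-Jack-evaluation} specializes coefficientwise.
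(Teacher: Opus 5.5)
Your proposal is correct and follows the paper's proof. You specialize Theorem~\ref{thm:screening-Jack-evaluation} at $\theta=1$, absorb the Laplace-scale and skew-Pochhammer signs into $(-1)^{|\mu|}$, and evaluate $K^{(1)}_{\nu,\eta}(-2)$ by squaring the dual Cauchy identity and expanding with Littlewood--Richardson coefficients. Nonnegativity and integrality of $P^*_\nu(\lambda;1)=H(\lambda)f^{\lambda/\nu}/(|\lambda|-|\nu|)!$ then come from the Okounkov--Olshanski dimension and tableau formulas; you obtain the dimension formula from \eqref{eq:shifted-Pieri-coefficient-evaluation} at $\theta=1$, which is equivalent to what the paper cites.

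One small slip, in your fallback only: there the $k$ in $n!/(n-k)!$ is $|\nu|$, so the required divisibility is $(|\lambda|-|\nu|)!\mid H(\lambda)f^{\lambda/\nu}$, not $|\nu|!$. This does not affect your main argument.
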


\begin{proof}
At $c=1$, one has $\rho=0$ and $\alpha=\theta=1$.
Specializing Theorem~\ref{thm:screening-Jack-evaluation} and using the
parameters of \eqref{eq:c-one-parameters} gives the displayed expansion.
In terms of the standard tableau numbers $f^{\lambda/\nu}$ and
Littlewood--Richardson coefficients $\mathrm{LR}_{a,b}^{\nu}$,
we obtain, for $\nu\subseteq\lambda$, $\eta\subseteq\mu$, and $|\nu|=|\eta|=r$,
\begin{equation}
 c_{\lambda,\mu}^{\nu,\eta}
 =\frac{|\lambda|!\,|\mu|!\,f^{\lambda/\nu}f^{\mu/\eta}}
 {(|\lambda|-r)!(|\mu|-r)!\,f^\lambda f^\mu}
 \sum_{\substack{a,b\in\mathbb Y\\|a|+|b|=r}}
 \mathrm{LR}_{a,b}^{\nu}\mathrm{LR}_{a',b'}^{\eta}.
 \label{eq:c-one-explicit-coefficients}
\end{equation}
This follows from the shifted Schur identities
\cite[Theorems~8.1 and~11.1]{OkounkovOlshanski1998} and the LR
expansion of the squared dual Cauchy kernel
\cite[Chapter~I, (4.3$'$), (5.2), and~(9.1)]{Macdonald1995}.
The prefactor in \eqref{eq:c-one-explicit-coefficients} equals
$P_\nu^*(\lambda;1)P_\eta^*(\mu;1)$.
These shifted Schur values and the LR coefficients are nonnegative integers, hence
$c_{\lambda,\mu}^{\nu,\eta}\in\mathbb Z_{\ge0}$.
\end{proof}

In terms of symmetric-group characters, \eqref{eq:K-definition} at
$\alpha=1$ gives
\begin{equation*}
 \sum_{\substack{a,b\in\mathbb Y\\|a|+|b|=r}}
 \mathrm{LR}_{a,b}^{\nu}\mathrm{LR}_{a',b'}^{\eta}
 =\sum_{\varrho\vdash r}\frac{2^{\ell(\varrho)}}{z_\varrho}
   \chi^\nu(\varrho)\chi^{\eta'}(\varrho).
\end{equation*}
For $\varrho=(1^{m_1}2^{m_2}\cdots)\vdash r$ and $L\ge\ell(\nu)$,
with $\nu$ padded with zeros, the Frobenius alternant formula
\cite[Chapter~I, Section~7]{Macdonald1995} yields
\begin{equation*}
 \chi^\nu(\varrho)
 =\sum_{\pi\in S_L}\operatorname{sgn}(\pi)
 \sum_{\substack{
 n_{ik}\in\mathbb Z_{\ge0}\\
 \sum_{i=1}^{L}n_{ik}=m_k\ (1\le k\le r)\\
 \sum_{k=1}^{r}k n_{ik}=\nu_i-i+\pi(i)\ (1\le i\le L)}}
 \prod_{k=1}^{r}\frac{m_k!}{\prod_{i=1}^{L}n_{ik}!}.
\end{equation*}
\section{Hilbert schemes and geometric realization}
\label{sec:hilbert}

We realize the normalized block as a coefficientwise equivariant pairing
on Hilbert schemes of points on $\mathbb C^2$.
Incidence correspondences and Hopf operations define the operators in
this pairing, while tautological Euler classes define its two states.
The fixed-point expansion of the pairing gives the combinatorial formula
of Theorem~\ref{thm:screening-Jack-evaluation}.

\subsection{Hilbert--Fock space and geometric operators}
\label{subsec:Hilbert-Fock-space}

Let $T=(\mathbb C^\times)^2$ act on points of $\mathbb C^2$ by
$(t_1,t_2)(z,\xi)=(t_1z,t_2\xi)$, with additive weights
$\varepsilon_1,\varepsilon_2$. Set
\begin{equation*}
 R_T=\mathbb Q[\varepsilon_1,\varepsilon_2],\quad
 F_T=\mathbb Q(\varepsilon_1,\varepsilon_2),\quad
 X_N=\operatorname{Hilb}^N(\mathbb C^2).
\end{equation*}
The localized equivariant Borel--Moore groups, their direct sum, and its
completion with respect to the grading by $N$ are
\begin{equation*}
 \begin{split}
 \mathbb H_{T,N}&=H_\bullet^{T,\mathrm{BM}}(X_N)\otimes_{R_T}F_T,\\
 \mathbb H_T&=\bigoplus_{N\geq0}\mathbb H_{T,N},\qquad
 \widehat{\mathbb H}_T=\prod_{N\geq0}\mathbb H_{T,N}.
 \end{split}
\end{equation*}
We use localized Poincar\'e duality to pass to cohomology when taking
restrictions or multiplying classes; see
\cite[Sections~1(ii) and~2(i)]{NakajimaMoreLectures}.
Fix the parameter dictionary
\begin{equation*}
 \alpha=-\frac{\varepsilon_2}{\varepsilon_1},\quad
 \theta=-\frac{\varepsilon_1}{\varepsilon_2},\quad
 \hbar=-\varepsilon_2,\quad
 (\varepsilon_1,\varepsilon_2)=(\theta\hbar,-\hbar).
\end{equation*}
Let $\mathbb K=F_T(\mathbf v)$, where $\mathbf v$ is the finite list
of independent scalar cap and core parameters in use, excluding the
expansion variable $z$.
Extend coefficients before completing:
\begin{equation*}
 \mathbb H_{\mathbb K,N}=\mathbb H_{T,N}\otimes_{F_T}\mathbb K,
 \quad \mathbb H_{\mathbb K}=\bigoplus_N\mathbb H_{\mathbb K,N},
 \quad \widehat{\mathbb H}_{\mathbb K}=\prod_N\mathbb H_{\mathbb K,N}.
\end{equation*}
With Nakajima's labelling \cite[equation~(2.10)]{NakajimaMoreLectures},
$I_\lambda=(\xi^{\lambda_1},z\xi^{\lambda_2},\ldots)$ and the quotient
monomial for $s=(i,j)$ is $z^{i-1}\xi^{j-1}$.
Coordinate functions have the contragredient weights
$-\varepsilon_1,-\varepsilon_2$.
Write $[I_\lambda]$ for the pushed-forward fixed-point fundamental class.
For $u,v\in\mathbb H_{T,N}$ define
\begin{equation*}
 \langle u,v\rangle_T=(-1)^N\sum_{\lambda\vdash N}
 \frac{u|_{I_\lambda}v|_{I_\lambda}}{e_T(T_{I_\lambda}X_N)}.
\end{equation*}
Different degrees are orthogonal, and the pairing extends to
$\mathbb H_T\times\widehat{\mathbb H}_T$.

By Nakajima's fixed-point description, the tangent weights at $I_\lambda$ are
\[
(\ell(s)+1)\varepsilon_1-a(s)\varepsilon_2,\qquad
(a(s)+1)\varepsilon_2-\ell(s)\varepsilon_1,
\qquad s\in\lambda.
\]
With respect to the one-parameter subgroup $t\mapsto(t^{-1},t)$,
the first family of tangent weights is negative and the second is positive.
Their negative and positive Euler factors are therefore
\[
e_-(\lambda)=\hbar^{|\lambda|}H'_\theta(\lambda),\qquad
e_+(\lambda)=(-\hbar)^{|\lambda|}H_\theta(\lambda).
\]
The Hilbert--Jack identification sends the monic Jack polynomial
$P_\lambda^{(\alpha)}$ to
\begin{equation}\label{eq:Hilbert-Jack-identification}
  P_\lambda:=\frac{[I_\lambda]}{e_-(\lambda)},
\end{equation}
and identifies the pairing with the Jack scalar product:
\begin{equation}
\label{eq:fixed-point-Jack-pairing} 
\langle P_\lambda,P_\mu\rangle_T
=
\frac{\delta_{\lambda\mu}}{b_\lambda(\alpha)}.
\end{equation}

Let $\mathcal Z_N\subset X_N\times\mathbb C^2$ be the universal
subscheme, with finite flat projection $\pi_X$ to $X_N$. The
rank-$N$ tautological bundle is
\begin{equation*}
 \mathcal V_N=(\pi_X)_*\mathcal O_{\mathcal Z_N},\qquad
 (\mathcal V_N)_{I_\lambda}=\mathbb C[z,\xi]/I_\lambda.
\end{equation*}
First introduce an independent flavor torus $T_u$ with weight $m_u$ and form
\begin{equation*}
 \widetilde{\mathcal E}_{m_u}^{(N)}
 =\hbar^{-N}e_{T\times T_u}(\mathcal V_N\otimes\mathbb C_{m_u}).
\end{equation*}
Then apply the coefficient-ring map $F_T[m_u]\to F_T[u]$,
$m_u\mapsto\hbar u$, and define
\begin{equation*}
 \mathcal E_u^{(N)}=
 \left.\widetilde{\mathcal E}_{m_u}^{(N)}\right|_{m_u=\hbar u},
 \qquad \mathsf M_u=\bigoplus_N(\mathcal E_u^{(N)}\cap-).
\end{equation*}
Hereafter expressions involving $\mathbb C_{\hbar u}$ are understood
via this base change.

\begin{proposition}[Tautological content multiplier]
\label{prop:content-multiplier}
The normalized Euler-class action is self-adjoint and satisfies
\begin{equation*}
 \mathsf M_uP_\lambda=(u)_\lambda^{(\theta)}P_\lambda.
\end{equation*}
\end{proposition}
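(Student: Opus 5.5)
The proof will be by equivariant localization. Cap product with a cohomology class $\mathcal E$ acts diagonally on fixed-point classes: $\mathcal E\cap[I_\lambda]=\mathcal E|_{I_\lambda}[I_\lambda]$. The class $P_\lambda$ of \eqref{eq:Hilbert-Jack-identification} is a scalar multiple of $[I_\lambda]$, so
\[
\mathsf M_uP_\lambda=\mathcal E_u^{(N)}\big|_{I_\lambda}\,P_\lambda ,\qquad N=|\lambda|.
\]
It therefore suffices to compute the restriction of $\mathcal E_u^{(N)}$ to $I_\lambda$.

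At the fixed point, the fiber $(\mathcal V_N)_{I_\lambda}=\mathbb C[z,\xi]/I_\lambda$ has the monomial basis $z^{i-1}\xi^{j-1}$, $s=(i,j)\in\lambda$. Coordinate functions carry the contragredient weights $-\varepsilon_1,-\varepsilon_2$, so this monomial has $T$-weight $-(i-1)\varepsilon_1-(j-1)\varepsilon_2$. Tensoring with $\mathbb C_{m_u}$ adds $m_u$ to each weight. The Euler class of a sum of characters is the product of the weights, which gives
\[
\widetilde{\mathcal E}^{(N)}_{m_u}\big|_{I_\lambda}
=\hbar^{-N}\prod_{(i,j)\in\lambda}\bigl(m_u-(i-1)\varepsilon_1-(j-1)\varepsilon_2\bigr).
\]
Now substitute the dictionary $(\varepsilon_1,\varepsilon_2)=(\theta\hbar,-\hbar)$ and $m_u=\hbar u$. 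Each factor becomes
\[
\hbar\bigl(u+(j-1)-\theta(i-1)\bigr)=\hbar\bigl(u+a'(s)-\theta\ell'(s)\bigr)=\hbar\bigl(u+\operatorname{ct}_\theta(s)\bigr).
\]
The $\hbar^N$ from the $N$ factors cancels the normalization $\hbar^{-N}$, leaving exactly $(u)^{(\theta)}_\lambda$ as defined in \eqref{eq:skew-Pochhammer} with $\nu=\varnothing$. The base change $m_u\mapsto\hbar u$ is a ring map applied after restriction, so it commutes with localization and causes no issue.

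Self-adjointness follows from the localization formula defining $\langle\cdot,\cdot\rangle_T$. Both $\langle\mathsf M_u a,b\rangle_T$ and $\langle a,\mathsf M_u b\rangle_T$ equal
\[
(-1)^N\sum_\lambda \frac{\mathcal E|_{I_\lambda}\,a|_{I_\lambda}\,b|_{I_\lambda}}{e_T(T_{I_\lambda}X_N)},
\]
because restriction to a point is a ring homomorphism and cap product corresponds to cup product under localized Poincaré duality. Alternatively, $\mathsf M_u$ is diagonal in the $P_\lambda$ basis, and that basis is orthogonal by \eqref{eq:fixed-point-Jack-pairing}.

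The only real obstacle is sign and convention bookkeeping: the weights of $\mathbb C[z,\xi]/I_\lambda$ must match Nakajima's labelling $I_\lambda=(\xi^{\lambda_1},z\xi^{\lambda_2},\dots)$. Under this labelling, row $i$ of $\lambda$ corresponds to the power of $z$ and column $j$ to the power of $\xi$. I will check this on $\lambda=(2)$ and $\lambda=(1,1)$. The expected results are $u(u+1)$ for $\lambda=(2)$ (quotient spanned by $1,\xi$) and $u(u-\theta)$ for $\lambda=(1,1)$ (quotient spanned by $1,z$), which matches the content convention $\operatorname{ct}_\theta(i,j)=(j-1)-\theta(i-1)$.
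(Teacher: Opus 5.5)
Your proposal is correct and follows essentially the same route as the paper: you restrict $\widetilde{\mathcal E}^{(N)}_{m_u}$ to $I_\lambda$ using the quotient-monomial weights $-(i-1)\varepsilon_1-(j-1)\varepsilon_2=\hbar\operatorname{ct}_\theta(s)$, apply the base change $m_u\mapsto\hbar u$ so that the factor $\hbar^{N}$ cancels $\hbar^{-N}$, and obtain self-adjointness from diagonality together with \eqref{eq:fixed-point-Jack-pairing}. Your extra justification via the localization formula and your sanity checks on $\lambda=(2)$ and $\lambda=(1,1)$ are consistent with the paper's conventions.
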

\begin{proof}
The quotient monomial of a box has weight
$-(i-1)\varepsilon_1-(j-1)\varepsilon_2
=\hbar\operatorname{ct}_\theta(s)$.
The restriction of $\widetilde{\mathcal E}_{m_u}^{(N)}$ to $I_\lambda$
is therefore
$\prod_{s\in\lambda}(m_u/\hbar+\operatorname{ct}_\theta(s))$.
Base change proves the eigenvalue formula; diagonality and
\eqref{eq:fixed-point-Jack-pairing} prove self-adjointness.
\end{proof}
The Nakajima correspondence
$Q^{[N+r,N]}\subset X_{N+r}\times\mathbb C^2\times X_N$ consists of
$(J,x,I)$ with $J\subset I$ and $I/J$ of length $r$ supported at $x$.
With projections $p_+,\varrho,p_-$, its creation operator
$P_{-r}(\gamma)$, for $\gamma\in H_T^*(\mathbb C^2)$, is defined by
pulling back the input and $\gamma$ along $p_-$ and $\varrho$,
capping with $[Q^{[N+r,N]}]$, and pushing forward along $p_+$.
By \cite[equation~(3.9), Proposition~3.10]{NakajimaMoreLectures},
\begin{equation}
 \mathfrak a_{-r}=P_{-r}(\varepsilon_2)\longleftrightarrow p_r,
 \qquad \mathfrak a_r:=\mathfrak a_{-r}^{\dagger}
 \longleftrightarrow\alpha r\partial_{p_r}\quad(r>0).
 \label{eq:Nakajima-mode-normalization}
\end{equation}
Thus $[\mathfrak a_r,\mathfrak a_{-s}]=\alpha r\delta_{rs}$.
For $r=1$, the nested Hilbert scheme is the blowup of
$X_N\times\mathbb C^2$ along $\mathcal Z_N$, with exceptional divisor
$E^{[N+1,N]}$ (empty for $N=0$).

\begin{proposition}
\label{prop:Nakajima-flow}
The operator
\begin{equation*}
 \mathsf D_a=\alpha^{-1}\sum_{r\geq1}
 \mathfrak a_{-(r+1)}\mathfrak a_r+a\mathfrak a_{-1}
\end{equation*}
is represented from degree $N$ to degree $N+1$ by the correspondence class
\begin{equation}
 [E^{[N+1,N]}]+a\,\varrho^*(\varepsilon_2)\cap[Q^{[N+1,N]}].
 \label{eq:finite-Nakajima-correspondence}
\end{equation}
The correspondence \eqref{eq:finite-Nakajima-correspondence}, together with
the degree operator $\mathsf E|_{\mathbb H_{\mathbb K,N}}=N\,\mathrm{id}$, realizes
\[
 \mathsf T_a(z)=e^{z\mathsf D_a}z^{\mathsf E},
\]
where the powers of $\mathsf D_a$ in the exponential are formed by convolution.
\end{proposition}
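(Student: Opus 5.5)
The plan is to verify the identification on the fixed-point basis $P_\lambda=[I_\lambda]/e_-(\lambda)$ and use Theorem~\ref{thm:exponential-Jack-coefficients} only at the final step. First I would check that the operator formula agrees with the definition \eqref{eq:Fock-triangular-operators}. By \eqref{eq:Nakajima-mode-normalization}, $\mathfrak a_{-(r+1)}\mathfrak a_r\leftrightarrow \alpha r\,p_{r+1}\partial_{p_r}$ and $\mathfrak a_{-1}\leftrightarrow p_1$. Hence $\alpha^{-1}\sum_r\mathfrak a_{-(r+1)}\mathfrak a_r+a\mathfrak a_{-1}$ corresponds to $\sum_r r p_{r+1}\partial_{p_r}+ap_1$. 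This step is purely algebraic.

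The geometric content splits into two parts. The second summand is straightforward: by the definition of $P_{-1}(\gamma)$ with $\gamma=\varepsilon_2$, the class $\varrho^*(\varepsilon_2)\cap[Q^{[N+1,N]}]$ is exactly the correspondence defining $\mathfrak a_{-1}$. The main obstacle is the first summand, showing that $[E^{[N+1,N]}]$ represents $\alpha^{-1}\sum_r\mathfrak a_{-(r+1)}\mathfrak a_r=[\mathsf H_\theta,p_1]$ (by \eqref{eq:Da-content-commutator}). I would argue by localization. Since $Q^{[N+1,N]}$ is the blowup of $X_N\times\mathbb C^2$ along $\mathcal Z_N$, the normal bundle of $E$ is $\mathcal O(E)|_E=\mathcal O(-1)$. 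Thus $[E]=c_1(\mathcal O(E))\cap[Q^{[N+1,N]}]$, and $\mathcal O(-E)$ is the kernel line bundle $\mathcal L=\ker(\mathcal V_{N+1}\to\mathcal V_N)$ up to a twist by the trivial weight. Concretely, at a torus-fixed point of $Q^{[N+1,N]}$, which is a pair $\lambda=\mu+\square$ with the added box $s$, the fiber of $\mathcal L$ is spanned by the monomial of $s$. By the proof of Proposition~\ref{prop:content-multiplier}, that monomial has weight $\hbar\operatorname{ct}_\theta(s)$, and the point $x$ is the origin, so $\varrho^*\varepsilon_2$ restricts to $\varepsilon_2=-\hbar$.

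It follows that the correspondence $c_1(\mathcal L)\cap[Q]$ has the same fixed-point matrix coefficients as $\mathfrak a_{-1}$, multiplied by the ratio $\operatorname{ct}_\theta(s)\hbar/\varepsilon_2$ (up to sign). From this I would conclude
\[
[E^{[N+1,N]}]\cap(-)=\bigl(\mathsf M\text{-type content weighting}\bigr)\circ\mathfrak a_{-1},
\]
that is, $[P_\lambda]\,[E]P_\mu=\operatorname{ct}_\theta(\lambda\setminus\mu)\,\pi_{\lambda\mu}(\theta)$. Comparing with \eqref{eq:Da-weighted-Pieri} identifies this with $[\mathsf H_\theta,p_1]$, since $\mathsf H_\theta$ has eigenvalue $\varepsilon_\theta$ and the difference of eigenvalues is the content of the added box. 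I expect the sign and normalization bookkeeping to be the delicate point: $\mathcal O(E)$ versus $\mathcal L^{\pm1}$, the weight of $x$, and the sign $(-1)^N$ in $\langle\cdot,\cdot\rangle_T$. I would fix these by checking $N=0$, where $E$ is empty and $\operatorname{ct}_\theta(1,1)=0$, and $N=1$, where the contents are $1$ and $-\theta$.

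Finally, convolution of correspondences corresponds to composition of the operators, so $\mathsf D_a^k$ is represented by the $k$-fold convolution. Each $[z^d]$ coefficient of $e^{z\mathsf D_a}z^{\mathsf E}$ is then a finite sum of such convolutions on each $\mathbb H_{\mathbb K,N}$. The degree operator $\mathsf E$ acts by $N$ on $\mathbb H_{\mathbb K,N}$, as in the Jack model. Under the identification \eqref{eq:Hilbert-Jack-identification}, this reproduces \eqref{eq:T-flow} and the matrix \eqref{eq:T-P-matrix} coefficientwise on $\widehat{\mathbb H}_{\mathbb K}$.
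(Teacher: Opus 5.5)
Your argument is correct, but it reaches the key identification by a different route from the paper.

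The paper works at the operator level. Using Lehn's blowup description, it writes the $[E^{[N+1,N]}]$-correspondence as $-[\partial,P_{-1}(1)]$ with $\partial=c_1(\mathcal V_N)\cap-$. It then quotes Nakajima's equivariant commutator formula to express this operator as $\alpha^{-1}\sum_{r}\mathfrak a_{-(r+1)}\mathfrak a_r$.

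You use the same geometric input: $\mathcal O(-E)\cong\mathcal L$, so that $c_1(\mathcal V_{N+1})-c_1(\mathcal V_N)$ caps to $-[E]$ on $Q^{[N+1,N]}$. In place of Nakajima's theorem, you localize at the isolated fixed points $(I_\lambda,0,I_\mu)$. This gives $[P_\lambda][E]P_\mu=\operatorname{ct}_\theta(\lambda\setminus\mu)\,\pi_{\lambda\mu}(\theta)$, which you match with $[\mathsf H_\theta,p_1]$ via \eqref{eq:content-Laplace-Beltrami-eigenvalue} and \eqref{eq:Da-content-commutator}. In effect you reprove the needed case of the commutator formula from the fact that $\partial$ acts on $[I_\lambda]$ by $\hbar\,\varepsilon_\theta(\lambda)$.

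What each route buys: yours is more self-contained once the Hilbert--Jack identification is granted. It also shows geometrically that $[E^{[N+1,N]}]$ is the content-weighted Pieri operator \eqref{eq:Da-weighted-Pieri} that drives Theorem~\ref{thm:exponential-Jack-coefficients}. The paper's route gives an operator identity without passing to a fixed-point basis, but it imports normalizations from two external results.

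One point to tighten. Do not leave the comparison of $\mathcal O(-E)$ with $\mathcal L$ \emph{up to a twist} and \emph{up to sign} to be settled by checks at $N=0,1$. Such checks cannot exclude an $N$-dependent character. The isomorphism is in fact canonical and $T$-equivariant:
\begin{itemize}
\item Evaluation at $x$ gives $\mathcal L=I/J\to\mathcal O/\mathfrak m_x$, an equivariant map $\mathcal L\to\mathcal O_Q$. This is well defined since $J\subseteq\mathfrak m_x$.
\item Composed with the surjection $\sigma^*\mathcal I_{\mathcal Z_N}\to\mathcal L$, it becomes the natural map, whose image is $\mathcal I_{\mathcal Z_N}\mathcal O_Q=\mathcal O(-E)$.
\end{itemize}
Hence $c_1^T(\mathcal O(E))$ restricts to $-\hbar\operatorname{ct}_\theta(s)$ at the fixed point and $\varrho^*\varepsilon_2$ restricts to $-\hbar$. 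The ratio is therefore exactly $\operatorname{ct}_\theta(s)$, with no sign ambiguity.

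Also, the weighting is by the content of the added box. It is a commutator with a diagonal operator, not a composite $\mathsf M\circ\mathfrak a_{-1}$; your displayed matrix formula already states this correctly.
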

\begin{proof}
Define $\partial$ degreewise by
\[
 \partial|_{\mathbb H_{\mathbb K,N}}=c_1(\mathcal V_N)\cap-.
\]
With the normalization \eqref{eq:Nakajima-mode-normalization},
Lehn's quotient-line formula
\cite[Section~1.2, equation~(5)]{Lehn1999} and
Nakajima's equivariant commutator formula
\cite[Theorem~4.5]{NakajimaMoreLectures} identify the correspondence
operator defined by $[E^{[N+1,N]}]$ with the restriction to
$\mathbb H_{\mathbb K,N}$ of
\[
 -[\partial,P_{-1}(1)]
 =\alpha^{-1}\sum_{r\ge1}\mathfrak a_{-(r+1)}\mathfrak a_r.
\]
Adding $aP_{-1}(\varepsilon_2)=a\mathfrak a_{-1}$ gives
\eqref{eq:finite-Nakajima-correspondence}.
\end{proof}

The closures of the disjoint-union graphs and their transposes define
the union product and splitting coproduct
\[
 m:\mathbb H_T\otimes_{F_T}\mathbb H_T\longrightarrow\mathbb H_T,
 \qquad
 \Delta:\mathbb H_T\longrightarrow
 \mathbb H_T\otimes_{F_T}\mathbb H_T.
\]
These equip $\mathbb H_T$ with a connected, commutative and
cocommutative Hopf algebra structure, with unit given by the vacuum
and counit by projection onto $\mathbb H_{T,0}$; see
\cite[\S9.2, Theorem~9.10(1)]{Nakajima1999}.
Under \eqref{eq:Hilbert-Jack-identification}, $m$ becomes the usual
product of symmetric functions
\cite[Theorem~9.10(2) and equation~(9.13)]{Nakajima1999}, and $\Delta$
becomes the usual coproduct by adjointness. Thus the power sums $p_r$
are primitive.

The geometric form of \eqref{eq:K-operator-definition} is
\begin{equation}
 N_{\mathrm{osc}}
 =\alpha^{-1}\sum_{r\ge1}
   \frac{\mathfrak a_{-r}\mathfrak a_r}{r},
 \qquad
 \mathsf K(w)=(\alpha w)^{N_{\mathrm{osc}}}.
 \label{eq:geometric-K}
\end{equation}
The Hopf-power formula for symmetric functions
\cite[Theorem~3.13 and \S4.3]{DiaconisPangRam2014}
gives, for $d=\alpha w\in\mathbb Z_{\ge0}$,
\begin{equation*}
 \mathsf K(w)=m^{(d)}\circ\Delta^{(d)},
\end{equation*}
where $m^{(d)}$ and $\Delta^{(d)}$ denote the $d$-fold product and
coproduct, with the zero-fold Hopf power understood as
$\mathrm{unit}\circ\mathrm{counit}$.
For general $d$, $\mathsf K(w)$ is the polynomial continuation of
these Hopf powers on each $\mathbb H_{T,N}$.

Let $\mathsf S$ be the Hopf antipode. Since the $p_r$ are primitive
and $\alpha\gamma_+^2=2$ by \eqref{eq:screening-Jack-parameter},
\[
 \mathsf K(-\gamma_+^2)=(-2)^{N_{\mathrm{osc}}}
 =\mathsf S\circ m\circ\Delta.
\]

\subsection{Laguerre caps and geometric realization}
\label{subsec:Laguerre-caps}

\begin{definition}[Laguerre cap]
Let $M,s$ be generic scalar parameters in $\mathbb K$ and let $q$ be
invertible. Define the completed cap state by
\begin{equation*}
 \ket{\mathscr L_{M;s;q}}=\mathsf M_s
 \exp\!\left(M\theta\sum_{r\ge1}\frac{p_r}{rq^r}\right)\vac.
\end{equation*}
Its functional is
$\mathscr L_{M;s;q}(v)=\langle v,\mathscr L_{M;s;q}\rangle_T$.
\end{definition}

\begin{proposition}[Laguerre cap]
\label{prop:Laguerre-cap}
The vector and functional coefficients of the Laguerre cap are
\begin{align}
 [P_\lambda]\ket{\mathscr L_{M;s;q}}
 &=q^{-|\lambda|}\frac{(M\theta)_\lambda^{(\theta)}
 (s)_\lambda^{(\theta)}}{H_\theta(\lambda)},
 \label{eq:Laguerre-vector-coefficient}\\
 \mathscr L_{M;s;q}(P_\lambda)
 &=q^{-|\lambda|}\frac{(M\theta)_\lambda^{(\theta)}
 (s)_\lambda^{(\theta)}}{H'_\theta(\lambda)}.
 \label{eq:Laguerre-functional-coefficient}
\end{align}
The degree-$N$ component of $\ket{\mathscr L_{M;s;q}}$ is the localized
Borel--Moore homology class
\begin{equation}
 (-\hbar q)^{-N}e_T(\mathcal V_N\otimes\mathbb C_{\hbar M\theta})
 e_T(\mathcal V_N\otimes\mathbb C_{\hbar s})\cap[X_N].
 \label{eq:Laguerre-geometric-class}
\end{equation}
\end{proposition}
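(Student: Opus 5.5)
The plan has three steps. First expand the exponential in the Jack basis with the Jack Cauchy identity. Then apply the content multiplier of Proposition~\ref{prop:content-multiplier}. Finally convert vector coefficients into functional coefficients with the pairing \eqref{eq:fixed-point-Jack-pairing} and read off the fixed-point restrictions.

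\textbf{Step 1: expand the exponential.} Write $\exp\bigl(M\theta\sum_r p_r/(rq^r)\bigr)$ as $\exp\bigl(\alpha^{-1}\sum_r p_r(x)p_r(y)/r\bigr)$, where the second alphabet is specialized by $p_r(y)\mapsto M/q^r$. The Cauchy identity for the scalar product \eqref{eq:jack-scalar-product} gives
\[
 \exp\Bigl(\alpha^{-1}\sum_{r}\frac{p_r(x)p_r(y)}{r}\Bigr)
 =\sum_\lambda b_\lambda(\alpha)P_\lambda(x)P_\lambda(y).
\]
We need $P_\lambda$ under $p_r\mapsto M$. By homogeneity and the principal specialization $P_\lambda(1^M)=(M\theta)_\lambda^{(\theta)}/H'_\theta(\lambda)$ (already used above), extended polynomially in $M$ since both sides are polynomial in $M$, this gives $q^{-|\lambda|}(M\theta)_\lambda^{(\theta)}/H'_\theta(\lambda)$. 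Combined with $b_\lambda=H'_\theta/H_\theta$, the coefficient of $P_\lambda$ in the exponential is $q^{-|\lambda|}(M\theta)_\lambda^{(\theta)}/H_\theta(\lambda)$.

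\textbf{Step 2: apply the content multiplier.} By Proposition~\ref{prop:content-multiplier}, $\mathsf M_s$ multiplies each coefficient by $(s)_\lambda^{(\theta)}$. This yields \eqref{eq:Laguerre-vector-coefficient}.

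\textbf{Step 3: functional coefficients.} By \eqref{eq:fixed-point-Jack-pairing}, $\langle P_\lambda,\mathscr L\rangle_T$ equals the vector coefficient divided by $b_\lambda(\alpha)$. This replaces $H_\theta$ by $H'_\theta$ and gives \eqref{eq:Laguerre-functional-coefficient}.

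\textbf{Step 4: geometric class.} Localize. By \eqref{eq:Hilbert-Jack-identification}, the class in \eqref{eq:Laguerre-vector-coefficient} equals
\[
 \sum_{\lambda\vdash N}q^{-N}\frac{(M\theta)_\lambda(s)_\lambda}{H_\theta(\lambda)e_-(\lambda)}[I_\lambda].
\]
Equivariant localization gives $[X_N]=\sum_\lambda [I_\lambda]/e_T(T_{I_\lambda}X_N)$, with $e_T(T_{I_\lambda})=e_+e_-=(-1)^N\hbar^{2N}H_\theta H'_\theta$. On the other hand, the restriction of $e_T(\mathcal V_N\otimes\mathbb C_{\hbar u})$ to $I_\lambda$ is $\hbar^N(u)_\lambda^{(\theta)}$, using the box weights $\hbar\,\mathrm{ct}_\theta(s)$ from the proof of Proposition~\ref{prop:content-multiplier}. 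Hence the class \eqref{eq:Laguerre-geometric-class} equals
\[
 \sum_\lambda(-\hbar q)^{-N}\hbar^{2N}(M\theta)_\lambda(s)_\lambda\,
 \frac{[I_\lambda]}{(-1)^N\hbar^{2N}H_\theta H'_\theta}
 =q^{-N}\hbar^{-N}\sum_\lambda\frac{(M\theta)_\lambda(s)_\lambda}{H_\theta H'_\theta}[I_\lambda].
\]
This agrees with the expression above, because $e_-(\lambda)=\hbar^N H'_\theta(\lambda)$.

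\textbf{Main obstacle.} The delicate point is sign and normalization bookkeeping. This includes the sign $(-1)^N$ in the pairing versus $e_+$, and which Euler factor the Hilbert--Jack identification divides by. I would check these against $N=1$ explicitly. The Cauchy-identity normalization with $\alpha^{-1}$ (rather than $\alpha$) also needs care, since the definition has $M\theta=M/\alpha$.
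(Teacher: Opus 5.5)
Your proposal is correct and follows the paper's proof. It uses the Jack Cauchy identity together with the content multiplier $\mathsf M_s$ for the vector coefficients, division by the norm $1/b_\lambda(\alpha)$ for the functional, and fixed-point localization for the geometric class. Your Step 4 is equivalent to the paper's one-line identity $[P_\lambda](f\cap[X_N])=f|_{I_\lambda}/e_+(\lambda)$, and your sign and $\hbar$-power bookkeeping there is right.
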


\begin{proof}
The Jack Cauchy identity
\cite[Chapter~VI, Section~10]{Macdonald1995} and
Proposition~\ref{prop:content-multiplier} give
\eqref{eq:Laguerre-vector-coefficient}.
Pairing with $P_\lambda$ and using
\eqref{eq:fixed-point-Jack-pairing} gives
\eqref{eq:Laguerre-functional-coefficient}.
Localization gives $[P_\lambda](f\cap[X_N])=f|_{I_\lambda}/e_+(\lambda)$.
Applying this to \eqref{eq:Laguerre-geometric-class} reproduces
\eqref{eq:Laguerre-vector-coefficient}.
\end{proof}

For $M\in\mathbb Z_{\ge0}$ and $s=a+1+(M-1)\theta$,
\eqref{eq:Laguerre-functional-coefficient} is the normalized
Laguerre--Jack average used in \eqref{eq:screening-x-Jack-average}
and \eqref{eq:screening-y-Jack-average}.

Use the prefactor, cap, and core parameters defined in
\eqref{eq:generic-prefactor-parameters}--\eqref{eq:generic-core-parameters}.
Set
\begin{equation}
 \mathscr L_x=\mathscr L_{x_-/\theta\,;\,x_+\,;\,-\gamma_+c_1},
 \qquad
 \mathscr L_y=\mathscr L_{y_-/\theta\,;\,y_+\,;\,\gamma_+c_1}.
 \label{eq:generic-Laguerre-caps}
\end{equation}
\begin{theorem}
\label{thm:rank-one-Hilbert-Fock}
With the parameters of Theorem~\ref{thm:screening-Jack-evaluation},
the rank-one irregular conformal block satisfies
\begin{equation}
 \label{eq:rank-one-Hilbert-Fock-matrix-element}
 \mathcal B(z)
 =z^\kappa e^{\beta/z}
 \left\langle
 \mathscr L_x\middle|
 \mathsf T_{a_{\mathrm L}}(z)\mathsf K(-\gamma_+^2)
 \mathsf T_{a_{\mathrm R}}(z)^\dagger
 \middle|\mathscr L_y
 \right\rangle_T.
\end{equation}
The matrix element in \eqref{eq:rank-one-Hilbert-Fock-matrix-element}
is a formal scalar series in $\mathbb K[[z]]$.
Its coefficient of $z^d$ is a finite sum over pairs $(\lambda,\mu)$
with $|\lambda|+|\mu|=d$.
\end{theorem}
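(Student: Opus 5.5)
The plan is to reduce \eqref{eq:rank-one-Hilbert-Fock-matrix-element} to Theorem~\ref{thm:screening-Jack-evaluation} by expanding both states in the Jack basis $P_\lambda=[I_\lambda]/e_-(\lambda)$ and inserting the operator matrix from Theorem~\ref{thm:universal-sandwich}. I will write $\ket{\mathscr L_y}=\sum_\mu[P_\mu]\ket{\mathscr L_y}\,P_\mu$. Linearity of the bilinear pairing then gives the matrix element as
\[
 \sum_{\lambda,\mu}\mathscr L_x(P_\lambda)\,
 [P_\lambda]\mathsf C\,P_\mu\,[P_\mu]\ket{\mathscr L_y},
\]
where $\mathsf C=\mathsf T_{a_{\mathrm L}}(z)\mathsf K(-\gamma_+^2)\mathsf T_{a_{\mathrm R}}(z)^\dagger$. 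This uses $\langle P_\lambda,\mathscr L_x\rangle_T=\mathscr L_x(P_\lambda)$ and the fact that the functional $\mathscr L_x(v)=\langle v,\mathscr L_x\rangle_T$ extracts Jack coefficients after pairing. More precisely, I will pair $\mathsf C P_\mu=\sum_\lambda([P_\lambda]\mathsf C P_\mu)P_\lambda$ against $\mathscr L_x$, using the symmetry of $\langle\cdot,\cdot\rangle_T$.

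The three factors are then read off from earlier results. By \eqref{eq:universal-sandwich-matrix}, which holds for all partitions and arbitrary $a$-parameters since its proof used only \eqref{eq:T-P-matrix}, \eqref{eq:K-P-matrix} and \eqref{eq:T-adjoint-P-matrix}, the middle factor is $z^{|\lambda|+|\mu|}\mathcal C_{\lambda,\mu}/b_\mu(\alpha)$. For this I substitute $u+nw=a_{\mathrm L}$, $v+mw=a_{\mathrm R}$, $w=-\gamma_+^2$; the sandwich depends only on these combinations. By \eqref{eq:Laguerre-functional-coefficient} with $M\theta=x_-$, $s=x_+$, $q=-\gamma_+c_1$, the left factor is $(-\gamma_+c_1)^{-|\lambda|}(x_-)_\lambda(x_+)_\lambda/H'_\theta(\lambda)$. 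By \eqref{eq:Laguerre-vector-coefficient}, the right factor is $(\gamma_+c_1)^{-|\mu|}(y_-)_\mu(y_+)_\mu/H_\theta(\mu)$. Since $b_\mu=H'_\theta(\mu)/H_\theta(\mu)$, the product $b_\mu^{-1}\cdot H_\theta(\mu)^{-1}$ becomes $H'_\theta(\mu)^{-1}$. This reproduces exactly the summand of \eqref{eq:screening-evaluated-Jack-expansion}, and multiplying by $z^\kappa e^{\beta/z}$ gives the claim.

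It remains to justify the formal manipulations, and I expect this to be the main obstacle. The states lie in the completion $\widehat{\mathbb H}_{\mathbb K}$, and the pairing was only defined on $\mathbb H_T\times\widehat{\mathbb H}_T$. I will argue coefficientwise in $z$. By the final paragraph of the proof of Theorem~\ref{thm:universal-sandwich}, $[z^d]\mathsf C$ involves only matrix entries with $|\lambda|+|\mu|=d$. Hence at order $z^d$ only the components of $\ket{\mathscr L_y}$ of degree at most $d$ enter, and only finitely many $\lambda$ are produced. The degree-$d$ coefficient is therefore a finite sum in $\mathbb K$, which also proves the last two assertions of the theorem.

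Finally I would note that the parameters $x_\pm/\theta$, $y_\pm$ are generic elements of $\mathbb K$, so the caps are well defined, and that the geometric operators agree with the symmetric-function operators by Propositions~\ref{prop:Nakajima-flow}, \ref{prop:content-multiplier} and \eqref{eq:geometric-K}. The adjoint $\dagger$ for $\langle\cdot,\cdot\rangle_T$ coincides with the Jack adjoint by \eqref{eq:fixed-point-Jack-pairing}.
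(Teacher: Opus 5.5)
Your proposal is correct and follows essentially the same route as the paper: identify the geometric operators and the pairing with their Jack--Fock counterparts, insert the sandwich matrix \eqref{eq:universal-sandwich-matrix}, evaluate the caps by Proposition~\ref{prop:Laguerre-cap}, and compare termwise with Theorem~\ref{thm:screening-Jack-evaluation}, with finiteness at each order $z^d$ coming from the degree bookkeeping of $\mathsf T_a(z)^\dagger$. The only cosmetic difference is in the right factor: the paper absorbs $b_\mu(\alpha)$ and writes it as $\mathscr L_y(P_\mu)$, whereas you use the vector coefficient $[P_\mu]\ket{\mathscr L_y}$ and cancel $b_\mu(\alpha)^{-1}H_\theta(\mu)^{-1}=H'_\theta(\mu)^{-1}$ by hand.
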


\begin{proof}
By \eqref{eq:Nakajima-mode-normalization}, Proposition~\ref{prop:Nakajima-flow},
and \eqref{eq:geometric-K}, the Hilbert--Jack identification
\eqref{eq:Hilbert-Jack-identification} sends the geometric operators
$\mathsf T_a(z)$ and $\mathsf K(w)$ to those of
Section~\ref{subsec:Fock-flow}. It preserves the pairing
\eqref{eq:fixed-point-Jack-pairing}, hence also the adjoints.
The matrix formula \eqref{eq:universal-sandwich-matrix} therefore gives
\[
 \left\langle\mathscr L_x\middle|
 \mathsf T_{a_{\mathrm L}}(z)\mathsf K(-\gamma_+^2)
 \mathsf T_{a_{\mathrm R}}(z)^\dagger
 \middle|\mathscr L_y\right\rangle_T
 =\sum_{\lambda,\mu}z^{|\lambda|+|\mu|}
 \mathcal C_{\lambda,\mu}\,
 \mathscr L_x(P_\lambda)\mathscr L_y(P_\mu).
\]
At each power $z^d$, only pairs with $|\lambda|+|\mu|=d$ contribute,
so the contraction is finite. Substituting the cap evaluation
\eqref{eq:Laguerre-functional-coefficient}, with the parameters specified in
\eqref{eq:generic-Laguerre-caps}, identifies the sum with
$z^{-\kappa}e^{-\beta/z}\mathcal B(z)$ by
\eqref{eq:screening-evaluated-Jack-expansion}.
\end{proof}

\paragraph*{Acknowledgement.}
This work was supported by JSPS KAKENHI Grant Number 22K03350.

\end{document}